\begin{document}
\title{The Meta-rotation Poset for Student-Project Allocation}
%
%\titlerunning{Abbreviated paper title}
% If the paper title is too long for the running head, you can set
% an abbreviated paper title here
%
\author{Peace Ayegba\inst{1}\orcidID{0000-0002-0830-7811} \and \\
Sofiat Olaosebikan\inst{1}\orcidID{0000-0001-6754-7308}}
%
% \authorrunning{F. Author et al.}
% First names are abbreviated in the running head.
% If there are more than two authors, 'et al.' is used.
%
\institute{School of Computing Science, University of Glasgow}
\maketitle              % typeset the header of the contribution
\begin{abstract}
We study the Student–Project Allocation problem with lecturer preferences over Students (\textsc{spa-s}), an extension of the well-known Stable Marriage and Hospital–Residents problem. In this model, students have preferences over projects, each project is offered by a single lecturer, and lecturers have preferences over students. The goal is to compute a \textit{stable matching}, which is an assignment of students to projects (and thus to lecturers) such that no student or lecturer has an incentive to deviate from their current assignment. While motivated by the university setting, this problem arises in many allocation settings where limited resources are offered by agents with their own preferences, such as in wireless networks.

We establish new structural results for the set of stable matchings in \textsc{spa-s} by developing the theory of \emph{meta-rotations}, a generalisation of the well-known notion of rotations from the Stable Marriage problem. Each meta-rotation corresponds to a minimal set of changes that transforms one stable matching into another within the lattice of stable matchings. The set of meta-rotations, ordered by their precedence relations, forms the \emph{meta-rotation poset}. We prove that there is a one-to-one correspondence between the set of stable matchings and the closed subsets of the meta-rotation poset. By developing this structure, we provide a foundation for the design of efficient algorithms for enumerating and counting stable matchings, and for computing other optimal stable matchings, such as egalitarian or minimum-cost matchings, which have not been previously studied in {\sc spa-s}

\keywords{Stable matchings, Student-Project allocation, Meta-rotation poset, Structural characterisation}
\end{abstract}

\section{Introduction}
Matching problems occur in settings where one set of agents must be assigned to another subject to capacity constraints and/or preferences. Since the introduction of the Stable Marriage problem ({\sc sm}) and the seminal Gale–Shapley algorithm \cite{gale1962college,mcvitie1971stable}, matching problems have been studied extensively from both theoretical and practical perspectives \cite{iwama2008survey,knuth1997stable,gusfield1987three,manlove2013algorithmics}. The Student–Project Allocation problem with lecturer preferences over Students (\textsc{spa-s}) extends classical stable matching models. In this problem, students express preferences over available projects, each offered by a lecturer, and lecturers express preferences over the students. Each project and lecturer has a capacity constraint, and a matching assigns students to projects so that neither project nor lecturer capacities are exceeded. A matching is said to be \emph{stable} if there is no student and lecturer who would both prefer to be matched together than with their current assignments.  

Abraham et al.~\cite{AIM2007} showed that every instance of {\sc spa-s} admits at least one stable matching and presented two polynomial-time algorithms to find such matchings. The \emph{student-oriented} algorithm produces the student-optimal stable matching, where every student obtains their best possible project among all stable matchings, while the \emph{lecturer-oriented} algorithm yields the lecturer-optimal stable matching, where each lecturer receives their best set of students. Moreover, a single instance may admit several stable matchings other than these two matchings. The authors ~also proved properties satisfied by all stable matchings in a given instance, known as the \emph{Unpopular Projects Theorem}, which we state in what follows:

\begin{theorem}[\cite{AIM2007}]
\label{thm:unpopular-students}
In any {\sc spa-s} instance:
\begin{enumerate}[label=(\roman*)]
    \item the same students are assigned in all stable matchings;
    \item each lecturer is assigned the same number of students; and
    \item if a project is offered by an undersubscribed lecturer, it receives the same number of students in all stable matchings.
\end{enumerate}
\end{theorem}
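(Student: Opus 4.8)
The plan is to fix two arbitrary stable matchings and reduce the three claims to a single counting argument, exploiting the student-optimal stable matching $M_0$ and the lecturer-optimal stable matching $M_z$, both of which exist by \cite{AIM2007}. The starting observation is the easy half of (i): since $M_0$ assigns every student their best project over all stable matchings, any student assigned in some stable matching $M$ is assigned in $M_0$ to a project at least as desirable, so the set $A(M)$ of students assigned in $M$ satisfies $A(M) \subseteq A(M_0)$. The whole theorem then follows once we show this inclusion is an equality for every stable $M$, together with a load-preservation statement for each lecturer.

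To prove $A(M_0) \subseteq A(M)$, I would argue by contradiction and counting. Suppose $s$ is assigned in $M_0$, say $M_0(s) = p_j$ offered by lecturer $l_k$, but $s$ is unassigned in $M$. Since $s$ prefers $p_j$ to being unassigned, stability of $M$ forbids $(s, p_j)$ from blocking, which forces $l_k$ (equivalently, $p_j$) to be full in $M$ with every relevant assignee ranked above $s$. Thus each ``lost'' student is absorbed by a lecturer that is saturated in $M$. The crux is to turn these local saturation facts into a global contradiction: I would set up a Hall-type, conservation-of-flow count on the symmetric difference of $M$ and $M_0$, showing that the students gained by lecturers in $M$ relative to $M_0$ exactly balance those they lose, so no lecturer can both be saturated and yet have strictly fewer students. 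This yields both $A(M) = A(M_0)$, hence (i), and, applied lecturer-by-lecturer, the invariance of each lecturer's load, hence (ii). Running the dual argument against $M_z$ confirms the counts from the lecturers' side and rules out the asymmetric possibility that a lecturer trades quantity for quality.

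For (iii), I would use the two-level capacity structure. If lecturer $l_k$ is undersubscribed in some stable matching, then by (ii) it is undersubscribed in all of them, so it always retains a free slot at the lecturer level. Given this slack, the only way a pair $(s, p_j)$ with $p_j$ a project of $l_k$ can fail to block is by $p_j$ itself being full of students whom $l_k$ prefers to $s$; a project-level exchange argument between any two stable matchings then shows that the set, and in particular the number, of students assigned to each such project cannot change, establishing the per-project invariance.

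I expect the main obstacle to be the nested, two-level capacity constraints that distinguish \textsc{spa-s} from the Hospitals/Residents problem: a project may be full while its lecturer is undersubscribed and conversely, so the clean single-capacity counting used for the classical Rural Hospitals Theorem does not transfer directly. Making the conservation-of-flow count rigorous across both the project and lecturer levels simultaneously---so that saturation at one level correctly constrains the other---is the delicate step, and it is precisely this interaction that the argument for (iii) must handle with care.
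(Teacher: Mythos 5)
First, a point of comparison: the paper does not prove this statement at all --- it is the Unpopular Projects Theorem, imported verbatim from Abraham, Irving and Manlove \cite{AIM2007} --- so there is no in-paper proof to measure your attempt against, and your proposal must stand on its own. Your overall strategy (compare an arbitrary stable matching $M$ with the student-optimal matching $M_0$, obtain $A(M)\subseteq A(M_0)$ from student-optimality, and close the reverse inclusion by counting) is the standard and correct one. However, the step you yourself flag as the crux, the ``Hall-type, conservation-of-flow count,'' is precisely what is missing, and the way you set it up does not close: knowing that the lecturer of a student lost in $M$ is saturated tells you nothing about the lecturers who end up with \emph{fewer} students, and ``gains exactly balance losses'' is the conclusion you are trying to establish, not a tool you may invoke. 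The argument that actually works is a per-lecturer inequality. If $|M(l_k)| < |M_0(l_k)|$ for some $l_k$, then $l_k$ is undersubscribed in $M$ (since $|M_0(l_k)|\le d_k$); because $|M(l_k)|=\sum_{p_j\in P_k}|M(p_j)|$, some $p_j\in P_k$ satisfies $|M(p_j)|<|M_0(p_j)|\le c_j$, so $p_j$ is undersubscribed in $M$ and there is a student $s\in M_0(p_j)\setminus M(p_j)$; by student-optimality $s$ is unassigned in $M$ or prefers $p_j$ to $M(s)$, and then $(s,p_j)$ blocks $M$ by condition (a) of the blocking-pair definition. Hence $|M(l_k)|\ge|M_0(l_k)|$ for every lecturer; summing gives $|M|\ge|M_0|$, while $A(M)\subseteq A(M_0)$ gives $|M|\le|M_0|$, and equality throughout yields (i) and (ii). Part (iii) is the same per-project count run inside a single lecturer who is undersubscribed (in all stable matchings, by (ii)), so that condition (a) is again available whenever one of her projects loses a student.

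Separately, your argument for (iii) overreaches: you claim that ``the set, and in particular the number, of students assigned to each such project cannot change.'' Only the number is invariant. The paper explicitly stresses, immediately after stating this theorem, that an undersubscribed lecturer in {\sc spa-s} may be assigned \emph{different} students in different stable matchings, in contrast with {\sc hr}; a project of such a lecturer can be full in two stable matchings with different sets of assignees, since blocking condition (c) only constrains displaced students relative to the worst current assignee. So the ``project-level exchange argument'' you describe would prove something false; keep only the cardinality statement, which is all the theorem asserts and all the counting delivers.
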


We remark that \textsc{spa-s} generalises the Hospital-Residents problem (\textsc{hr}) \cite{manlove2015hospitals}, where projects and lecturers are effectively indistinguishable. In \textsc{hr} setting, lecturers (and projects) correspond to hospitals, while students correspond to residents. Moreover, the set of stable matchings in this model satisfies well-defined structural properties, collectively referred to as the \textit{Rural Hospitals Theorem}. However, not all of its properties extend to {\sc spa-s}; for example, an undersubscribed lecturer in {\sc spa-s} may be assigned different students in different stable matchings, whereas an undersubscribed hospital in {\sc hr} is assigned the same set of residents across all stable matchings. 

A central line of research on stable matchings studies how the set of all stable matchings forms a distributive lattice, how the corresponding Hasse diagram can be generated, and how this structure can be traversed efficiently \cite{ayegba2025structural,blair1988lattice,GI1989,fleiner2003fixed,gangam2023stable,manlove2002structure}. Further, existing work has shown how these structures can be exploited to design efficient algorithms for various optimisation tasks \cite{boehmer2025adapting,GI1989,irving1987efficient,karzanov2025stable,eirinakis2012finding}. In the classical {\sc sm} problem, Gusfield and Irving~\cite{GI1989} introduced the \emph{rotation poset}, a compact representation of the structure of all stable matchings in a given instance. Although the number of stable matchings in an instance may be exponential in the size of the input, the rotation poset can be constructed in polynomial time. Moreover, this rotation poset allows us to derive one stable matching from another stable matching. Bansal~\cite{BAM2007} extended this idea to the many-to-many setting through the concept of \emph{meta-rotations}, and Cheng \cite{CMS2008} further adapted it to the  {\sc hr} problem, providing an algorithm to identify all meta-rotations in a given instance instance and using this to develop efficient algorithms for computing optimal stable matchings with respect to different objective functions.

We note that existing definitions and proofs of meta-rotations in the {\sc hr} setting do not directly carry over to the {\sc spa-s} setting due to the presence of projects. In the {\sc hr} setting~\cite{cheng2023stable}, the definition of a meta-rotation relies on the observation that when a hospital \(h\) becomes better or worse off, its least-preferred resident must change. However, this property does not hold in {\sc spa-s}: a lecturer may be better off in one matching compared to another while its least-preferred student remains the same (although some student assigned to the lecturer must change). This observation, among others, motivates the need for a refined definition of meta-rotations that is specific to the {\sc spa-s} setting. Subsequent research also extended the notion of rotations to setting with ties and incomplete preference lists \cite{mcdermid2011structural,cheng2023stable}. Scott~\cite{S2005} defined meta-rotations for super-stable matchings in {\sc smti}, proving that there exists a one-to-one correspondence between the set of super-stable matchings and the family of closed subsets of the meta-rotation poset. Hu and Garg~\cite{HG2021} later gave an alternative construction of this representation in $O(mn)$ time.

\medskip
\textbf{Our Contributions.} 
We develop the theory of \emph{meta-rotations} for {\sc spa-s}, extending the classical notion of rotations from {\sc sm} and establishing analogous structural results that have not previously been derived for this setting. We formally define meta-rotations and show that each represents a minimal set of changes transforming one stable matching into another. We further define the \emph{meta-rotation poset}, a partial order capturing the dependencies among meta-rotations and providing a compact representation of all stable matchings in an instance. We then prove a one-to-one correspondence between the set of stable matchings and the family of closed subsets of the meta-rotation poset.
This correspondence, implied by Birkhoff’s Representation Theorem \cite{birkhoff1937rings}, yields a constructive way to generate all stable matchings and to identify other optimal or desirable stable matchings beyond the student- and lecturer-optimal ones.

\section*{2\;\;Preliminaries}
In the Student--Project Allocation problem with lecturer preferences over Students (\textsc{spa-s}), we have a set of students $\mathcal{S} = \{s_1, \ldots, s_{n_1}\}$, a set of projects $\mathcal{P} = \{p_1, \ldots, p_{n_2}\}$, and a set of lecturers $\mathcal{L} = \{l_1, \ldots, l_{n_3}\}$. 
Each project is offered by exactly one lecturer, and each lecturer~$l_k$ offers a non-empty subset $P_k \subseteq \mathcal{P}$ of projects, with the sets $P_1, \ldots, P_{n_3}$ forming a partition of~$\mathcal{P}$. Each student~$s_i$ provides a strict preference ordering over a subset of projects that they find acceptable. 
Each lecturer~$l_k$ also has a strict preference ordering over the students who find at least one project in~$P_k$ acceptable.

A pair $(s_i, p_j)$, where ~$p_j$ is offered by~$l_k$, is called \emph{acceptable} if $p_j$ appears on $s_i$'s preference list and $s_i$ appears on $l_k$'s list. Each project~$p_j$ has a capacity $c_j$, and each lecturer~$l_k$ has a capacity $d_k$. 
An \emph{assignment} $M$ is a set of acceptable student--project pairs. 
We write $M(s_i)$ to denote the project assigned to~$s_i$, if any, and $M(p_j)$ and $M(l_k)$ for the sets of students assigned to ~$p_j$ and ~$l_k$, respectively. 
A \emph{matching} is an assignment~$M$ such that $|M(s_i)| \le 1$ for every $s_i \in \mathcal{S}$, $|M(p_j)| \le c_j$ for every $p_j \in \mathcal{P}$, and $|M(l_k)| \le d_k$ for every $l_k \in \mathcal{L}$.

\begin{definition}[Stability in \textsc{spa-s}]
\label{def:stability}
Let $I$ be an instance of \textsc{spa-s} and $M$ a matching in $I$. An acceptable pair $(s_i, p_j) \notin M$, where $p_j$ is offered by lecturer~$l_k$, is a \emph{blocking pair} in~$M$ 
if $s_i$ is unassigned in~$M$ or prefers $p_j$ to $M(s_i)$, and one of the following holds:
\begin{enumerate}[label=(\alph*)]
    \item both $p_j$ and $l_k$ are undersubscribed in $M$;
    \item $p_j$ is undersubscribed in $M$, $l_k$ is full in $M$, and either $s_i \in M(l_k)$ or $l_k$ prefers $s_i$ to the worst student in $M(l_k)$;
    \item $p_j$ is full and $l_k$ prefers $s_i$ to the worst student in $M(p_j)$.
\end{enumerate}
A matching is \emph{stable} if it admits no blocking pair.
\end{definition}

\begin{definition}[Student preferences over matchings]
Let \( \mathcal{M} \) denote the set of all stable matchings in a {\sc spa-s} instance \( I \). Given two matchings \( M, M' \in \mathcal{M} \), a student \( s_i \in \mathcal{S} \) \emph{prefers} \( M \) to \( M' \) if \( s_i \) is assigned in both matchings and prefers \( M(s_i) \) to \( M'(s_i) \). Similarly, \( s_i \) is \emph{indifferent} between \( M \) and \( M' \) if either \( s_i \) is unassigned in both \( M \) and \( M' \), or \( M(s_i) = M'(s_i) \).
\end{definition}

\begin{definition}[Lecturer preferences over matchings]
\label{sect:lect-pref}
\noindent Let $M$ and $M'$ be two stable matchings in \( \mathcal{M} \). We recall from Theorem \ref{thm:unpopular-students} that \( |M| = |M'| \) and \( |M(l_k)| = |M'(l_k)| \) for each lecturer \( l_k \). Suppose that \( l_k \) is assigned different sets of students in \( M \) and \( M' \). We define
\(
M(l_k) \setminus M'(l_k) = \{s_1, \ldots, s_r\}\) and 
\(M'(l_k) \setminus M(l_k) = \{s'_1, \ldots, s'_r\}\), where the students in each set are listed in the order they appear in \( l_k \)'s preference list \( \mathcal{L}_k \). Then \( l_k \) \emph{prefers} \( M \) to \( M' \) if \( l_k \) prefers \( s_i \) to \( s'_i \) for all \( i \in \{1, \ldots, r\} \).
\end{definition}

\begin{definition}[{Dominance relation}]
Let \( M, M' \in \mathcal{M} \). We say that \( M \) \emph{dominates} \( M' \), denoted \( M \preceq M' \), if and only if each student prefers \( M \) to \( M' \), or is indifferent between them.
\end{definition}

\noindent From this definition, we observe that if a lecturer \( l \) is assigned different sets of students in two stable matchings \( M \) and \( M' \), they do not necessarily prefer each student in \( M(l) \) to those in \( M'(l) \setminus M(l) \), nor each student in \( M'(l) \) to those in \( M(l) \setminus M'(l) \). However, it is always the case that \( l \) prefers at least one student in \( M(l) \setminus M'(l) \) to at least one student in \( M'(l) \setminus M(l) \), or vice versa. This contrasts with the {\sc hr} setting, where given any two stable matchings \( M \) and \( M' \), each hospital either prefers all of its assigned residents in \( M \) to those in \( M' \setminus M \), or all its assigned residents in \( M' \) to those in \( M \setminus M' \).

\medskip
\noindent \textbf{Example:} Consider the {\sc spa-s} instance $I$ in Figure \ref{figure:leadstorotation}. There are two stable matchings in $I$ namely \( M_1 = \{(s_1, p_1), (s_2, p_3), \) \( (s_3, p_2), (s_4, p_4)\} \), and \( M_2 = \{(s_1, p_2), (s_2, p_4), \) \( (s_3, p_1), (s_4, p_3)\} \). Each student prefers their assigned project in \( M_1 \) to that in \( M_2 \); hence \( M_1 \) dominates \( M_2 \).

\begin{figure}[h]
\centering
\small
\begin{tabular}{llll}
\hline
Students' preferences & Lecturers' preferences  &  offers \\ 
$s_1$: \;$p_1$ \; $p_2$ \; &  $l_1$: \; $s_1$ \; $s_3$ \; & $p_2$ \\ 
$s_2$: \;$p_3$ \; $p_4$ \; &  $l_2$: \; $s_2$ \; $s_4$ \; & $p_4$ \\
$s_3$: \;$p_2$ \;  $p_1$ \; & $l_3$: \; $s_3$ \;$s_4$ \;$s_1$ & $p_1$ \\ 
$s_4$: \;$p_4$ \;  $p_1$ \; $p_3$ \; & $l_4$: \; $s_4$ \;$s_2$\;$s_1$  & $p_3$ \\ 

&  \\
& Project capacities: $\forall c_j = 1$ && \\
&  Lecturer capacities: $\forall d_k = 1$& &\\ 
\hline
\end{tabular}
\caption{ An instance $I$ of {\sc spa-s}}
\label{figure:leadstorotation}
\end{figure}

\section{Structural results involving stable matchings}
\label{sect:struct-results-stablematchings}
In this section, we present new results on stable matchings in a {\sc spa-s} instance, providing insight into how the assignment of a student to different projects in two stable matchings affects the preferences of the involved lecturers. Throughout, let \( l_k \) denote the lecturer offering project \( p_j \). 

\begin{lemma}
\label{lem:swaps}
Let \( M \) and \( M' \) be stable matchings such that \( M \) dominates \( M' \). If a student \( s_i \) is assigned to different projects in \( M \) and \( M' \), with \( s_i \) assigned to \( p_j \) in \( M' \), then:
\begin{enumerate}[label=(\roman*)]
    \item if \( p_j \) is full in \( M \), the worst student in \( M(p_j) \) is not in \( M'(p_j) \);
    \item if \( p_j \) is undersubscribed in \( M \), the worst student in \( M(l_k) \) is not is not in \( M'(l_k) \).
\end{enumerate}
\end{lemma}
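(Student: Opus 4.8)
The plan is to prove both parts by contradiction: in each case I assume the designated worst student is still assigned (to the relevant project or lecturer) in $M'$, and then exhibit a blocking pair of $M'$, contradicting its stability. Throughout I will lean on Theorem~\ref{thm:unpopular-students}. Part~(i) of that theorem guarantees $s_i$ is assigned in both matchings, so that dominance together with strict preferences forces $s_i$ to \emph{strictly} prefer $M(s_i)$ to $p_j = M'(s_i)$; parts~(ii) and~(iii) will supply the cardinality equalities that let me locate an \emph{entering} student, i.e.\ a student assigned to the relevant project or lecturer in $M$ but not in $M'$. In each case, any student in the symmetric difference of the two assignments strictly prefers its $M$-project to its $M'$-project by dominance, which is what I convert into a blocking pair.

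For part~(i), write $s_w$ for the worst student of $M(p_j)$ and suppose $s_w \in M'(p_j)$. Since $p_j$ is full in $M$ we have $|M'(p_j)| \le c_j = |M(p_j)|$, and because $s_i \in M'(p_j) \setminus M(p_j)$, a counting argument yields some $s_b \in M(p_j) \setminus M'(p_j)$. As $s_b \neq s_w$ and both lie in $M(p_j)$ with $s_w$ worst, $l_k$ strictly prefers $s_b$ to $s_w$; and since $M(s_b) = p_j \neq M'(s_b)$, dominance gives that $s_b$ prefers $p_j$ to $M'(s_b)$. I then check that $(s_b, p_j)$ blocks $M'$: if $p_j$ is full in $M'$, then $s_w \in M'(p_j)$ makes the worst student of $M'(p_j)$ no better than $s_w$, so $l_k$ prefers $s_b$ to it and condition~(c) fires; if $p_j$ is undersubscribed in $M'$, then either both $p_j$ and $l_k$ are undersubscribed (condition~(a)), or $l_k$ is full and, since $s_w \in M'(p_j) \subseteq M'(l_k)$, $l_k$ prefers $s_b$ to the worst student of $M'(l_k)$ (condition~(b)). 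Either way we reach the desired contradiction.

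For part~(ii) the lecturer $l_k$, rather than the project $p_j$, is the binding constraint, so the first step is to show that $l_k$ is full in $M$. If instead $l_k$ were undersubscribed in $M$, Theorem~\ref{thm:unpopular-students}(iii) gives $|M'(p_j)| = |M(p_j)|$, and together with $s_i \in M'(p_j) \setminus M(p_j)$ this produces a student $s_b \in M(p_j) \setminus M'(p_j)$ who prefers $p_j$ to $M'(s_b)$; but then $p_j$ and $l_k$ are both undersubscribed in $M'$, so $(s_b, p_j)$ blocks $M'$ via condition~(a). With $l_k$ full in both $M$ and $M'$ (Theorem~\ref{thm:unpopular-students}(ii)), I take $s_w$ to be the worst student of $M(l_k)$, assume $s_w \in M'(l_k)$, locate an entering student $s_b \in M(l_k) \setminus M'(l_k)$ with $l_k$ preferring $s_b$ to $s_w$, and try to block $M'$ using $s_b$ together with one of $l_k$'s projects.

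The main obstacle is precisely this last step, and it is where part~(ii) genuinely diverges from part~(i). In part~(i) the staying worst student $s_w$ lies in the \emph{same} project $p_j$ used for the blocking pair, so its presence bounds the worst student of $M'(p_j)$; in part~(ii) the staying worst student $s_w$ lies in the lecturer $l_k$ but possibly at a different project than $M(s_b)$, so when $M(s_b)$ is full in $M'$ its worst student need not be dominated by $s_w$ and condition~(c) may fail. I expect to resolve this either by choosing the blocking pair at the project where $s_w$ itself sits in $M'$ (so that $s_w$ again bounds that project's worst student), or by a Hall-type counting argument over the projects of $l_k$ showing that the assumption $s_w \in M'(l_k)$ over-packs the high-ranked students of $l_k$ across its full projects in $M'$. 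I must also rule out $M(l_k) = M'(l_k)$, which would make the entering set empty; here I anticipate arguing that otherwise $s_i$ could improve within $l_k$ in $M'$, contradicting its stability. This full-project bookkeeping at the lecturer level is the crux of the proof.
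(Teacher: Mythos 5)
Your part~(i) is correct and follows the paper's argument almost verbatim: locate $s_b \in M(p_j)\setminus M'(p_j)$ by counting, note $l_k$ prefers $s_b$ to the staying worst student, and block at $p_j$ whether it is full or undersubscribed in $M'$. The preliminary observation that $l_k$ must be full in $M$ in case~(ii) is also correct (the paper proves it separately as Lemma~\ref{lem:p*-l*-undersubscribed}), but it is not actually needed for this lemma.

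The gap is in the heart of part~(ii). You correctly identify the obstacle — an arbitrary entering student $s_b \in M(l_k)\setminus M'(l_k)$ may sit at a project $M(s_b)$ that is \emph{full} in $M'$, where the blocking condition compares $s_b$ against the worst student of $M'(M(s_b))$ rather than of $M'(l_k)$, and $s_w$ gives no leverage there — but neither of your proposed resolutions closes it. Blocking at the project where $s_w$ sits in $M'$ fails because no entering student need find that project acceptable or prefer it to their $M'$-assignment; and the ``Hall-type over-packing'' argument is not specified enough to check. The resolution is a much simpler cardinality split on $p_j$ itself. If $|M(p_j)| \ge |M'(p_j)|$, then $p_j$ is undersubscribed in $M'$ (it is undersubscribed in $M$) and, since $s_i$ enters $M'(p_j)$, there is $s_r \in M(p_j)\setminus M'(p_j)$. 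If instead $|M'(p_j)| > |M(p_j)|$, then $|M(l_k)| = |M'(l_k)|$ forces some other project $p_t \in P_k$ with $|M(p_t)| > |M'(p_t)|$, so $p_t$ is undersubscribed in $M'$ and supplies $s_t \in M(p_t)\setminus M'(p_t)$. In both branches the blocking pair is formed at a project of $l_k$ that is \emph{undersubscribed in $M'$}, so the relevant stability condition only requires that the blocking student belong to $M'(l_k)$ or be preferred by $l_k$ to the worst student of $M'(l_k)$ — and that is exactly what the assumption $s_w \in M'(l_k)$ delivers, since the blocking student lies in $M(l_k)$ and hence is weakly preferred to $s_w$. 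This also disposes of your worry about $M(l_k) = M'(l_k)$: the argument never needs an entering student of the \emph{lecturer}, only of a project, and such a student exists in both branches regardless. You should supply this case split (or an equivalent mechanism) before the proof of part~(ii) can be considered complete.
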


\begin{proof}
Suppose that \( s_i \) is some student assigned to different projects in \( M \) and \( M' \), such that \( s_i \in M'(p_j) \setminus M(p_j) \). Let \(s_z\) be the worst student in \(M(p_j)\), and suppose for a contradiction that $s_z \in M(p_j) \cap M'(p_j)$. Consider case (i) where \(p_j\) is full in \(M\). Since \(s_i \in M'(p_j) \setminus M(p_j)\) and \( |M(p_j)| \geq |M'(p_j)|\), there exists some student \(s_t \in M(p_j) \setminus M'(p_j)\). Moreover, since \(s_z\) is the worst student in \(M(p_j)\), $l_k$ prefers \(s_t\) to \(s_z\). Since \(M\) dominates \(M'\), \(s_t\) prefers \(M\) to \(M'\). Regardless of whether \( p_j \) is full or undersubscribed in \( M' \), the pair \( (s_t, p_j) \) blocks \( M' \), a contradiction. Therefore, case (i) holds.

\vspace{0.2cm}
Now consider case (ii) where $p_j$ is undersubscribed in $M$. Let $s_z$ be the worst student in $M(l_k)$, and suppose for a contradiction that $s_z \in M(l_k) \cap M'(l_k)$. First, suppose that \( |M(p_j)| \geq |M'(p_j)| \). Since \( p_j \) is  undersubscribed in \( M \), it follows that \( p_j \) is undersubscribed in \( M' \). Given that \( s_i \in M'(p_j) \setminus M(p_j) \), there exists some student \( s_r \in M(p_j) \setminus M'(p_j) \). Furthermore, \( s_r \) prefers \( M \) to \( M' \), and either \( s_r = s_z \) or \( l_k \) prefers \( s_r \) to \( s_z \). If \( s_r = s_z \), then \( s_r \in M'(l_k) \) and, since \( p_j \) is undersubscribed in \( M' \), the pair \( (s_r, p_j) \) blocks \( M' \), leading to a contradiction. If instead \( s_r \neq s_z \), then \( l_k \) prefers \( s_r \) to \( s_z \), since \( s_z \) is the worst student in \( M(l_k) \). However, given that \( s_r \) prefers \( M \) to \( M' \), \( p_j \) is undersubscribed in \( M' \), and \( l_k \) prefers \( s_r \) to \( s_z \), the pair \( (s_r, p_j) \) blocks \( M' \), again leading to a contradiction.  

\vspace{0.2cm}
Suppose that \( |M'(p_j)| > |M(p_j)| \). Since \( |M(l_k)| = |M'(l_k)| \), there exists some project \( p_t \in P_k \) such that \( |M(p_t)| > |M'(p_t)| \), meaning \( p_t \) is undersubscribed in \( M' \). Consequently, there exists a student \( s_t \in M(p_t) \setminus M'(p_t) \) who prefers \( M \) to \( M' \). If \( s_t = s_z \), then \( s_t \in M'(l_k) \) and, since \( p_t \) is undersubscribed in \( M' \), then \( (s_t, p_t) \) blocks \( M' \), leading to a contradiction. Otherwise, since \( s_z \) is the worst student in \( M(l_k) \), it follows that \( l_k \) prefers \( s_t \) to \( s_z \). Given that \( s_t \) prefers \( M \) to \( M' \), \( p_t \) is undersubscribed in \( M' \), and \( l_k \) prefers \( s_t \) to \( s_z \), the pair \( (s_t, p_t) \) blocks \( M' \), a contradiction. Hence, our claim holds.
\end{proof}

%======================= s-prefers-l-prefers:difflecturerswithdominance =========================% 
\begin{lemma}
\label{lem:s-prefers-M-l-prefers-M':dominance}
Let \( M \) and \( M' \) be stable matchings in an instance \( I \) such that \( M \) dominates \( M' \). 
If a student \( s_i \) is assigned to different projects in \( M \) and \( M' \), with \( s_i \) assigned to \( p_j \) in \( M' \), then:
\begin{enumerate}[label=(\roman*)]
    \item if \( p_j \) is full in \( M \), \( l_k \) prefers \( s_i \) to the worst student in \( M(p_j) \); 
    \item if \( p_j \) is undersubscribed in \( M \), \( l_k \) prefers \( s_i \) to the worst student in \( M(l_k) \).
\end{enumerate}
\end{lemma}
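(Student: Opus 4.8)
The plan is to argue by contradiction, in each case identifying a suitable ``worst'' student $s_z$ that $M'$ fails to retain but that $l_k$ would prefer to $s_i$, and then exhibiting a pair that blocks $M'$. Throughout I will use that $s_i$ is assigned in both matchings (Theorem~\ref{thm:unpopular-students}), so $s_i \in M'(p_j) \subseteq M'(l_k)$ while $M(s_i) \neq p_j$, and that by dominance every student assigned to a project of $l_k$ in $M$ but not in $M'$ strictly prefers its $M$-project to its $M'$-assignment. The first step is to invoke Lemma~\ref{lem:swaps} to locate $s_z$: in case (i) the worst student $s_z$ of $M(p_j)$ satisfies $s_z \in M(p_j) \setminus M'(p_j)$, and in case (ii) the worst student $s_z$ of $M(l_k)$ satisfies $s_z \in M(l_k) \setminus M'(l_k)$. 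Since $s_i \in M'(p_j) \subseteq M'(l_k)$, this immediately yields $s_z \neq s_i$, so the strict preference in the conclusion is well posed. I then assume for contradiction that $l_k$ prefers $s_z$ to $s_i$.

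For case (i) I would use the pair $(s_z, p_j)$. As $s_z \in M(p_j) \setminus M'(p_j)$, dominance gives that $s_z$ prefers $p_j$ to $M'(s_z)$, so the student side of the blocking condition holds. I then split on the occupancy of $p_j$ and $l_k$ in $M'$ according to Definition~\ref{def:stability}: if $p_j$ is full in $M'$ then, since $s_i \in M'(p_j)$ and $l_k$ prefers $s_z$ to $s_i$, $l_k$ prefers $s_z$ to the worst student of $M'(p_j)$, so (c) applies; if $p_j$ is undersubscribed in $M'$ and $l_k$ is undersubscribed, then (a) applies; and if $p_j$ is undersubscribed while $l_k$ is full, then $s_i \in M'(l_k)$ together with $l_k$ preferring $s_z$ to $s_i$ gives that $l_k$ prefers $s_z$ to the worst student of $M'(l_k)$, so (b) applies. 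In every case $(s_z, p_j)$ blocks $M'$, contradicting its stability.

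Case (ii) is the more delicate one, and I expect it to be the main obstacle, because the worst student $s_z$ of $M(l_k)$ need not find $p_j$ acceptable or prefer it, so I cannot route $s_z$ itself to $p_j$. Instead I route a different student of $M(l_k)$ to a project of $l_k$ that is undersubscribed in $M'$, mirroring the second half of the proof of Lemma~\ref{lem:swaps}. I split on the sizes of $M(p_j)$ and $M'(p_j)$. If $|M(p_j)| \geq |M'(p_j)|$, then since $s_i \in M'(p_j) \setminus M(p_j)$ there is a student $s_r \in M(p_j) \setminus M'(p_j)$, and $p_j$ is undersubscribed in $M'$. Otherwise $|M'(p_j)| > |M(p_j)|$, and since $|M(l_k)| = |M'(l_k)|$ some project $p_t \in P_k$ has $|M(p_t)| > |M'(p_t)|$, hence is undersubscribed in $M'$, yielding a student $s_t \in M(p_t) \setminus M'(p_t)$. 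In either case the routed student lies in $M(l_k)$, so because $s_z$ is the worst student of $M(l_k)$, $l_k$ weakly prefers it to $s_z$ and hence, by the contradiction hypothesis, strictly prefers it to $s_i$; combined with $s_i \in M'(l_k)$ this places the routed student ahead of the worst student of $M'(l_k)$. The routed student also prefers its $M$-project to its $M'$-assignment by dominance, so the pair blocks $M'$ through (a) if $l_k$ is undersubscribed in $M'$ and through (b) if $l_k$ is full, again a contradiction.

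The crux throughout is the combination of two ingredients: Lemma~\ref{lem:swaps} to guarantee that the worst student is genuinely displaced, so that it differs from $s_i$; and the choice of a project of $l_k$ that is undersubscribed in $M'$, so that the weaker blocking conditions (a)/(b) suffice and the assumed preference $l_k \colon s_z \succ s_i$ can be propagated to the worst student of $M'(l_k)$. The only genuinely non-routine point is the counting step in case (ii) that produces an undersubscribed project $p_t \in P_k$; everything else reduces to a finite check over the occupancy patterns permitted by Definition~\ref{def:stability}.
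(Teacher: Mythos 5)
Your proposal is correct and follows essentially the same route as the paper: case (i) uses Lemma~\ref{lem:swaps} to displace the worst student of $M(p_j)$ and then blocks $M'$ with $(s_z,p_j)$, and case (ii) splits on $|M(p_j)|$ versus $|M'(p_j)|$, using the counting argument over $P_k$ to find a project of $l_k$ undersubscribed in $M'$ and a displaced student of $M(l_k)$ that blocks. Your additional use of Lemma~\ref{lem:swaps}(ii) to ensure $s_z\neq s_i$ in case (ii) is a small tidy-up the paper leaves implicit, but the argument is the same.
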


\begin{proof}
\noindent Let \( M \) and \( M' \) be two stable matchings in \( I \), where \( M \) dominates \( M' \). Suppose  \( s_i \) is assigned to \( p_j \) in \( M' \), where \( l_k \) offers \( p_j \) (and possibly \( l_k \) offers \( M(s_i) \)). Consider case (i), where \( p_j \) is full in \( M \). Let \( s_z \) be the worst student in \( M(p_j) \), and suppose for a contradiction that \( l_k \) prefers \( s_z \) to \( s_i \). By Lemma \ref{lem:swaps}, it follows that \( s_z \in M(p_j) \setminus M'(p_j) \). Since \( M \) dominates \( M' \), \( s_z \) prefers \( p_j \) to \( M'(s_z) \).  If \( p_j \) is full in \( M' \), then the pair \( (s_z, p_j) \) blocks \( M' \), since \( l_k \) prefers \( s_z \) to some student in \( M'(p_j) \), namely \( s_i \). Similarly, if \( p_j \) is undersubscribed in \( M' \), \( (s_z, p_j) \) also blocks \( M' \), since \( l_k \) prefers \( s_z \) to some student in \( M'(l_k) \), namely \( s_i \). This leads to a contradiction. Hence, $l_k$ prefers $s_i$ to $s_z$, and case (i) holds.

\vspace{0.2cm}
Consider case (ii), where \( p_j \) is undersubscribed in \( M \). Suppose for a contradiction that \( l_k \) prefers the worst student in \( M(l_k) \) to \( s_i \). First, suppose that \( |M(p_j)| \geq |M'(p_j)| \). Then, \( p_j \) is undersubscribed in \( M' \). Since \( M(p_j) \) contains at least as many students as \( M'(p_j) \), there exists some student \( s_r \in M(p_j) \setminus M'(p_j) \) (Readers may recall that \( s_i \in M'(p_j) \setminus M(p_j) \)). Additionally, \( s_r \) prefers \( M \) to \( M' \), since $M$ dominates $M'$. Given that \( s_r \in M(l_k) \) and \( s_r \) is either the worst student in \( M(l_k) \) or better, it follows that \( l_k \) prefers \( s_r \) to \( s_i \).
However, since \( p_j \) is undersubscribed in \( M' \) and \( l_k \) prefers \( s_r \) to some student in \( M'(l_k) \) (namely \( s_i \)), the pair \( (s_r, p_j) \) blocks \( M' \), leading to a contradiction. 

\vspace{0.2cm}
Suppose instead that \( |M(p_j)| < |M'(p_j)| \). Since $|M(l_k)| = |M'(l_k)|$, there exists some other project \( p_t \in P_k \) such that \( |M'(p_t)| < |M(p_t)| \). This means \( p_t \) is undersubscribed in \( M' \) and there exists some student \( s_t \in M(p_t) \setminus M'(p_t)\), that is, $s_t \in M(l_k)$. Moreover, \( s_t \) prefers \( M \) to \( M' \). Since \( p_t \) is undersubscribed in \( M' \) and \( l_k \) prefers \( s_t \) to some student in \( M'(l_k) \) (namely \( s_i \)), the pair \( (s_t, p_t) \) blocks \( M' \), contradicting the stability of \( M' \). Thus, we reach a contradiction in both scenarios, completing the proof for case (ii).
\end{proof}

\begin{lemma}
\label{lem:p*-l*-undersubscribed}
Let $M$ and $M'$ be two stable matchings where $M$ dominates $M'$. Suppose that a student \( s_i \) is assigned to different projects in \( M \) and \( M' \), with \( s_i \) assigned to \( p_j \) in \( M' \). If $p_j$ is undersubscribed in $M$ then $l_k$ is full in $M$.
\end{lemma}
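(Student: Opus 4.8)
The plan is to argue by contradiction: assume $p_j$ is undersubscribed in $M$ yet $l_k$ is \emph{also} undersubscribed in $M$, and then exhibit a pair that blocks $M'$, contradicting its stability. The entire argument will be driven by the Unpopular Projects Theorem (Theorem~\ref{thm:unpopular-students}) together with the dominance hypothesis, and I expect it to go through without needing the two preceding lemmas.

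First I would record the consequences of the hypotheses. Since $s_i$ is assigned in $M'$, part~(i) of Theorem~\ref{thm:unpopular-students} guarantees $s_i$ is assigned in $M$ as well, to some project $M(s_i) \neq p_j$. Now assume for contradiction that $l_k$ is undersubscribed in $M$. By part~(ii) of the theorem (each lecturer is assigned the same number of students in every stable matching), $l_k$ is also undersubscribed in $M'$. Because $l_k$ is then an undersubscribed lecturer, part~(iii) applies to $p_j$ and gives $|M(p_j)| = |M'(p_j)|$; since $p_j$ is undersubscribed in $M$, it is therefore undersubscribed in $M'$ too.

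The heart of the argument is a counting step. We have $s_i \in M'(p_j) \setminus M(p_j)$, so $M(p_j) \neq M'(p_j)$, yet $|M(p_j)| = |M'(p_j)|$; hence there must exist a student $s_r \in M(p_j) \setminus M'(p_j)$. Since $(s_r, p_j) \in M$ and $M$ consists of acceptable pairs, $s_r$ finds $p_j$ acceptable and appears on $l_k$'s list. As $M$ dominates $M'$ and $M(s_r) = p_j \neq M'(s_r)$, student $s_r$ strictly prefers $p_j$ to $M'(s_r)$.

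Finally I would assemble the blocking pair. With $s_r$ preferring $p_j$ to $M'(s_r)$ and both $p_j$ and $l_k$ undersubscribed in $M'$, condition~(a) of Definition~\ref{def:stability} is satisfied, so $(s_r, p_j)$ blocks $M'$, contradicting its stability. This forces $l_k$ to be full in $M$. I expect the only delicate point to be the simultaneous invocation of parts~(ii) and~(iii) of Theorem~\ref{thm:unpopular-students}: part~(iii) requires the offering lecturer to be undersubscribed, which is precisely what the contradiction hypothesis supplies, and it is this that pins down $|M(p_j)| = |M'(p_j)|$ and produces the surplus student $s_r$. Should one prefer to avoid the counting step, Lemmas~\ref{lem:swaps} and~\ref{lem:s-prefers-M-l-prefers-M':dominance} offer an alternative route to locate a blocking student, but the equality of project occupancies seems to be the cleanest and most self-contained path.
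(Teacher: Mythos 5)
Your proposal is correct and follows essentially the same route as the paper's proof: assume both $p_j$ and $l_k$ are undersubscribed in $M$, invoke the Unpopular Projects Theorem to get $|M(p_j)|=|M'(p_j)|$ and $|M(l_k)|=|M'(l_k)|$, extract a student in $M(p_j)\setminus M'(p_j)$ by counting, and use dominance to turn that student into a blocking pair for $M'$. The paper's argument is identical in structure, including the self-contained counting step without appeal to the earlier lemmas.
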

\begin{proof}
Let \( M \) and \( M' \) be two stable matchings where \( M \) dominates \( M' \). Suppose \( s_i \) is some student assigned to different projects in \( M \) and \( M' \), such that \( s_i \) is assigned to \( p_j \) in \( M' \), and \( l_k \) offers \( p_j \) (possibly \( l_k \) also offers \( M(s_i) \)). Now, suppose for a contradiction that both \( p_j \) and \( l_k \) are undersubscribed in \( M \). Since \( p_j \) is offered by an undersubscribed lecturer \( l_k \), it follows from Theorem \ref{thm:unpopular-students} that the same number of students are assigned to \( p_j \) in \( M \) and \( M' \). Therefore, since \( s_i \in M'(p_j) \setminus M(p_j) \), there exists some student \( s_z \) such that \( s_z \in M(p_j) \setminus M'(p_j) \). Moreover, both \( p_j \) and $l_k$ are undersubscribed in \( M' \), since $|M(p_j)| = |M'(p_j)|$ and $|M(l_k)| = |M'(l_k)|$. Since \( M \) dominates \( M' \), \( s_z \) prefers \( p_j \) to \( M'(s_z) \). However, since \( p_j \) and \( l_k \) are both undersubscribed in \( M' \), \( (s_z, p_j) \) blocks \( M' \), a contradiction. Hence, our claim holds.
\end{proof}

\noindent Finally, we recall existing results established in \cite{ayegba2025structural}, which provide additional insight into the behaviour of students assigned to different projects across stable matchings; these results are used in the subsequent proofs.
\begin{lemma}
\label{lem:chap4samelecturer1}
Let \( M \) and \( M' \) be two stable matchings in \( I \). If a student \( s_i \) is assigned in \( M \) and \( M' \) to different projects offered by the same lecturer \( l_k \), and \( s_i \) prefers \( M \) to \( M' \), then there exists some student \( s_r \in M'(l_k) \setminus M(l_k) \) such that \( l_k \) prefers \( s_r \) to \( s_i \). Thus, \( M(l_k) \neq M'(l_k) \).
\end{lemma}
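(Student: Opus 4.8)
The plan is to reduce the statement to a single blocking‑pair argument applied to $M'$, together with a pigeonhole count at the project $s_i$ occupies in $M$, and then to overcome the one genuine difficulty by reasoning about an extremal student rather than $s_i$ itself. Note that the dominance lemmas above do not apply here, since we only assume $s_i$ prefers $M$ to $M'$, not that $M$ dominates $M'$.

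First I would record the local consequence of the stability of $M'$. Write $p = M(s_i)$ and $p' = M'(s_i)$, both offered by $l_k$ with $p \neq p'$; since $s_i$ prefers $M$ to $M'$ it prefers $p$ to $p'$. As $s_i \in M'(l_k)$ (via $p'$), the pair $(s_i, p)$ is acceptable and $s_i$ prefers $p$ to $M'(s_i)$, so for $M'$ to be stable this pair must fail every clause of Definition \ref{def:stability}. If $p$ were undersubscribed in $M'$, then clause (a) (if $l_k$ is undersubscribed) or clause (b) (if $l_k$ is full, using $s_i \in M'(l_k)$) would apply, so $p$ must be full in $M'$; and then clause (c) must fail, which, as preferences are strict and $s_i \notin M'(p)$, forces $l_k$ to prefer every student in $M'(p)$ to $s_i$. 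This is the key local fact. The pigeonhole step then follows: since $s_i \in M(p) \setminus M'(p)$ and $|M(p)| \le c_p = |M'(p)|$, the set $M'(p) \setminus M(p)$ is nonempty, and each of its members is preferred by $l_k$ to $s_i$ and lies in $M'(l_k)$. If some such student were outside $M(l_k)$ we would be done.

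The obstacle is exactly that such a student could instead be assigned by $M$ to a \emph{different} project of $l_k$, i.e.\ still inside $M(l_k)$, so a naive recursion on it would be circular. To close this gap I would argue about an extremal witness. Call a student $s$ a \emph{witness} if $M$ and $M'$ assign $s$ to two different projects of $l_k$ and $s$ prefers $M$ to $M'$; then $s_i$ is a witness, so by strictness of $l_k$'s list there is an $l_k$-most‑preferred witness $s^*$, with $l_k$ preferring $s^*$ to $s_i$ or $s^*=s_i$. Applying the two steps above to $s^*$ yields a nonempty $M'(p^*)\setminus M(p^*)$, where $p^* = M(s^*)$, each member preferred by $l_k$ to $s^*$.

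Suppose some such member $s_r$ lay in $M(l_k)$: then $M'$ assigns $s_r$ to $p^*$ and $M$ assigns $s_r$ to some $q \neq p^*$ of $l_k$. Since $l_k$ prefers $s_r$ to $s^*$ and $s^*$ is the best witness, $s_r$ cannot be a witness, so (preferences being strict) $s_r$ prefers $M'$ to $M$, i.e.\ prefers $p^*$ to $q = M(s_r)$. I would then check that $(s_r, p^*)$ blocks $M$: $s_r$ prefers $p^*$ to $M(s_r)$ and $s_r \in M(l_k)$, so if $p^*$ is undersubscribed in $M$ clauses (a)/(b) apply, while if $p^*$ is full then, as $s^* \in M(p^*)$ and $l_k$ prefers $s_r$ to $s^*$, clause (c) applies, contradicting the stability of $M$. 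Hence some $s_r \in M'(p^*)\setminus M(p^*)$ lies in $M'(l_k)\setminus M(l_k)$ with $l_k$ preferring $s_r$ to $s^*$, and therefore to $s_i$; this also yields $M(l_k)\neq M'(l_k)$. The crux, and the step I expect to be the main obstacle, is precisely this extremal choice of $s^*$, which converts the otherwise circular recursion into a clean contradiction via the stability of $M$.
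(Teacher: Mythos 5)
Your argument is correct. Note, however, that the paper does not prove Lemma~\ref{lem:chap4samelecturer1} at all: it is recalled verbatim from the cited prior work \cite{ayegba2025structural}, so there is no in-paper proof to compare against; your proposal stands as a self-contained derivation. The two pillars of your argument both check out. First, the local stability analysis of $(s_i, M(s_i))$ against $M'$ is airtight: since $s_i \in M'(l_k)$, clauses (a) and (b) of Definition~\ref{def:stability} rule out $M(s_i)$ being undersubscribed in $M'$, and the failure of clause (c) together with strictness forces $l_k$ to prefer every member of $M'(M(s_i))$ to $s_i$; the counting step then gives a nonempty $M'(M(s_i)) \setminus M(M(s_i))$. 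Second, you correctly identify that the only gap is that such a student might still lie in $M(l_k)$ via another of $l_k$'s projects, and your extremal choice of the $l_k$-best witness $s^*$ closes it cleanly: any $s_r \in M'(p^*) \setminus M(p^*)$ that remained in $M(l_k)$ would be strictly better than $s^*$ on $l_k$'s list, hence not a witness, hence would prefer $p^*$ to $M(s_r)$, and then $(s_r, p^*)$ blocks $M$ in every capacity configuration (clause (a) or (b) if $p^*$ is undersubscribed in $M$ using $s_r \in M(l_k)$; clause (c) if $p^*$ is full using $s^* \in M(p^*)$ and $l_k$ preferring $s_r$ to $s^*$). Since $l_k$ prefers $s^*$ to $s_i$ or $s^* = s_i$, the witness produced for $s^*$ also serves for $s_i$. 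The one presentational nit is that the conclusion delivers the required $s_r$ for $s^*$ rather than directly for $s_i$, so the final transitivity step ($l_k$ prefers $s_r$ to $s^*$ and $s^*$ to $s_i$, or $s^*=s_i$) deserves to be stated explicitly, as you do; with that, the lemma and the consequence $M(l_k) \neq M'(l_k)$ both follow.
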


\begin{lemma}
\label{lem:chap4-s-prefers-l-prefers1}
Let \( M \) and \( M' \) be stable matchings in an instance \( I \). 
If a student \( s_i \) is assigned to different projects in \( M \) and \( M' \), prefers \( M \) to \( M' \), and is assigned to \( p_j \) in \( M' \), then:
\begin{enumerate}[label=(\alph*)]
    \item If there exists a student in \( M(p_j) \setminus M'(p_j) \), then \( l_k \) prefers \( s_i \) to each student in \( M(p_j) \setminus M'(p_j) \).
    \item If \( p_j \) is undersubscribed in \( M \), then \( l_k \) prefers \( s_i \) to each student in \( M(l_k) \setminus M'(l_k) \).
\end{enumerate}
\end{lemma}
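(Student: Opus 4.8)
The plan is to prove both statements by contradiction, in each case assuming that $l_k$ prefers the relevant \emph{displaced} student---some $s_x \in M(p_j)\setminus M'(p_j)$ in the first part, or some $s_y \in M(l_k)\setminus M'(l_k)$ in the second---to $s_i$, and then exhibiting a blocking pair of $M'$. The unifying observation is that $s_i \in M'(p_j) \subseteq M'(l_k)$, so $s_i$ already witnesses a student in the roster of $p_j$ (and hence of $l_k$) under $M'$. Consequently, whenever a displaced student $s$ satisfies ``$l_k$ prefers $s$ to $s_i$'', the \emph{lecturer/project side} of the stability conditions of Definition~\ref{def:stability} comes essentially for free: $l_k$ prefers $s$ to the worst student of $M'(p_j)$ (when $p_j$ is full) and to the worst student of $M'(l_k)$ (when $p_j$ is undersubscribed and $l_k$ is full), since each of these is weakly worse than $s_i$.

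First I would fix the candidate blocking pair and dispatch the project/lecturer case analysis. For the first part the candidate is $(s_x, p_j)$; for the second, since $s_y \in M(l_k)$ is assigned in $M$ to some $p_y \in P_k$, the candidate is $(s_y, p_y)$. By Theorem~\ref{thm:unpopular-students} the displaced student is assigned in $M'$, and the same result ($|M(l_k)| = |M'(l_k)|$, together with the undersubscription of $p_j$ in $M$ in the second part) lets me reduce the lecturer-side check to the comparison with $s_i$. I would then split into the cases where the blocking project and $l_k$ are full or undersubscribed in $M'$, verifying that condition~(a), (b), or~(c) of Definition~\ref{def:stability} holds; as noted, each reduces to the single fact that $l_k$ prefers the displaced student to $s_i$.

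The remaining, and genuinely hard, ingredient is the \emph{student side}: that the displaced student $s$ prefers the blocking project (namely $p_j$ or $p_y = M(s)$) to $M'(s)$. If $s$ prefers $M$ to $M'$ this is immediate, the blocking pair is complete, and the stability of $M'$ is contradicted. The obstacle is the opposite case, in which $s$ prefers $M'(s)$ to its assignment in $M$; here the candidate need not block $M'$ and the direct argument breaks down. To handle it I would instead appeal to the stability of $M$: the pair $(s, M'(s))$ then satisfies the student side of a blocking pair for $M$, so it must fail the lecturer/project side, which forces $M'(s)$ to be saturated in $M$ by students whom its lecturer prefers to $s$. Since $s$ nonetheless occupies $M'(s)$ in $M'$, one of those preferred students must itself be displaced when passing from $M$ to $M'$, and applying the same dichotomy to it generates a chain of displacements confined to $M \,\triangle\, M'$.

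I expect the crux of the proof to be showing that this chain can neither terminate nor close up without producing a blocking pair of $M$ or of $M'$---that is, establishing that the symmetric difference of two stable matchings decomposes into the rotation-like cycles whose theory this paper develops. This is precisely the structural step carried out in~\cite{ayegba2025structural}, and in the special configuration where $M(s_i)$ and $p_j$ are offered by the same lecturer it is Lemma~\ref{lem:chap4samelecturer1} that supplies the required preferred student $s_r \in M'(l_k)\setminus M(l_k)$, closing that branch of the case analysis.
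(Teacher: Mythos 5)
The paper does not actually prove this lemma: it is stated as a result recalled from \cite{ayegba2025structural}, so there is no in-paper proof to compare your argument against, and your attempt has to stand on its own.

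As it stands it does not. Two concrete problems. First, even in the ``easy'' direction your reduction of the lecturer-side check to a comparison with \( s_i \) is not valid for part~(b): the candidate blocking pair there is \( (s_y, p_y) \) with \( p_y = M(s_y) \), and if \( p_y \) is full in \( M' \) then condition~(c) of Definition~\ref{def:stability} requires \( l_k \) to prefer \( s_y \) to the worst student in \( M'(p_y) \). Since \( s_i \in M'(p_j) \) and in general \( p_y \neq p_j \), the fact that \( l_k \) prefers \( s_y \) to \( s_i \) tells you nothing about \( M'(p_y) \); this subcase needs its own argument (e.g.\ the device used in the proofs of Lemmas~\ref{lem:swaps} and~\ref{lem:s-prefers-M-l-prefers-M':dominance} of passing to a project \( p_t \in P_k \) with \( |M(p_t)| > |M'(p_t)| \), which is undersubscribed in \( M' \) --- but that changes which displaced student you are arguing about, so it does not immediately give the ``each student'' conclusion). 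Second, and more seriously, the case you correctly identify as the crux --- the displaced student prefers \( M' \) to \( M \), which genuinely can occur here because no dominance is assumed (indeed the paper exploits exactly this configuration in case~(i) of the proof of Lemma~\ref{lem:no-p-between-M(s_i)-and-sM(s_i)}) --- is not resolved. You set up the chain of displacements and then state that showing it cannot terminate or close up ``is precisely the structural step carried out in \cite{ayegba2025structural}.'' That is the reference the lemma itself is being quoted from, so deferring to it is circular as a proof attempt; the termination/contradiction argument (each successive displaced student is strictly more preferred by the relevant lecturer, together with finiteness and strictness of preferences) is the content of the lemma and must be written out. Lemma~\ref{lem:chap4samelecturer1} only covers the special same-lecturer branch, not the general chain.
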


\section{Meta-rotations}
\label{sect:eliminatin-rotation}
In this section we formally define meta-rotations in {\sc spa-s} and show that successively identifying and eliminating exposed meta-rotations yields another stable matching of the instance. We start by defining the \emph{next project} of a student (Definition~\ref{def:next-project}), i.e., a project to which the student may be assigned in another stable matching of~$I$, and then define when a meta-rotation is said to be exposed (Definition~\ref{def:exposed-mr}).

\begin{definition}[Next project]
\label{def:next-project}
Let $M_L$ be the lecturer-optimal stable matching of an instance~$I$, and let $M$ be any stable matching with $M \ne M_L$. 
For a student~$s_i$ with $M(s_i) \ne M_L(s_i)$, let $p_j = M(s_i)$ and $l_k$ the lecturer offering~$p_j$. 
Denote by $w_M(p_j)$ the worst student assigned to~$p_j$ in~$M$, and by $w_M(l_k)$ the worst student assigned to~$l_k$ in~$M$. The \emph{next project} for~$s_i$, denoted $s_M(s_i)$, is the first project~$p$ on $s_i$’s preference list that appears after~$p_j$ and satisfies one of the following, where~$l$ is the lecturer offering~$p$:
\begin{enumerate}[label=(\roman*)]
    \item $p$ is full in~$M$ and $l$ prefers~$s_i$ to~$w_M(p)$; or
    \item $p$ is undersubscribed in~$M$, $l$ is full in~$M$, and $l$ prefers~$s_i$ to~$w_M(l)$.
\end{enumerate}
\end{definition}
Let $\mathit{next}_M(s_i)$ denote the \emph{next student} for~$s_i$. 
If $p$ satisfies (i), then $\mathit{next}_M(s_i) = w_M(p)$; if $p$ satisfies (ii), then $\mathit{next}_M(s_i) = w_M(l)$. 
We note that such a project may not exist. For instance, if \( M \) is the lecturer-optimal stable matching, no student can be assigned to a less preferred project in any other stable matching.

\medskip
To illustrate this, consider instance \(I_1\) in Figure \ref{fig:instance1}, which admits seven stable matchings, one of which is \(M_2 = \{(s_1, p_1), (s_2, p_1), (s_3, p_3), (s_4, p_3), (s_5, p_4), (s_6, p_5),\\  (s_7, p_7), (s_8, p_8), (s_9, p_2)\}\). It can be observed that the first project on \(s_6\)’s preference list following \(p_5\) (her assignment in \(M_2\)) is \(p_2\), which is full in \(M_2\). However, \(l_1\) (the lecturer offering \(p_2\)) prefers the worst student in \(M_2(p_2)\), namely \(s_9\), to \(s_6\). Proceeding to the next project, \(p_7\), which is full in \(M_2\), it is clear that \(l_2\) prefers \(s_6\) to the worst student in \(M_2(p_7)\), namely \(s_7\). Therefore, \(next_M(s_6) = s_7\). Similarly, \(p_6\) is the first project on \(s_7\)’s preference list that is undersubscribed in \(M_2\), and \(l_1\) prefers \(s_7\) to the worst student in \(M_2(l_1)\), namely \(s_6\). Thus, \(next_M(s_7) = s_6\).

\begin{figure}[h]
\centering
\renewcommand{\arraystretch}{1} % Adjust row height for better spacing
\setlength{\tabcolsep}{7pt} % Reduce column spacing for compactness
\resizebox{\textwidth}{!}{ % Scale the table to fit the text width
\begin{tabular}{p{0.33\textwidth} p{0.41\textwidth} p{0.25\textwidth}}
\hline
\textbf{Students' preferences} & \textbf{Lecturers' preferences} & \textbf{Offers} \\ 
\hline
$s_1$: $p_1 \; p_2 \; p_4 \; p_3$ & $l_1$: $s_7 \; s_9 \; s_3 \; s_4 \; s_5 \; s_1 \; s_2 \; s_6 \; s_8 $ & $p_1$, $p_2$, $p_5$, $p_6$ \\ 

$s_2$: $p_1 \; p_4 \; p_3 \; p_2$ & $l_2$: $s_6 \; s_1 \; s_2 \; s_5 \; s_3 \; s_4 \; s_7 \; s_8 \;  s_9$ & $p_3$, $p_4$, $p_7$, $p_8$ \\ 

$s_3$: $p_3 \; p_1 \; p_2 \; p_4$ & & \\ 
$s_4$: $p_3 \; p_2 \; p_1 \; p_4$ & & \\ 
$s_5$: $p_4 \; p_3 \; p_1$ & & \\ 
$s_6$: $p_5 \; p_2 \; p_7$ & & \\ 
$s_7$: $p_7 \; p_3 \; p_6$ & & \\ 
$s_8$: $p_6 \; p_8$ & \multicolumn{2}{l}{\textbf{Project capacities:} $c_1 = c_3 = 2$; $\forall j \in \{2, 4, 5, 6, 7, 8\}, \, c_j = 1$} \\ 
$s_9$: $p_8 \; p_2 \; p_3$ & \multicolumn{2}{l}{\textbf{Lecturer capacities:} $d_1 = 4$, $d_2 = 5$}  \\ 
\hline
\end{tabular}
}
\caption{An instance \( I_1 \) of {\sc spa-s}} 
\label{fig:instance1}
\end{figure}

\begin{table}[h]
    \centering
    \setlength{\tabcolsep}{4pt} % Reduce column padding
    \renewcommand{\arraystretch}{1.1} % Reduce row padding
    \begin{tabular}{c|*{9}{c}} % Compact column definition
        Matching & $s_1$ & $s_2$ & $s_3$ & $s_4$ & $s_5$ & $s_6$ & $s_7$ & $s_8$ & $s_9$ \\
        \hline
        $M_1$ & $p_1$ & $p_1$ & $p_3$ & $p_3$ & $p_4$ & $p_5$ & $p_7$ & $p_6$ & $p_8$ \\
        $M_2$ & $p_1$ & $p_1$ & $p_3$ & $p_3$ & $p_4$ & $p_5$ & $p_7$ & $p_8$ & $p_2$  \\
        $M_3$ & $p_1$ & $p_1$ & $p_3$ & $p_3$ & $p_4$ & $p_7$ & $p_6$ & $p_8$ & $p_2$ \\
        $M_4$ & $p_1$ & $p_4$ & $p_3$ & $p_1$ & $p_3$ & $p_5$ & $p_7$ & $p_8$ & $p_2$ \\
        $M_5$ & $p_1$ & $p_4$ & $p_3$ & $p_1$ & $p_3$ & $p_7$ & $p_6$ & $p_8$ & $p_2$\\
        $M_6$ & $p_4$ & $p_3$ & $p_1$ & $p_1$ & $p_3$ & $p_5$ & $p_7$ & $p_8$ & $p_2$  \\
        $M_7$ & $p_4$ & $p_3$ & $p_1$ & $p_1$ & $p_3$ & $p_7$ & $p_6$ & $p_8$ & $p_2$ \\
    \end{tabular}
    \caption{Instance $I_1$ admits seven stable matchings.}
    \label{tab:instance1}
\end{table}

\begin{definition}[{Exposed Meta-Rotation}]
\label{def:exposed-mr}
Let \( M \) be a stable matching, and let \(\rho = \{(s_{0}, p_{0}), (s_{1}, p_{1}), \ldots, (s_{r-1}, p_{r-1})\}\) be an ordered list of student–project pairs in \( M \), where \( r \ge 2 \). For each \( t \in \{0, \ldots, r-1\} \), let \( s_t \) be the worst student assigned to project \( p_t \) in \( M \), and let \( s_{t+1} = \mathit{next}_M(s_t) \) (indices taken modulo \( r \)).  
Then \( \rho \) is an \emph{exposed meta-rotation} in \( M \).
\end{definition}
Note that in any exposed meta-rotation \( \rho \) of a stable matching \( M \), each student and project that appears in \( \rho \) is part of an assigned pair in \( M \), and each appears exactly once in \( \rho \). This is because, in \( M \), each project has a unique worst student among those assigned to it, and the definition of \( \rho \) includes precisely one such student--project pair. Furthermore, the set of all meta-rotations in \( I \) consists precisely of those ordered sets of pairs that are exposed in at least one stable matching \( M \in \mathcal{M} \).

\begin{definition}[\textbf{Meta-rotation Elimination}]
\label{def:elimination-of-rho}
Given a stable matching \( M \) and an exposed meta-rotation \( \rho \) in \( M \), we denote by \( M/\rho \) the matching obtained by assigning each student \( s \in \rho \) to project \( s_M(s) \), while keeping the assignments of all other students unchanged. This transition from \( M \) to \( M/\rho \) is referred to as the \emph{elimination} of \( \rho \) from \( M \).
\end{definition}

\subsection{Justification for the meta-rotation definition}
In both {\sc sm} and {\sc hr}, an exposed rotation \( \rho \) in a stable matching \( M \) is defined as a sequence of pairs such that performing a cyclic shift yields a new stable matching \( M/\rho \). In {\sc sm}, each woman is assigned to the next man in the sequence, and in {\sc hr}, each hospital is assigned to the next resident. Specifically, in {\sc hr}, if some resident \(r\), who is assigned in a stable matching \(M\), has a \emph{next} hospital \(h\) on their preference list and is part of an exposed rotation \(\rho\), then \(r\) swaps places with the least preferred resident currently assigned to \(h\) in \(M\), forming the new matching \(M/\rho\). Moreover, by the Rural Hospitals Theorem for {\sc hr},  if some hospital \(h\) is undersubscribed in one stable matching, then it is assigned the same set of residents across all stable matchings.

\medskip 
However, as we noted earlier, these properties do not extend to {\sc spa-s} for projects or lecturers that are undersubscribed. In {\sc spa-s}, the number of students assigned to a project may vary across stable matchings. Consequently, a project that is part of an exposed meta-rotation \( \rho \) in a given stable matching \( M \) may not necessarily appear in the resulting stable matching \( M/\rho \). For example, in instance \( I_3 \) from Figure~\ref{fig:instance1}, the pairs \( \{(s_6, p_5), (s_7, p_7)\} \) form an exposed meta-rotation in \( M_2 \). Here, project \( p_5 \) is full in \( M_2 \) but becomes undersubscribed in \( M_3 \). Clearly, neither \( p_5 \) nor its lecturer \( l_1 \) (who offers \( p_5 \)) have the same set of assigned students in \( M_2 \) and \( M_3 \). Nevertheless, by the \emph{Unpopular Projects Theorem} (see Theorem \ref{thm:unpopular-students}), the total number of students assigned to each lecturer remains the same across all stable matchings.

\medskip
To address these differences, our definition of meta-rotations explicitly accounts for whether each project is full or undersubscribed in the stable matching of interest. Suppose a student \( s_i \), assigned to some project in a stable matching \( M \), has $p_j$ as their next possible project. Whether \( s_i \) can be assigned to \( p_j \) in another stable matching depends on the status of \( p_j \) in \( M \) as well as the preference of the lecturer \( l_k \) who offers it. If \( p_j \) is full in \( M \), then the assignment of \( s_i \) to \( p_j \) is possible only if \( l_k \) prefers \( s_i \) to the worst student in \( M(p_j) \); in this case, \( s_i \) takes the place of that student. If \( p_j \) is undersubscribed in \( M \), then the assignment is possible only if \( l_k \) prefers \( s_i \) to the worst student in $M(l_k)$; here, \( s_i \) is assigned to \( p_j \) and the least preferred student in \( M(l_k) \) is removed. These conditions ensure that each such assignment yields a new matching that is stable.

\subsection{Identifying meta-rotations}
Here, we establish results concerning meta-rotations. In Lemma~\ref{lem:no-p-between-M(s_i)-and-sM(s_i)}, we show that for any pair \((s_t,p_t)\) in a meta-rotation \(\rho = \{(s_0,p_0), \ldots, (s_{r-1},p_{r-1})\}\), no project that lies strictly between \(p_t\) and the next project \(s_M(s_t)\) in \(s_t\)'s preference list can form a stable pair \footnote{A stable pair is one that occurs in some stable matching of the instance}. 
In Lemma~\ref{lem:one-metarotation-in-M}, we show that every stable matching other than the lecturer-optimal stable matching \(M_L\) contains at least one exposed meta-rotation. In Lemma~\ref{lem:s_r-and-s_z}, we show that when constructing \(M/\rho\), if a student becomes assigned to a lecturer \(l_k\), then \(l_k\) simultaneously loses a student from \(M(l_k)\). Finally, in Lemma~\ref{lem:M-rho-is-stable}, we prove that if a meta-rotation \(\rho\) is exposed in a stable matching \(M\), then the matching \(M/\rho\), obtained by eliminating \(\rho\), is also stable, and that \(M\) dominates \(M/\rho\).

\begin{lemma}
\label{lem:no-p-between-M(s_i)-and-sM(s_i)}
Let \( \rho = \{(s_{0}, p_{0}), (s_{1}, p_{1}), \ldots, (s_{r-1}, p_{r-1})\} \) be an exposed meta-rotation in a stable matching \( M \) for instance \( I \). Suppose that for some student \( s_t \) (where \( 0 \leq t \leq r-1 \)), there exists a project \( p_z \) such that \( s_t \) prefers \( p_t \) to \( p_z \), and prefers \( p_z \) to \( s_M(s_t) \). Then \( (s_t, p_z) \) is not a stable pair.
\end{lemma}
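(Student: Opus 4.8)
The plan is to argue by contradiction, showing that any stable matching assigning $s_t$ to $p_z$ would force $p_z$ to satisfy one of the two conditions of Definition~\ref{def:next-project}, contradicting the minimality built into $s_M(s_t)$. The key starting observation is that, since $s_M(s_t)$ is by definition the \emph{first} project after $p_t = M(s_t)$ on $s_t$'s list satisfying condition (i) or (ii), and $p_z$ lies strictly between $p_t$ and $s_M(s_t)$, the project $p_z$ satisfies \emph{neither} (i) nor (ii) in $M$. Writing $l_k$ for the lecturer offering $p_z$, the goal is therefore to prove that if $(s_t, p_z)$ were a stable pair, then $p_z$ would in fact satisfy (i) or (ii) in $M$.

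First I would take any stable matching $M'$ with $(s_t, p_z) \in M'$ and reduce to the situation in which $M$ dominates $M'$. Since $s_t$ prefers $p_t$ to $p_z$, the matching $\hat M$ that assigns every student the less preferred of its two projects in $M$ and $M'$ still assigns $s_t$ to $p_z$; by the distributive-lattice structure of the set of stable matchings of $I$, $\hat M$ is stable and is dominated by $M$. Thus I may assume $M$ dominates $M'$, with $s_t \in M'(p_z) \setminus M(p_z)$ and $s_t$ preferring $M$ to $M'$ — exactly the hypotheses needed to invoke the structural lemmas of Section~\ref{sect:struct-results-stablematchings}.

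I would then split on the status of $p_z$ in $M$. If $p_z$ is full in $M$, Lemma~\ref{lem:s-prefers-M-l-prefers-M':dominance}(i) yields that $l_k$ prefers $s_t$ to the worst student $w_M(p_z)$, so $p_z$ satisfies condition~(i). If $p_z$ is undersubscribed in $M$, then Lemma~\ref{lem:p*-l*-undersubscribed} first forces $l_k$ to be full in $M$, and Lemma~\ref{lem:s-prefers-M-l-prefers-M':dominance}(ii) then yields that $l_k$ prefers $s_t$ to $w_M(l_k)$, so $p_z$ satisfies condition~(ii). In either case $p_z$ is a project after $p_t$ on $s_t$'s list satisfying (i) or (ii), contradicting the fact that $s_M(s_t)$ is the \emph{first} such project and that $p_z$ strictly precedes it. Hence $(s_t, p_z)$ is not a stable pair.

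\textbf{Main obstacle.} The delicate step is the reduction to dominance. The lemmas I rely on (Lemmas~\ref{lem:s-prefers-M-l-prefers-M':dominance} and~\ref{lem:p*-l*-undersubscribed}) assume $M$ dominates $M'$, and for an arbitrary $M'$ there is no control over the preferences of the students displaced from $p_z$, so a blocking pair cannot be exhibited directly; passing to $\hat M$ is what removes this difficulty, at the cost of invoking the lattice property that the student-wise meet of two stable matchings is stable. I expect the case where $p_z$ is full in $M$ can alternatively be handled without dominance, combining Lemma~\ref{lem:chap4-s-prefers-l-prefers1}(a) with a capacity argument showing $M(p_z)\setminus M'(p_z)\neq\emptyset$, and the undersubscribed-project/full-lecturer case via Lemma~\ref{lem:chap4-s-prefers-l-prefers1}(b); the genuinely problematic configuration is when both $p_z$ and $l_k$ are undersubscribed in $M$, where Lemma~\ref{lem:p*-l*-undersubscribed} — and hence the dominance reduction — appears to be essential.
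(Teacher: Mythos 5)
Your proposal is correct in outline but takes a genuinely different route from the paper. The paper never reduces to the dominance setting: it observes that since $p_z$ precedes $s_M(s_t)$ on $s_t$'s list but is not $s_M(s_t)$, exactly one of three configurations must hold in $M$ (both $p_z$ and $l_z$ undersubscribed; $p_z$ full with $l_z$ preferring $w_M(p_z)$ to $s_t$; $p_z$ undersubscribed, $l_z$ full, $l_z$ preferring $w_M(l_z)$ to $s_t$), and refutes each directly for an \emph{arbitrary} stable $M'$ containing $(s_t,p_z)$ using only Lemmas~\ref{lem:chap4samelecturer1} and~\ref{lem:chap4-s-prefers-l-prefers1}, which require only that the individual student prefers one matching to the other. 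In particular, the configuration you flag as ``genuinely problematic'' (both $p_z$ and $l_z$ undersubscribed) is handled without dominance: Theorem~\ref{thm:unpopular-students}(iii) forces $|M(p_z)|=|M'(p_z)|$, producing a displaced student $s_z\in M(p_z)\setminus M'(p_z)$ who must prefer $M'$ to $M$ (else $(s_z,p_z)$ blocks $M'$), and two opposing applications of Lemma~\ref{lem:chap4-s-prefers-l-prefers1}(a) then give contradictory preferences for $l_z$ over $s_t$ and $s_z$. Your argument, by contrast, buys a shorter endgame — once $M$ dominates $M'$, Lemmas~\ref{lem:s-prefers-M-l-prefers-M':dominance} and~\ref{lem:p*-l*-undersubscribed} show $p_z$ itself satisfies condition (i) or (ii) of Definition~\ref{def:next-project}, contradicting the minimality of $s_M(s_t)$ — which is arguably cleaner and makes the contradiction with the definition of ``next project'' explicit rather than routed through blocking pairs.

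The one caveat is your reduction step: you invoke the distributive-lattice structure of the stable matchings of a {\sc spa-s} instance to assert that the student-wise meet $\hat M$ of $M$ and $M'$ is stable. That fact is true and appears in the cited structural literature, but it is not stated, proved, or imported anywhere in this paper (only Lemmas~\ref{lem:chap4samelecturer1} and~\ref{lem:chap4-s-prefers-l-prefers1} are recalled from \cite{ayegba2025structural}), and it is considerably heavier machinery than the lemma being proved. As written, your proof is therefore not self-contained relative to the paper's toolkit; to close it you would need either to cite the meet operation explicitly as an external result, or to follow your own suggested fallback and handle all three configurations without dominance — which is essentially what the paper does.
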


\begin{proof}
Let \( M \) be a stable matching in which the meta-rotation \( \rho \) is exposed, and suppose that \( (s_i, p_j) \in \rho \). Suppose there exists a project \( p_z \) on \( s_i \)'s preference list such that \( s_i \) prefers \( p_j \) to \( p_z \), and prefers \( p_z \) to \( s_M(s_i) \). Let \( l_z \) be the lecturer who offers \( p_z \), and possibly also offers \( s_M(s_i) \). Suppose for contradiction that there exists another stable matching \( M' \) in which \( s_i \) is assigned to \( p_z \), that is, \( s_i \in M'(p_z) \setminus M(p_z) \). Then \( s_i \) prefers \( M \) to \( M' \). Since \( p_z \neq s_M(s_i) \), by definition of \( s_M(s_i) \), one of the following conditions holds in \( M \):
\begin{enumerate}[label = (\roman*)]
    \item both \( p_z \) and \( l_z \) are undersubscribed,
    \item \( p_z \) is full and \( l_z \) prefers the worst student in \( M(p_z) \) to \( s_i \), or
    \item \( p_z \) is undersubscribed, \( l_z \) is full, and \( l_z \) prefers the worst student in \( M(l_z) \) to \( s_i \).
\end{enumerate}

\noindent \textit{Case (i):}  Both \( p_z \) and \( l_z \) are undersubscribed in \( M \). Then \( l_z \) is undersubscribed in \( M' \) since \( |M(l_z)| = |M'(l_z)| \). Moreover, by Theorem \ref{thm:unpopular-students}, since \( p_z \) is offered by an undersubscribed lecturer $l_z$, then \( |M(p_z)| = |M'(p_z)| \), meaning $p_z$ is undersubscribed in $M'$. Since \( s_i \in M'(p_z) \setminus M(p_z) \), there exists a student \( s_z \in M(p_z) \setminus M'(p_z) \). If \( s_z \) prefers \( M \) to \( M' \), then \( (s_z, p_z) \) blocks \( M' \), as \( p_z \) and \( l_z \) are undersubscribed in \( M' \). Therefore, \( s_z \) prefers \( M' \) to \( M \). By the first part of Lemma~\ref{lem:chap4-s-prefers-l-prefers1}, since $s_z$ prefers $M'$ to $M$ and \( s_i \in M'(p_z) \setminus M(p_z) \), then \( l_z \) prefers \( s_z \) to \( s_i \). However, by the same lemma, since \( s_i \) prefers \( M \) to \( M' \) and \( s_z \in M(p_z) \setminus M'(p_z) \), then \( l_z \) prefers \( s_i \) to \( s_z \). This gives a direct contradiction, as \( l_z \) cannot simultaneously prefer \( s_i \) to \( s_z \) and \( s_z \) to \( s_i \). Hence, case (i) cannot occur.

\vspace{0.2cm}
\textit{Case (ii):} Suppose \( p_z \) is full in \( M \) and \( l_z \) prefers the worst student in \( M(p_z) \) to \( s_i \). Since \( s_i \in M'(p_z) \setminus M(p_z) \) and $p_z$ is full in $M$, there exists some student \( s_z \in M(p_z) \setminus M'(p_z) \). Thus, \( l_z \) prefers \( s_z \) to \( s_i \). However, by Lemma~\ref{lem:chap4-s-prefers-l-prefers1}, since \( s_i \) prefers \( M \) to \( M' \) and \( s_z \in M(p_z) \setminus M'(p_z) \), \( l_z \) prefers \( s_i \) to \( s_z \). This yields a direct contradiction on $l_k$'s preferences similar to case~(i). Hence, case~(ii) cannot occur.

\vspace{0.2cm}
\textit{Case (iii):} Suppose \( p_z \) is undersubscribed in \( M \), \( l_z \) is full in \( M \), and \( l_z \) prefers the worst student in \( M(l_z) \) to \( s_i \). This implies that \( l_z \) prefers each student in \( M(l_z) \) to \( s_i \). We claim that there exists some student \( s_z \in M(l_z) \setminus M'(l_z) \). If \( s_i \) is assigned to different projects offered by \( l_z \) in both \( M \) and \( M' \), then by Lemma~\ref{lem:chap4samelecturer1}, there exists some student \( s \in M'(l_z) \setminus M(l_z) \). Consequently, we have some \( s_z \in M(l_z) \setminus M'(l_z) \), since $|M(l_k)| = |M'(l_k)|$. The same conclusion holds if \( s_i \in M'(l_z) \setminus M(l_z) \). Thus, it follows that \( l_z \) prefers \( s_z \) to \( s_i \). However, by Lemma~\ref{lem:chap4-s-prefers-l-prefers1}, since \( s_i \) prefers \( M \) to \( M' \) and \( p_z \) is undersubscribed in \( M \), we have that \( l_z \) prefers \( s_i \) to \( s_z \). This yields a direct contradiction in \( l_z \)'s preference, as in case~(i).

\vspace{0.1cm}
Since all possible cases lead to a contradiction, the pair \( (s_i, p_z) \) does not belong to any stable matching of \( I \), completing the proof.
\end{proof}

\noindent The following corollary follows immediately from Lemma~\ref{lem:no-p-between-M(s_i)-and-sM(s_i)}:
\begin{corollary}
\label{cor:p*-l*-undersubscribed}
Let \( M \) be a stable matching in \( I \), and let \( s_i \) be a student for whom \( s_M(s_i) \) exists. Suppose that \( s_i \) prefers \( M(s_i) \) to some project \( p_z \) offered by lecturer \( l_z \), and prefers \( p_z \) to \( s_M(s_i) \). If both \( p_z \) and \( l_z \) are undersubscribed in \( M \), then the pair \( (s_i, p_z) \) does not appear in any stable matching of \( I \).
\end{corollary}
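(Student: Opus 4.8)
The plan is to observe that the Corollary is precisely the specialisation of Lemma~\ref{lem:no-p-between-M(s_i)-and-sM(s_i)} to Case~(i) of that lemma's proof, with the meta-rotation hypothesis stripped away. So the first thing I would check is which hypotheses are actually used inside the Case~(i) argument. Reading that case, the only facts it invokes are: (a) that \( s_M(s_i) \) exists; (b) that \( s_i \) prefers \( M(s_i) \) to \( p_z \) and prefers \( p_z \) to \( s_M(s_i) \), so that \( p_z \neq s_M(s_i) \) and, in any competing stable matching \( M' \) assigning \( s_i \) to \( p_z \), the student \( s_i \) prefers \( M \) to \( M' \); and (c) that both \( p_z \) and \( l_z \) are undersubscribed in \( M \). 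Crucially, nowhere does Case~(i) appeal to \( s_i \) lying on an exposed meta-rotation \( \rho \), nor to the cyclic structure of \( \rho \); the sole role of \( \rho \) in the lemma is to furnish (a) and (b), which the Corollary now assumes directly.

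Given this, the proof would run as follows. Assume for contradiction that \( (s_i, p_z) \) occurs in some stable matching \( M' \), so \( s_i \in M'(p_z) \setminus M(p_z) \); by hypothesis (b), \( s_i \) prefers \( M \) to \( M' \). Since \( p_z \) is offered by the undersubscribed lecturer \( l_z \), Theorem~\ref{thm:unpopular-students} together with \( |M(l_z)| = |M'(l_z)| \) forces both \( p_z \) and \( l_z \) to remain undersubscribed in \( M' \). From \( |M(p_z)| = |M'(p_z)| \) and \( s_i \in M'(p_z) \setminus M(p_z) \) I would extract a student \( s_z \in M(p_z) \setminus M'(p_z) \). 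A case split on whether \( s_z \) prefers \( M \) or \( M' \) then closes the argument exactly as in the lemma: if \( s_z \) prefers \( M \) to \( M' \), then \( (s_z, p_z) \) blocks \( M' \) because \( p_z \) and \( l_z \) are undersubscribed there; and if \( s_z \) prefers \( M' \) to \( M \), two applications of Lemma~\ref{lem:chap4-s-prefers-l-prefers1}(a)—one with the roles of \( M \) and \( M' \) swapped, using \( s_i \in M'(p_z) \), and one in the original direction, using \( s_z \in M(p_z) \)—yield the contradictory conclusions that \( l_z \) prefers \( s_z \) to \( s_i \) and that \( l_z \) prefers \( s_i \) to \( s_z \).

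Since both branches are contradictory, no such \( M' \) exists, and hence \( (s_i, p_z) \) appears in no stable matching of \( I \). I expect no genuine obstacle here: the whole content is the bookkeeping verification that Case~(i) of Lemma~\ref{lem:no-p-between-M(s_i)-and-sM(s_i)} is self-contained, i.e.\ that it never secretly uses the meta-rotation structure to guarantee the existence or the position of \( s_M(s_i) \). The one mild subtlety worth flagging is correctly orienting the two uses of Lemma~\ref{lem:chap4-s-prefers-l-prefers1}(a), since one of them swaps the roles of \( M \) and \( M' \); getting the preferred/less-preferred direction right for \( s_z \) in each application is the only place where a careless argument could slip.
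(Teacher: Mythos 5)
Your proposal is correct and matches the paper's intent exactly: the paper gives no separate proof, stating only that the corollary ``follows immediately'' from Lemma~\ref{lem:no-p-between-M(s_i)-and-sM(s_i)}, and your verification that Case~(i) of that lemma's proof uses only the existence of \( s_M(s_i) \) and the stated preference relations (never the cyclic meta-rotation structure) is precisely the justification for that claim. Your reproduction of the Case~(i) argument, including the two oppositely-oriented applications of Lemma~\ref{lem:chap4-s-prefers-l-prefers1}(a), is faithful to the paper's own reasoning.
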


\smallskip
\begin{lemma}
\label{lem:one-metarotation-in-M}
Let \( M \) be a stable matching in an instance of {\sc spa-s}, and suppose \( M \neq M_L \), where \( M_L \) is the lecturer-optimal stable matching. Then there exists at least one meta-rotation that is exposed in \( M \). 
\end{lemma}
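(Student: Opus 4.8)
The plan is to realise an exposed meta-rotation as a cycle in the functional graph induced by the map \( s \mapsto \mathit{next}_M(s) \). Since \( M_L \) is the student-pessimal stable matching and \( M \) dominates \( M_L \), the hypothesis \( M \neq M_L \) forces some student to be assigned to strictly better projects in \( M \) than in \( M_L \). Let \( T \) be the set of students \( s \) that are the \emph{worst} student of some project in \( M \) and also satisfy \( M(s) \neq M_L(s) \). First I would show \( T \neq \emptyset \). Taking any \( s^\ast \) with \( M(s^\ast) \neq M_L(s^\ast) \) and writing \( p' = M_L(s^\ast) \) (offered by \( l' \)), the student \( s^\ast \) lies in \( M_L(p') \setminus M(p') \), so Lemma~\ref{lem:swaps} applies with \( M' = M_L \). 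Whichever of its two cases holds, the worst student it exhibits (\( w_M(p') \) if \( p' \) is full in \( M \), else \( w_M(l') \)) is not assigned to the same project in \( M_L \), hence is a worst student of its \( M \)-project with \( M(\cdot)\neq M_L(\cdot) \), and therefore lies in \( T \).

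Next I would verify that \( s \mapsto \mathit{next}_M(s) \) is a well-defined, fixed-point-free self-map of \( T \). For existence of \( s_M(s) \) when \( s \in T \): applying Lemma~\ref{lem:s-prefers-M-l-prefers-M':dominance} (with \( M'=M_L \)) to the project \( M_L(s) \), together with Lemma~\ref{lem:p*-l*-undersubscribed}, shows that \( M_L(s) \) itself satisfies condition (i) or (ii) of Definition~\ref{def:next-project} and lies after \( M(s) \) on \( s \)'s list, so a next project exists. The resulting \( \mathit{next}_M(s) \) is the worst student of some project in \( M \) (directly in case (i); in case (ii) the worst student of the lecturer is also the worst student of the project it occupies), and it differs from \( s \) since the defining condition has \( l \) strictly preferring \( s \) to it. The crux is the claim \( \mathit{next}_M(s) \in T \), i.e.\ that it too is strictly better off in \( M \) than in \( M_L \). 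I would argue by contradiction: assuming \( \mathit{next}_M(s) \) is indifferent between \( M \) and \( M_L \), I would exhibit a blocking pair of \( M_L \) — chiefly \( (s, s_M(s)) \), using that \( s \) weakly prefers \( s_M(s) \) to \( M_L(s) \) and that \( l \) prefers \( s \) to the relevant worst student — contradicting the stability of \( M_L \).

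I expect the main obstacle to be case (ii) of Definition~\ref{def:next-project}, where the next project \( p = s_M(s) \) is undersubscribed in \( M \): here \( p \) may be \emph{full} in \( M_L \), so \( (s,p) \) need not satisfy the full-project blocking condition~(c) of Definition~\ref{def:stability} directly. I would resolve this using conservation of lecturer load from Theorem~\ref{thm:unpopular-students}: since \( |M(l_k)| = |M_L(l_k)| \) and \( p \) gains students in passing from \( M \) to \( M_L \), some other project \( p' \in P_k \) must be undersubscribed in \( M_L \), and a student displaced from \( p' \) then forms a blocking pair of \( M_L \); a short preference comparison shows that, under the indifference assumption, the worst student of \( M_L(l_k) \) coincides with \( \mathit{next}_M(s) \), which is what drives the contradiction. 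Finally, with \( s \mapsto \mathit{next}_M(s) \) a fixed-point-free self-map of the finite set \( T \), iterating it from any element must revisit a student; the first repeated student closes a cycle which, after relabelling, reads \( s_0 \to s_1 \to \cdots \to s_{r-1} \to s_0 \) with \( r \ge 2 \). Setting \( p_t = M(s_t) \), each \( s_t \) is the worst student of \( p_t \) in \( M \) and \( s_{t+1} = \mathit{next}_M(s_t) \), so by Definition~\ref{def:exposed-mr} the set \( \{(s_0,p_0),\ldots,(s_{r-1},p_{r-1})\} \) is an exposed meta-rotation in \( M \), completing the proof.
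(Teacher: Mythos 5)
Your proposal is correct and follows essentially the same route as the paper: use Lemmas~\ref{lem:swaps}, \ref{lem:s-prefers-M-l-prefers-M':dominance} and \ref{lem:p*-l*-undersubscribed} with \( M' = M_L \) to show that \( \mathit{next}_M \) is a well-defined, fixed-point-free map on the students assigned differently in \( M \) and \( M_L \), then close a cycle in the resulting finite functional graph and read it off as an exposed meta-rotation via Definition~\ref{def:exposed-mr}. The only divergence is minor: where the paper invokes Lemma~\ref{lem:swaps} (applied to \( M_L(s) \)) to conclude that \( \mathit{next}_M(s) \) is again assigned differently in \( M \) and \( M_L \), you re-derive this by a blocking-pair contradiction in \( M_L \) targeted at \( s_M(s) \) itself, which is a sound (and, in the case \( s_M(s) \neq M_L(s) \), slightly more careful) variant of the same argument.
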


\begin{proof}
Let \( M \) be a stable matching in an instance $I$ of {\sc spa-s}, and let $M_L$ be the lecturer-optimal stable matching. Clearly, $M$ dominates $M_L$. Since $M \neq M_L$, there exists some student $s_{i_0}$, who is assigned to different projects in $M$ and $M_L$. Suppose that $s_{i_0}$ is assigned to $p_{j_0}$ in $M$ and assigned to $p_{t_0}$ in $M_L$, where \( l_t \) offers \( p_{t_0} \) (possibly $l_t$ offers both $p_{j_0}$ and $p_{t_0}$). Clearly, \( s_{i_0} \) prefers $p_{j_0}$ to $p_{t_0}$. Furthermore, \( p_{t_0} \) is either (i) undersubscribed in \( M \) or (ii) full in \( M \). In both cases, we will prove that \( s_M(s_{i_0}) \) exists, which in turn proves the existence of \( next_M(s_{i_0}) \).

\medskip
First, suppose that \( p_{t_0} \) is undersubscribed in \( M \). By Lemma \ref{lem:s-prefers-M-l-prefers-M':dominance}, \( l_t \) prefers \( s_{i_0} \) to the worst student in \( M(l_t) \). Furthermore, by Lemma \ref{lem:p*-l*-undersubscribed}, if \( p_{t_0} \) is undersubscribed in \( M \), then \( l_t \) must be full in \( M \). Given that $s_{i_0}$ prefers $p_{j_0}$ to $p_{t_0}$, \( p_{t_0} \) is undersubscribed in \( M \), \( l_t \) is full in \( M \), and \( l_t \) prefers \( s_{i_0} \) to the worst student in \( M(l_t) \), it follows that \( s_M(s_{i_0}) \) exists.  Now, consider case (ii), where \( p_{t_0} \) is full in \( M \). Since \( s_{i_0} \) is assigned to \( p_{t_0} \) in \( M_L \) and \( p_{t_0} \) is full in \( M \), by Lemma \ref{lem:s-prefers-M-l-prefers-M':dominance}, we have that \( l_t \) prefers \( s_{i_0} \) to the worst student in \( M(p_{t_0}) \). Since these condition hold, \( s_M(s_{i_0}) \) exists, and consequently, \( \mathit{next}_M(s_{i_0}) \) exists.

\medskip
Let \( next_M(s_{i_0}) = s_{i_1} \). By definition, \( s_{i_1} \) is either the worst student assigned to \( p_{t_0} \) in \( M \) (if \( p_{t_0} \) is full in \( M \)), or the worst student assigned to \( l_t \) in \( M \) (if \( p_{t_0} \) is undersubscribed in \( M \)). In either case, \( l_t \) prefers \( s_{i_0} \) to \( s_{i_1} \). Furthermore, since \( s_{i_0} \) is assigned to \( p_{j_0} \) in \( M \) and to \( p_{t_0} \) in \( M_L \), it follows from Lemma~\ref{lem:swaps} that the worst student in \( M(p_{t_0}) \) is not in \( M_L(p_{t_0}) \) (if $p_{t_0}$ is full in $M$), and the worst student in \( M(l_t) \) is not in \( M_L(l_t) \) (if $p_{t_0}$ is undersubscribed in $M$). Therefore, \( s_{i_1} \) is assigned to different projects in \( M \) and \( M_L \). Let \( p_{j_1} = M(s_{i_1}) \), where \( l_t \) offers \( p_{j_1} \) (possibly \( p_{t_0} = p_{j_1} \)).  Let \( p_{t_1} = M_L(s_{i_1}) \), and let \( l_{t_1} \) be the lecturer who offers \( p_{t_1} \) (possibly \( l_t = l_{t_1} \)). Clearly, $s_{i_1}$ prefers $p_{j_1}$ to $p_{t_1}$. Again, it follows that \( p_{t_1} \) is either (i) undersubscribed in $M$ or (ii) full in \( M \).  Following a similar argument as before, we will prove that both \( s_M(s_{i_1}) \) and \( next_M(s_{i_1}) \) exist. 

\medskip
First, suppose that \( p_{t_1} \) is undersubscribed in \( M \). By Lemma \ref{lem:s-prefers-M-l-prefers-M':dominance}, \( l_{t_1} \) prefers \( s_{i_1} \) to the worst student in \( M(l_{t_1}) \). Furthermore, by Lemma \ref{lem:p*-l*-undersubscribed}, if $p_{t_1}$ is undersubscribed in $M$, then \( l_{t_1} \) must be full in \( M \). Given that $s_{i_1}$ prefers $p_{j_1}$ to $p_{t_1}$, $p_{t_1}$ is undersubscribed in $M$, $l_{t_1}$ is full in $M$, and $l_{t_1}$ prefers $s_{i_1}$ to the worst student in $M(l_{t_1})$, it follows that $s_M(s_{i_1})$ exists. Now, consider case (ii), where $p_{t_1}$ is full in $M$. Since $s_{i_1}$ is assigned to $p_{t_1}$ in $M_L$ and $p_{t_1}$ is full in $M$, by Lemma \ref{lem:s-prefers-M-l-prefers-M':dominance}, we have that \( l_{t_1} \) prefers \( s_{i_1} \) to the worst student in \( M(p_{t_1}) \). Since this condition holds, \( s_M(s_{i_1}) \) exists, and consequently, \( \mathit{next}_M(s_{i_1}) \) exists. 

\medskip
Let \( next_M(s_{i_1}) = s_{i_2} \). By definition, \( s_{i_2} \) is either the worst student assigned in \( M(p_{t_1}) \) if \( p_{t_1} \) is full in \( M \), or the worst student in \( M(l_{t_1}) \) if \( p_{t_1} \) is undersubscribed in \( M \).  In either case, \( l_{t_1} \) prefers \( s_{i_1} \) to \( s_{i_2} \). Furthermore, since \( s_{i_1} \) is assigned to \( p_{j_1} \) in \( M \) and to \( p_{t_1} \) in \( M_L \), it follows from Lemma~\ref{lem:swaps} that the worst student in \( M(p_{t_1}) \) is not in \( M_L(p_{t_1}) \) (if $p_{t_1}$ is full in $M$), and the worst student in \( M(l_{t_1}) \) is not in \( M_L(l_{t_1}) \) (if $p_{t_1}$ is undersubscribed in $M$). Therefore, \( s_{i_2} \) is assigned to different projects in \( M \) and \( M_L \). 
Let \( p_{j_2} = M(s_{i_2}) \), where \( l_{t_1} \) offers \( p_{j_2} \) (possibly \( p_{j_2} = p_{t_1} \)). Let \( p_{t_2} = M_L(s_{i_2}) \), and let \( l_{t_2} \) be the lecturer who offers \( p_{t_2} \). Clearly, $s_{i_2}$ prefers $p_{j_2}$ to $p_{t_2}$.  Again, it follows that \( p_{t_2} \) is either (i) undersubscribed in $M$ or (ii) full in \( M \). Following a similar argument as in the previous paragraphs, both $s_M(s_{i_2})$ and $next_M(s_{i_2})$ exist.

\medskip
By continuing this process, we observe that each identified student-project pair \((s_i, p_j)\) in \( M \) leads to another pair in \( M \), which in turn leads to another pair, and so forth, thereby forming a sequence of pairs \((s_{i_0}, p_{j_0}), (s_{i_1}, p_{j_1}), \dots \) within \( M \) such that $s_{i_1}$ is $next_M(s_{i_0})$, $s_{i_2}$ is $next_M(s_{i_1})$, and so on. Moreover, each student that we identify is assigned to different projects in $M$ and $M_L$, and prefers their assigned project in $M$ to $M_L$. Given that the number of students in $M$ is finite, this sequence cannot extend indefinitely and must eventually terminate with a pair in $M$ that we have previously identified. 

\medskip
Suppose that \((s_{i_{r-1}}, p_{j_{r-1}})\) is the final student-project pair identified in this sequence, let \( s_{i_r} \) be \(\text{next}_M(s_{i_{r-1}}) \), and let $M(s_{i_r})$ be $p_{j_r}$. It follows that \( s_{i_r} \) must have appeared earlier in the sequence. Otherwise, we would need to extend the sequence by including the pair, $(s_{i_r},p_{j_r})$, contradicting the assumption that \((s_{i_{r-1}}, p_{j_{r-1}})\) is the last pair identified in the sequence. Therefore, at some point, a student-project pair must reappear in the sequence, and when this occurs, the process terminates. As an example, suppose that the sequence starts with \( (s_{i_0}, p_{j_0}) \), and that the last pair \( (s_{i_r}, p_{j_r}) \) satisfies \( s_{i_r} = s_{i_1} \). Then, the subsequence \( \{(s_{i_1}, p_{j_1}), (s_{i_2}, p_{j_2}), \dots, (s_{i_{r-1}}, p_{j_{r-1}})\} \) forms an exposed meta-rotation in \( M \), as illustrated in Figure~\ref{fig:exposed-metarotation in M}.
\end{proof}

\begin{figure}[htb]
\centering
\resizebox{0.9\textwidth}{!}{%
  \begin{tikzpicture}[->, >=stealth, node distance=1.5cm, thick]
    % Define nodes for the sequence
    \node (A) {$(s_{i_0},p_{j_0})$};
    \node (B) [right=of A] {$(s_{i_1},p_{j_1})$};
    \node (C) [right=of B] {$(s_{i_2},p_{j_2})$};
    \node (D) [right=of C] {$\cdots$};
    \node (E) [right=of D] {$(s_{i_{r-1}},p_{j_{r-1}})$};
    
    % Draw directed arrows between nodes
    \draw[->] (A) -- (B);
    \draw[->] (B) -- (C);
    \draw[->] (C) -- (D);
    \draw[->] (D) -- (E);
    % Draw an arrow from the last node back to the first
    \draw[->] (E) edge[bend left=25] (B);
  \end{tikzpicture}
}
\caption{An exposed meta-rotation in $M$.}
\label{fig:exposed-metarotation in M}
\end{figure}

% \subsection{Identifying an exposed meta-rotation}
% \label{example:exposed-metarotation}
The proof of Lemma~\ref{lem:one-metarotation-in-M} gives a constructive method for identifying an exposed meta-rotation in any stable matching \( M \) of a {\sc spa-s} instance~\( I \). Define a directed graph \( H(M) \) whose vertices are the students assigned to different projects in \( M \) and \( M_L \). 
For each such student \( s_i \), add a directed edge from \( s_i \) to \( \mathit{next}_M(s_i) \); by construction, every vertex has exactly one outgoing edge. 
Since the number of vertices is finite, \( H(M) \) must contain at least one directed simple cycle, which corresponds to the students involved in an exposed meta-rotation in \( M \). 
To identify it, start from any vertex and follow its outgoing edges until a vertex repeats; the students encountered from the first to the second occurrence of that vertex form the exposed meta-rotation.

\begin{corollary}
\label{cor:each-stud-in-one-rotation}
Let \( M \) be a stable matching different from the lecturer-optimal matching \( M_L \), and let \( H(M) \) be the directed graph whose vertices are the students assigned to different projects in \( M \) and \( M_L \). Then:
\begin{enumerate}[label=(\roman*)]
    \item each vertex \( s_i \in H(M) \) has exactly one outgoing edge;
    \item starting from any vertex \( s_i \in H(M) \), there is a unique directed path in \( H(M) \)  that terminates at the last student of some exposed meta-rotation \( \rho \) in \( M \); and
    \item every student in \( H(M) \) either belongs to exactly one exposed meta-rotation in \( M \) or lies on the path leading to one.
\end{enumerate}
\end{corollary}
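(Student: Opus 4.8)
The plan is to recognise $H(M)$ as a \emph{functional graph}: a finite directed graph in which every vertex has out-degree exactly one. Once this is established, parts (ii) and (iii) become standard consequences of the combinatorics of such graphs, and the real content lies in verifying the hypotheses, all of which are supplied by Lemma~\ref{lem:one-metarotation-in-M} and its proof.

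For part (i), I would argue that the outgoing edge of each vertex is well-defined, lands inside the vertex set, and is unique. Each vertex $s_i$ of $H(M)$ is a student assigned to different projects in $M$ and $M_L$; since $M$ dominates $M_L$, $s_i$ prefers $M$ to $M_L$, so by the existence argument inside the proof of Lemma~\ref{lem:one-metarotation-in-M}, $s_M(s_i)$ and hence $\mathit{next}_M(s_i)$ exist. Uniqueness is immediate from Definition~\ref{def:next-project}: $s_M(s_i)$ is the \emph{first} project on $s_i$'s list satisfying the stated conditions, and $\mathit{next}_M(s_i)$ is then the unique worst student of that project or of its lecturer. Crucially, the proof of Lemma~\ref{lem:one-metarotation-in-M} also shows that $\mathit{next}_M(s_i)$ is itself assigned to different projects in $M$ and $M_L$, so the head of each edge is again a vertex of $H(M)$; this is what makes $H(M)$ a genuine functional graph with no dangling edges.

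For part (ii), I would follow the unique trajectory $s_i = v_0 \to v_1 \to \cdots$ obtained by repeatedly applying $\mathit{next}_M$. Because the vertex set is finite, this walk must eventually revisit a vertex; let $v_k$ be the last vertex before the first repetition, so that $\mathit{next}_M(v_k) = v_j$ for some $j \le k$ already on the walk. The vertices $v_j, v_{j+1}, \ldots, v_k$ then form a directed simple cycle, which by the construction at the end of the proof of Lemma~\ref{lem:one-metarotation-in-M} is precisely an exposed meta-rotation $\rho$ in $M$, with $v_k$ its last student. The path $v_0 \to \cdots \to v_k$ is the required directed path, and it is unique because out-degree one forces the trajectory from $s_i$ to be deterministic.

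For part (iii), I would use the fact that cycles in a functional graph are vertex-disjoint: if a vertex lay on two cycles, its unique successor would force the two cycles to coincide. Hence a vertex whose forward trajectory returns to it lies on exactly one cycle, i.e.\ belongs to exactly one exposed meta-rotation; and a vertex whose trajectory never returns to it is carried by part (ii) along a unique tail into a single cycle, so it lies on the path leading to exactly one meta-rotation. I expect the only delicate point to be the bookkeeping of part (ii) --- that the head of every edge stays within the vertex set and that the cycle so produced matches the meta-rotation of Definition~\ref{def:exposed-mr} rather than merely being an abstract cycle; both are handed to us by Lemma~\ref{lem:one-metarotation-in-M}, so the remainder is routine functional-graph reasoning.
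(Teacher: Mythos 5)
Your proposal is correct and follows essentially the same route as the paper: the paper's justification for Corollary~\ref{cor:each-stud-in-one-rotation} is precisely the functional-graph construction of $H(M)$ described in the paragraph preceding it, with existence of $\mathit{next}_M(s_i)$ and closure of the vertex set under $\mathit{next}_M$ supplied by the proof of Lemma~\ref{lem:one-metarotation-in-M}, and finiteness forcing each trajectory into a unique cycle that realises an exposed meta-rotation. Your explicit treatment of the disjointness of cycles for part (iii) is a welcome detail the paper leaves implicit, but it is the same argument.
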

 
\noindent \textbf{Example:} Consider instance \( I \) in Figure \ref{figure:leadstorotation}, where the student-optimal stable matching is
\( M = \{(s_1, p_1), (s_2, p_3), \) \( (s_3, p_2), (s_4, p_4)\} \), and the lecturer-optimal stable matching is  \( M_L = \{(s_1, p_2), (s_2, p_4), \)
\( (s_3, p_1), (s_4, p_3)\} \).
Each student is assigned to different projects in $M$ and $M_L$, and for each student, we have: $next_M(s_1) = s_3, ~next_M(s_2) = s_4, ~next_M(s_3) = s_1, ~next_M(s_4) = s_1$.  The directed graph \( H(M) \) corresponding to \( M \) is shown in Figure \ref{fig:H(M)}. Starting at $s_2$, the sequence of visited students is: $s_2 \rightarrow s_4 \rightarrow s_1 \rightarrow s_3 \rightarrow s_1$. Since $s_1$ appears twice, the first cycle in this sequence is determined by the students from the first occurrence of $s_1$
up to (but not including) its second occurrence. Thus, the students forming the meta-rotation are $s_1$ and $s_3$, and the corresponding meta-rotation exposed in $M$ is $\rho = \{(s_1, p_1), (s_3, p_2)\}$.

\begin{figure}[!h]
    \centering
    \begin{tikzpicture}[->, >=stealth, node distance=1.5cm, thick]
        % Nodes
        \node (s2) [circle, draw, minimum size=0.5cm] {$s_2$};
        \node (s4) [circle, draw, minimum size=0.5cm, right of=s2] {$s_4$};
        \node (s1) [circle, draw, minimum size=0.5cm, below of=s4] {$s_1$};
        \node (s3) [circle, draw, minimum size=0.5cm, left of=s1] {$s_3$};

        % Edges
        \draw[->] (s2) -- (s4);
        \draw[->] (s4) -- (s1);
        \draw[->] (s1) -- (s3);
        \draw[->] (s3) to[out=25, in=160, looseness=1] (s1); % Curved edge back to s1 to show the cycle
    \end{tikzpicture}
    \caption{Graph $H(M)$ for $M$}
    \label{fig:H(M)}
\end{figure}

\vspace{0.2cm}
We observe that a student \( s_i \) may be assigned different projects in \( M \) and \( M_L \) without being part of an exposed meta-rotation \( \rho \) in \( M \). In such a case, if there exists a directed path from \( s_i \) to some student involved in \( \rho \), we say that \( s_i \) \textit{leads to} \( \rho \). For instance, \( s_4 \in M_L(l_4) \setminus M(l_4) \) and \( s_4 \notin \rho \), so \( s_4 \) \textit{leads to} \( \rho \).

\begin{lemma}
\label{lem:s_r-and-s_z}
Let $M$ be a stable matching in $I$ different from the lecturer-optimal matching $M_L$ and let $\rho$ be an exposed meta-rotation in $M$. If some student $s_i \in \rho$ such that $s_M(s_i)$ is offered by lecturer $l_k$, then there exists some other student $s_z \in M(l_k)$ such that  $l_k$ prefers $s_i$ to $s_z$, $s_z \in \rho$, and $s_M(s_z)$ is offered by a lecturer different from $l_k$.
\end{lemma}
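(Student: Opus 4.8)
The plan is to take $s_z$ to be the first student encountered when we walk around the meta-rotation cycle starting from $s_i$ whose next project escapes $l_k$. The whole argument hinges on one reading of Definition~\ref{def:next-project}: whenever a student's next project is offered by $l_k$, the associated next student is a \emph{worst} student of $l_k$ in $M$ (either $w_M(p)$ for the relevant project $p$, or $w_M(l_k)$), so it lies in $M(l_k)$ and is ranked below the original student by $l_k$. This converts the desired statement into a purely order-theoretic fact about a walk along $\rho$.

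First I would record the following implication. Write $A = \rho \cap M(l_k)$ for the students of $\rho$ assigned to $l_k$ in $M$, and call a student $s \in \rho$ \emph{inward} if $s_M(s)$ is offered by $l_k$. Suppose $s$ is inward and put $p = s_M(s)$. Splitting on the two cases of Definition~\ref{def:next-project} according to whether $p$ is full or undersubscribed in $M$, one finds $\mathit{next}_M(s) = w_M(p)$ or $\mathit{next}_M(s) = w_M(l_k)$; in either case $\mathit{next}_M(s) \in M(l_k)$, and the preference clause of the definition gives that $l_k$ prefers $s$ to $\mathit{next}_M(s)$. Since $\rho$ is closed under $\mathit{next}_M$ (Definition~\ref{def:exposed-mr}), this also places $\mathit{next}_M(s)$ in $A$. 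By hypothesis $s_i$ is inward, so the implication applies to $s_i$.

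Next I would iterate, setting $s^{(0)} = s_i$ and $s^{(m+1)} = \mathit{next}_M(s^{(m)})$ while $s^{(m)}$ remains inward. By the implication above, each $s^{(m)}$ with $m \ge 1$ lies in $A$, and $l_k$ prefers $s^{(m)}$ to $s^{(m+1)}$ at every step taken; hence $l_k$'s ranking strictly decreases along the chain and the students $s^{(0)}, s^{(1)}, \ldots$ are pairwise distinct. Because the number of students is finite, the walk cannot continue forever, so there is a least index $m^\ast \ge 1$ with $s^{(m^\ast)}$ not inward. I would then set $s_z = s^{(m^\ast)}$: it lies in $A \subseteq M(l_k)$ and in $\rho$; it is the image of the inward student $s^{(m^\ast-1)}$ under $\mathit{next}_M$, so indeed $s_z \in A$, while $s_M(s_z)$ is offered by a lecturer other than $l_k$ (as $s_z$ is not inward); and chaining the strict preferences $s^{(0)}, \ldots, s^{(m^\ast)}$ by transitivity shows $l_k$ prefers $s_i$ to $s_z$, which in particular forces $s_z \ne s_i$.

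The hard part will be the termination step, namely ruling out that the walk cycles through all of $\rho$ while every student stays inward. The resolution is exactly the strict monotonicity of $l_k$'s preference along the walk: a full loop back to $s_i$ would yield $l_k$ preferring $s_i$ to itself, which is impossible, so the walk must exit the inward set. This is the point where the {\sc spa-s} proof departs from {\sc hr}: rather than a Rural-Hospitals-style counting argument, the presence of projects forces an order-theoretic descent along the meta-rotation.
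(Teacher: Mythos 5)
Your proposal is correct and uses essentially the same argument as the paper: both follow the $\mathit{next}_M$ chain from $s_i$, observe that as long as each successive student's next project is offered by $l_k$ the chain descends strictly in $l_k$'s preference order, and conclude from the impossibility of cycling that the chain must reach a student whose next project escapes $l_k$. The only difference is presentational — the paper argues by contradiction (assuming no such $s_z$ exists and deriving a repeated pair), while you argue directly by taking the first non-inward student on the walk.
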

\begin{proof}
Let \( M \) be a stable matching with an exposed meta-rotation \(\rho\). Suppose there exists some student \( s_{i_0} \in \rho \), such that \( s_M(s_{i_0}) \) is offered by lecturer \( l_k \). Without loss of generality, suppose that $(s_{i_0},p_{j_0})$ is the first pair in $\rho$. Now suppose for a contradiction that there exists no student $s_z \in M(l_k)$, such that $s_z \in \rho$ and $s_M(s_z)$ is offered by a lecturer different from $l_k$. Since \( s_{i_0} \in \rho \), there exists a student \( s_{i_1} \in \rho \) where \( s_{i_1} = \text{next}_M(s_{i_0}) \) and, by definition of $next_M(s_{i_0})$, \( l_k \) prefers \( s_{i_0} \) to \( s_{i_1} \). Hence, $s_M(s_{i_1})$ exists and by our assumption, $s_M(s_{i_1})$ is offered by $l_k$. Similarly, since \( s_{i_1} \in \rho \), there exists a student \( s_{i_2} \in \rho \) with \( s_{i_2} = \text{next}_M(s_{i_1}) \) and \( l_k \) prefers \( s_{i_1} \) to \( s_{i_2} \). Again, $s_M(s_{i_2})$ is also offered by $l_k$. Continuing in this manner, we obtain a sequence of student-project pairs \(
(s_{i_0},p_{j_0}), (s_{i_1},p_{j_1}), (s_{i_2},p_{j_2}), \dots, (s_{i_{r-1}},p_{j_{r-1}}), (s_{i_r},p_{j_r})\) in \(\rho\) such that for each \( t \) with \( 0 \le t < r \):

\begin{itemize}
    \item \( s_{i_{t+1}} = \text{next}_M(s_{i_t}) \),
    \item \( l_k \) prefers \( s_{i_t} \) to \( s_{i_{t+1}} \), and
    \item \( s_M(s_{i_{t+1}}) \) is offered by \( l_k \).
\end{itemize}

\noindent Since \(\rho\) is finite, this sequence cannot continue indefinitely and we would identify some student-project pair that appeared earlier in the sequence. Without loss of generality, let (\( s_{i_r},p_{j_r} \)) be the first pair to reappear in the sequence. By construction, $s_{i_r}$ is $next_M(s_{i_{r-1}})$, $l_k$ prefers $s_{i_{r-1}}$ to $s_{i_r}$, and $s_M(s_{i_r})$ is offered by $l_k$. Clearly, $s_{i_r} \neq s_{i_{r-1}}$. Therefore, $s_{i_r}$ must have appeared earlier in the sequence before $s_{i_{r-1}}$.  However, since $s_{i_r}$ appears earlier in the sequence, then $s_{i_r}$ must be some student that $l_k$ prefers to $s_{i_{r-1}}$. This yields a contradiction since we assume that $l_k$ prefers $s_{i_{r-1}}$ to $s_{i_r}$. Therefore, there exists at least one student $s_z \in M(l_k)$, where $s_z \in \rho$ and
$s_M(s_z)$ is offered by a lecturer different from $l_k$.
\end{proof}

\begin{lemma}
\label{lem:M-rho-is-stable}
If \(\rho\) is a meta-rotation exposed in a stable matching \(M\), then the matching obtained by eliminating \(\rho\) from \(M\), denoted as \(M/\rho\), is a stable matching. Furthermore, \(M\) dominates \(M/\rho\).
\end{lemma}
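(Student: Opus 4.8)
The plan is to set $M' := M/\rho$ and prove three things in turn: that $M'$ is a valid matching (respects all capacities), that $M$ dominates $M'$, and that $M'$ admits no blocking pair. Throughout I would record the basic anatomy of the elimination: each $s_t \in \rho$ is moved from $p_t = M(s_t)$ to $q_t := s_M(s_t)$, a project appearing strictly later on $s_t$'s list, while every student outside $\rho$ keeps its assignment. I would also isolate one structural observation that drives the capacity count: by Definition~\ref{def:next-project}, the student $s_{t+1} = \mathit{next}_M(s_t)$ is either $w_M(q_t)$ (when $q_t$ is full) or $w_M(\ell)$ where $\ell$ offers $q_t$ (when $q_t$ is undersubscribed); in both cases $M(s_{t+1})$ is offered by the \emph{same} lecturer that offers $q_t$. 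Writing $\ell(\cdot)$ for the lecturer offering a project, this gives $\ell(q_t) = \ell(p_{t+1})$ for every $t$.

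Domination is the quickest part: each $s_t \in \rho$ strictly prefers $p_t$ to $q_t$, hence prefers $M$ to $M'$, and every other student is indifferent, so $M$ dominates $M'$ by definition. For validity I would argue project- and lecturer-capacities separately. Since each project occurs once in $\rho$, at most one student leaves any project, and a short argument (two incoming students would force their $\mathit{next}_M$-images to coincide) shows at most one enters; a full project that gains a student simultaneously loses its worst student $s_{t+1}$, so it stays full, while an undersubscribed project has a free slot to absorb an arrival. For lecturers I would use the identity $\ell(q_t)=\ell(p_{t+1})$: re-indexing shows the number of pairs entering any lecturer $l$ equals the number leaving it, so $|M'(l)| = |M(l)| \le d_l$. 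This is exactly the book-keeping underlying Lemma~\ref{lem:s_r-and-s_z}, which guarantees that whenever a student joins $l_k$ some student of $M(l_k)$ departs.

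The substance of the proof is stability, which I would establish by contradiction: assume $(s_a,p_b)$ blocks $M'$, with $p_b$ offered by $l_b$, and produce a blocking pair of $M$ (or contradict an earlier lemma). The key preparatory step is a monotonicity statement for worst students: in every swap performed during the elimination the incoming student is preferred by the relevant lecturer to the outgoing (worst) student---condition (i) or (ii) of Definition~\ref{def:next-project} supplies exactly this---so for any project or lecturer that is full in both matchings, $l_b$ weakly prefers its worst student in $M'$ to its worst student in $M$. With this in hand I would split on $s_a$. If $s_a \notin \rho$ then $M'(s_a)=M(s_a)$, and translating whichever of the blocking conditions (a)--(c) holds in $M'$ back through the worst-student comparison shows $(s_a,p_b)$ already blocks $M$, contradicting its stability. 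If $s_a \in \rho$ and $s_a$ prefers $p_b$ to $M(s_a)$, the same reduction applies. The remaining case, $s_a \in \rho$ with $p_b$ lying strictly between $M(s_a)$ and $s_M(s_a)$ on $s_a$'s list, is where Lemma~\ref{lem:no-p-between-M(s_i)-and-sM(s_i)} and Corollary~\ref{cor:p*-l*-undersubscribed} enter: because $p_b$ was passed over in the search for $s_M(s_a)$, its status in $M$ violates conditions (i)/(ii), and I would combine this with the worst-student monotonicity to rule out each of (a)--(c) in $M'$.

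I expect the main obstacle to be exactly this worst-student tracking across the two matchings, compounded by the feature (highlighted in the paper's own discussion) that in \textsc{spa-s} a project may switch from full in $M$ to undersubscribed in $M'$, and a lecturer's least-preferred student may move between its projects. These status changes mean the blocking condition that holds in $M'$ need not be the one that must be contradicted in $M$ (for instance, a condition-(c) block in $M'$ may have to be traced to a condition-(b) block in $M$), so the case analysis has to match up project-level fullness, lecturer-level fullness, and the location of the relevant worst student in each matching. Getting these correspondences right---rather than any single inequality---will be the delicate part.
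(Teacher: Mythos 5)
Your proposal follows essentially the same route as the paper's proof: validity via the capacity bookkeeping (your identity $\ell(q_t)=\ell(p_{t+1})$ is exactly the content of Lemma~\ref{lem:s_r-and-s_z}), dominance directly from the definition of elimination, and stability by contradiction with a case split on whether the blocking student lies in $\rho$ and whether the blocked project was ``passed over'' in the search for $s_M(\cdot)$, invoking Lemma~\ref{lem:no-p-between-M(s_i)-and-sM(s_i)} for the passed-over case and tracing the remaining cases back to a blocking pair of $M$ via the worst-student comparisons. Your packaging of those comparisons into a single worst-student monotonicity claim is a tidy reorganisation of what the paper establishes inline in each of its (S$\cdot$ \& P$\cdot$) cases, but the argument is the same.
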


\vspace{-0.3cm}
\begin{proof}
Let \( M \) be a stable matching in which \(\rho\) is exposed, and let \( M' = M/\rho \) denote the matching obtained by eliminating \(\rho\). 
By definition, only students in \(\rho\) change projects; each \( s_i \in \rho \) moves from \( M(s_i) \) to \( s_M(s_i) \), while all other students retain their projects in $M$. 
Hence, every student in \( M' \) is assigned to exactly one project. Consider any project \( p_j \) for which \( M'(p_j) \neq M(p_j) \). If \( p_j \) is full in \( M \), it loses exactly one student—the worst in \( M(p_j) \)—and gains one new student, so \( |M'(p_j)| = |M(p_j)| \). If \( p_j \) is undersubscribed in \( M \), then $l_k$ loses the worst student in \( M(l_k) \), and \( p_j \) gains one student, so \( |M'(p_j)| \ge |M(p_j)| \). Hence, no project is oversubscribed in \( M' \).

\medskip
We now show that no lecturer is oversubscribed in \( M' \). 
Since \(\rho\) is exposed in \( M \), for each student \( s_i \in \rho \) assigned to a project offered by lecturer \( l \), by Lemma~\ref{lem:s_r-and-s_z}, there exists another student \( s_z \in \rho \) such that \( s_z \in M(l) \), \( l \) prefers \( s_i \) to \( s_z \), and \( s_M(s_z) \) is offered by a different lecturer. Thus, when \( s_i \) becomes assigned to \( l \) in \( M' \), \( s_z \) is simultaneously removed from \( M'(l) \). Hence, for every lecturer \( l_k \), \( |M'(l_k)| = |M(l_k)| \), and no lecturer is oversubscribed. Since each student is assigned to exactly one project and no capacity is exceeded, \( M' \) is a valid matching.

\medskip
Now, suppose that \( M' \) is not stable. Then there exists a blocking pair \((s_i, p_j)\) in \( M' \).  By the construction of \( M' \), if \( s_i \) is assigned in \( M' \), then \( s_i \) must also be assigned in \( M \). Let $M(s_i)$ be $p_a$ and let $M'(s_i)$ be $p_b$. Then, there are three possible conditions on student \( s_i \):

\begin{itemize}
    \item [(S1):] \( s_i \) is unassigned in both \( M \) and \( M' \);
    \item [(S2):] \( s_i \) is assigned in both \( M \) and \( M' \), and prefers \( p_j \) to both \( p_a \) and \( p_b \);
    \item [(S3):] \( s_i \) is assigned in both \( M \) and \( M' \), \( s_i \) prefers \( p_a \) to \( p_j \), and prefers \( p_j \) to \( p_b \). 
\end{itemize}

\noindent Also, there are four possible conditions on $p_j$ and $l_k$:
\begin{enumerate}[label=(\roman*)]
    \item [(P1):] both \( p_j \) and \( l_k \) are undersubscribed in \( M' \);
    \item [(P2):] \( p_j \) is full in \( M' \) and \( l_k \) prefers \( s_i \) to the worst student in \( M'(p_j) \);
    \item [(P3):] \( p_j \) is undersubscribed in \( M' \), \( l_k \) is full in $M'$, and \( s_i \in M'(l_k) \);
    \item [(P4):] \( p_j \) is undersubscribed in \( M' \), \( l_k \) is full in $M'$, and \( l_k \) prefers \( s_i \) to the worst student in \( M'(l_k) \).
\end{enumerate}

\noindent \textbf{Case (S1 \& P1) and (S2 \& P1):} 
We claim that both $p_j$ and $l_k$ are undersubscribed in $M$. By the construction of \( M' \), every lecturer is assigned at least as many students in \( M' \) as in \( M \), that is, $|M(l_k)| = |M'(l_k)|$; thus, if \( l_k \) is undersubscribed in \( M' \), then \( l_k \) is undersubscribed in \( M \) as well. Similarly, if \( p_j \) is undersubscribed in \( M' \), then \( p_j \) is undersubscribed in \( M \), since by construction, $|M(p_j)| \leq |M'(p_j)|$. If \( s_i \) is unassigned in \( M \) or prefers \( p_j \) to \( M(s_i) \), the pair \((s_i, p_j)\) blocks \( M \), contradicting the stability of \( M \).

\vspace{0.1cm}
\noindent \textbf{Case (S3 \& P1):}
Following a similar argument as in (S1 \& P1) and (S2 \& P1), it follows that both $p_j$ and $l_k$ are undersubscribed in $M$. Since $s_i \in \rho$, \( s_i \) prefers \( p_a \) to \( p_j \), and prefers $p_j$ to $p_b$, then by Lemma~\(\ref{lem:no-p-between-M(s_i)-and-sM(s_i)}\), \((s_i, p_j)\) is not a stable pair. Hence, this case is impossible.

\vspace{0.1cm}
\noindent \textbf{Cases (S1 \& P2) and (S2 \& P2):} 
We claim that if \( p_j \) is full in \( M \), then \( l_k \) prefers \( s_i \) to the worst student in \( M(p_j) \); otherwise, if \( p_j \) is undersubscribed, \( l_k \) prefers \( s_i \) to the worst student in \( M(l_k) \). To see this, note that by the construction of \( M' \), one of three situations must occur. 
Either (i) \( p_j \) has the same set of students in \( M \) and \( M' \), in which case it is full and $l_k$ prefers $s_i$ to the worst student in $M(p_j)$; (ii) $M(p_j) \ne M'(p_j)$, and \( l_k \) prefers some student in \( M'(p_j) \) to the worst student in \( M(p_j) \), which means that $l_k$ prefers $s_i$ to the worst student in $M'(p_j)$; or (iii) \( p_j \) is undersubscribed in \( M \), and \( l_k \) prefers some student in \( M'(l_k) \) to the worst student in \( M(l_k) \), again implying that $l_k$ prefers $s_i$ to the worst student in $M(l_k)$. Hence our claim holds. Since \( s_i \) is either unassigned in both \( M \) and \( M' \) or prefers \( p_j \) to both $p_a$ and $p_b$, \( (s_i, p_j) \) blocks \( M \), a contradiction.

\vspace{0.1cm}
\noindent \textbf{Case (S3 \& P2):}
In this case, \( s_i \) prefers \( p_a \) to \( p_j \) and prefers \( p_j \) to \( p_b \). By applying a similar argument as in Cases (S1 \& P2) and (S2 \& P2), we conclude that either \( l_k \) prefers \( s_i \) to the worst student in \( M(p_j) \) if \( p_j \) is full in \( M \), or \( l_k \) prefers \( s_i \) to the worst student in \( M(l_k) \) if \( p_j \) is undersubscribed in \( M \). First, if \( p_j \) is full in \( M \), and \( l_k \) prefers \( s_i \) to the worst student in \( M(p_j) \), it follows directly from the definition of \( s_M(s_i) \) that \( p_j \) should be a valid \( next_M(s_i) \). Consequently, we should have \( M'(s_i) = p_j \), yielding a contradiction. Similarly, if \( p_j \) is undersubscribed in \( M \) and \( l_k \) prefers \( s_i \) to the worst student in \( M(l_k) \), then by the definition of \( s_M(s_i) \), \( p_j \) must be a valid \( next_M(s_i) \), which implies \( M'(s_i) = p_j \), another contradiction. Therefore, this blocking pair cannot occur in $M'$.

\vspace{0.1cm}
\noindent \textbf{Cases (S1 \& P3) and (S2 \& P3):} 
We claim that \( p_j \) is undersubscribed in \( M \), \( l_k \) is full in \( M \), and either \( s_i \in M(l_k) \) or \( l_k \) prefers \( s_i \) to the worst student in \( M(l_k) \). 
By the construction of \( M' \), one of two situations must occur. 
Either (i) \( M(l_k) = M'(l_k) \), in which case \( p_j \) is undersubscribed in \( M \), \( l_k \) is full in \( M \), and \( s_i \in M(l_k) \); 
or (ii) \( M(l_k) \neq M'(l_k) \), where some student in \( M'(l_k) \) is preferred by \( l_k \) to the worst student in \( M(l_k) \). 
Since \( |M(p_j)| \le |M'(p_j)| \) and $p_j$ is undersubscribed in $M'$, it follows that \( p_j \) is undersubscribed in $M$. Moreover, by construction of $M'$, $|M(l_k)| = |M'(l_k)|$, so $l_k$ is full in $M$.
Moreover, since \( l_k \) prefers \( s_i \) to the worst student in \( M'(l_k) \) (and prefers some student in $M'(l_k)$ to the worst student in $M(l_k)$ ), they prefer \( s_i \) to the worst student in \( M(l_k) \). 
Hence, in all cases, our claim holds: either \( s_i \in M(l_k) \) or \( l_k \) prefers \( s_i \) to the worst student in \( M(l_k) \). 
Finally, since \( s_i \) is either unassigned in both \( M \) and \( M' \) or prefers \( p_j \) to both $p_a$ and $p_b$, \( p_j \) is undersubscribed in \( M \), and either $s_i \in M(l_k)$ or $l_k$ prefers $s_i$ tot he worst student in $M(l_k)$, the pair \( (s_i,p_j) \) blocks \( M \), a contradiction.

\vspace{0.1cm}
\noindent \textbf{Case (S3 \& P3):} 
Here \( s_i \) is assigned in both \( M \) and \( M' \), prefers \( p_a \) to \( p_j \), and \( p_j \) to \( p_b \). By a similar argument to Cases (S1 \& P3) and (S2 \& P3), one of two situations arises in the construction of $M'$. If \( M(l_k) = M'(l_k) \), then \( p_j \) is undersubscribed in \( M \), \( l_k \) is full in \( M \), and \( s_i \in M(l_k) \). Since \( s_i \in M'(l_k) \) as well, \( l_k \) offers \( p_b \); however, by the construction of \( M' \), any time \( s_i \) is assigned to a different project of \( l_k \), the lecturer simultaneously loses a student from \( M(l_k) \), implying \( M(l_k) \neq M'(l_k) \), a contradiction. 
Hence this case cannot occur. 
Otherwise, \( M(l_k) \neq M'(l_k) \), and some student in \( M'(l_k) \) is preferred by \( l_k \) to the worst student in \( M(l_k) \). Since \( |M(p_j)| \le |M'(p_j)| \) and \( p_j \) is undersubscribed in \( M' \), it follows that \( p_j \) is also undersubscribed in \( M \). Also, by the construction of $M'$, $|M(l_k)| = |M'(l_k)|$, and \( l_k \) is full in \( M \). Moreover, \( l_k \) prefers \( s_i \) to the worst student in \( M(l_k) \).
Since \( p_j \) is undersubscribed in \( M \) and \( l_k \) prefers \( s_i \) to the worst student in \( M(l_k) \), it follows from the definition of \( s_M(s_i) \) that \( p_j \) should be a valid \( next_M(s_i) \), implying \( M'(s_i) = p_j \), a contradiction.

\vspace{0.2cm}
\noindent \textbf{Cases (S1 \& P4) and (S2 \& P4):} Based on (P4), it follows that \( p_j \) is undersubscribed in \( M \),  \( l_k \) is full in \( M \), and \( l_k \) prefers \( s_i \) to the worst student assigned in \( M(l_k) \). Specifically, if \( M(l_k) = M'(l_k) \), then \( p_j \) is undersubscribed in \( M \), \( l_k \) is full in \( M \), and \( l_k \) prefers \( s_i \) to the worst student in \( M(l_k) \). Alternatively, if \( M(l_k) \neq M'(l_k) \), then there exists some student \( s \in M'(l_k) \) such that \( l_k \) prefers \( s \) to the worst student in \( M(l_k) \), which implies that \( l_k \) also prefers \( s_i \) to the worst student in \( M(l_k) \). Hence our claim holds. Now consider $s_i$ who is either unassigned in both $M$ and $M'$, or prefers $p_j$ to $p_a$ and $p_b$. Since $p_j$ is undersubscribed in $M$ and $l_k$ prefers $s_i$ to the worst student in $M(l_k)$, it follows that $(s_i,p_j)$ blocks $M$, a contradiction. 

\vspace{0.2cm}
\noindent \textbf{Case (S3 \& P4):}
In this case, \( s_i \) prefers \( p_a \) to \( p_j \) and prefers \( p_j \) to \( p_b \). By applying a similar argument as in Cases (S1 \& P4) and (S2 \& P4), we conclude that $p_j$ is undersubscribed in $M$, $l_k$ is full in $M$, and \( l_k \) prefers \( s_i \) to the worst student in \( M(l_k) \). Now since \( p_j \) is undersubscribed in \( M \) and \( l_k \) prefers \( s_i \) to the worst student in \( M(l_k) \), it follows from the definition of \( s_M(s_i) \) that \( p_j \) must be a valid \( next_M(s_i) \), that is, \( M'(s_i) \) should be \( p_j \). This leads to a contradiction. 

\vspace{0.2cm}
\noindent We have now considered all possible conditions for the pair \((s_i,p_j)\) in $M'$, each resulting in a contradiction. Hence, \(M'\) is stable. Since every student in \(\rho\) receives a less preferred project in \(M'\) compared to \(M\), and all other students retain the same projects that they had in \(M\), it follows that \(M\) dominates \(M'\), that is, \(M\) dominates \(M/\rho\). This completes the proof.
\end{proof}

\subsection{Meta-rotations and stable matchings}
\noindent In this section, we show that every stable matching in a given {\sc spa-s} instance can be obtained by eliminating a specific set of meta-rotations starting from the student-optimal stable matching. This leads naturally to the definition of the meta-rotation poset in the next section. The main result in this section is Lemma~\ref{prop:s-in-rho}, where we prove that if \( \rho \) is exposed in stable matching \( M \), and some student \( s \in \rho \) prefers \( M \) to \( M' \), then every student in \( \rho \) prefers \( M \) to \( M' \). Moreover, if \( M \) dominates \( M' \), then either \( M' \) is the stable matching obtained by eliminating \( \rho \) from \( M \), that is, \( M' = M / \rho \), or \( M / \rho \) dominates \( M' \). This result is established using Lemmas~\ref{lem:s_M(s_i)-is-M'(s_i)} to \ref{lem:case2-s-in-rho}. 

\medskip
A key consequence of Lemmas \ref{lem:M-rho-is-stable} and ~\ref{prop:s-in-rho} is that it provides a systematic way to construct all stable matchings in a given instance, starting from the student-optimal matching. By successively eliminating an exposed meta-rotation, each step produces a new stable matching in which the students involved in the eliminated meta-rotation are assigned to projects they prefer less to their project in the previous matching. In this way, every stable matching can be reached through a sequence of such eliminations.

\begin{lemma}
\label{prop:s-in-rho}
Let \( M \) and \( M' \) be two stable matchings in a given {\sc spa-s} instance, and let \( \rho \) be a meta-rotation exposed in \( M \). Suppose there exists a student \( s_i \in \rho \) who prefers \( M \) to \( M' \). Then every student \( s \in \rho \) prefers \( M \) to \( M' \). Moreover, if \( M \) dominates \( M' \), then either \( M' \) is the stable matching obtained by eliminating \( \rho \) from \( M \), that is, \( M' = M / \rho \), or \( M / \rho \) dominates \( M' \).
\end{lemma}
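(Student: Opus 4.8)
The plan is to prove the two assertions in turn, carrying the existence of a student $s_i \in \rho$ that prefers $M$ to $M'$ as a standing hypothesis throughout; this rules out $M = M'$ and drives both parts. Write $\rho = \{(s_0,p_0),\ldots,(s_{r-1},p_{r-1})\}$ with $s_{t+1} = \mathit{next}_M(s_t)$ (indices modulo $r$), and recall from Definition~\ref{def:next-project} that the lecturer $l$ offering $s_M(s_t)$ prefers $s_t$ to $s_{t+1}$.

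For the first assertion I would argue by propagation around the cycle underlying $\rho$. It suffices to establish the single implication: \emph{if $s_t$ prefers $M$ to $M'$, then so does $s_{t+1}=\mathit{next}_M(s_t)$.} Iterating this around the closed cycle, starting from the given $s_i$, forces every student of $\rho$ to prefer $M$ to $M'$. To prove the implication I would first invoke Lemma~\ref{lem:no-p-between-M(s_i)-and-sM(s_i)}: since $s_t$ strictly prefers $p_t = M(s_t)$ to $M'(s_t)$ and no project strictly between $p_t$ and $s_M(s_t)$ forms a stable pair with $s_t$, the stable assignment $M'(s_t)$ must equal $s_M(s_t)$ or a project $s_t$ ranks even lower; thus in $M'$ the student $s_t$ has vacated everything above $s_M(s_t)$ and presses down onto it. I would then show $s_{t+1}$ cannot be at least as well off in $M'$ as in $M$. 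The cleanest route is by contradiction, splitting on whether $s_M(s_t)$ is full or undersubscribed in $M$ (the two clauses of Definition~\ref{def:next-project}), which is exactly the case analysis of the auxiliary Lemmas~\ref{lem:s_M(s_i)-is-M'(s_i)}--\ref{lem:case2-s-in-rho}: if $s_{t+1}$ were indifferent while $s_t$ lands exactly at $s_M(s_t)$, then (in the full case) $p_{t+1}$ would contain both $s_t$ and $s_{t+1}$ in $M'$ yet was already full in $M$, so it is oversubscribed -- impossible; and if $s_t$ lands strictly below $s_M(s_t)$, then because $l$ prefers $s_t$ to $w_M(\cdot)$ the pair $(s_t, s_M(s_t))$ blocks $M'$, contradicting stability.

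For the second assertion, assume in addition that $M$ dominates $M'$. By the first part every $s \in \rho$ \emph{strictly} prefers $M$ to $M'$, so applying Lemma~\ref{lem:no-p-between-M(s_i)-and-sM(s_i)} to each such $s$ gives that $M'(s)$ equals $s_M(s)$ or a project ranked below it; equivalently $s$ weakly prefers $M/\rho(s) = s_M(s)$ to $M'(s)$. For every $s \notin \rho$ we have $M/\rho(s) = M(s)$, and $s$ weakly prefers $M(s)$ to $M'(s)$ because $M$ dominates $M'$. Hence every student weakly prefers $M/\rho$ to $M'$, which is precisely the relation that $M/\rho$ dominates $M'$. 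If all students are indifferent between the two matchings their student assignments coincide and $M' = M/\rho$; otherwise at least one student strictly prefers $M/\rho$, giving the strict domination. This yields the stated dichotomy.

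I expect the propagation implication in the first part to be the main obstacle. The delicate point is that the displacement forced on $s_{t+1}$ is not always exhibited by a single blocking pair: sometimes it follows from a capacity count (as in the oversubscription argument above), and sometimes from stability of $M'$, and the correct witness differs according to whether $s_M(s_t)$ is full or undersubscribed. In the undersubscribed case $s_{t+1} = w_M(l)$ need not be assigned to $s_M(s_t)$ itself but to another project of the same lecturer $l$, so the accounting must be carried out at the level of $l$'s capacity rather than the project's, using $|M(l)| = |M'(l)|$ from Theorem~\ref{thm:unpopular-students} together with Lemma~\ref{lem:s_r-and-s_z}. Reconciling these across the full/undersubscribed dichotomy and at the project-versus-lecturer level is what necessitates the separate treatment in the auxiliary lemmas.
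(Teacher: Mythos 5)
Your overall architecture matches the paper's: reduce the first assertion to the single propagation step ``if $s_t$ prefers $M$ to $M'$ then so does $s_{t+1}=\mathit{next}_M(s_t)$'' and iterate around the cycle, pin down $M'(s_t)$ via Lemma~\ref{lem:no-p-between-M(s_i)-and-sM(s_i)}, split on whether $s_M(s_t)$ is full or undersubscribed in $M$, and close the second assertion exactly as the paper does (that part of your proposal is fine, and arguably cleaner than the paper's wording). The problem is that the propagation step itself --- the heart of the lemma --- is not actually established. Your two sub-arguments only cover the situations where $s_{t+1}$ keeps its $M$-assignment in $M'$, and even there the capacity count is wrong: from $s_t,s_{t+1}\in M'(p)$ and $p$ full in $M$ you cannot conclude that $p$ is oversubscribed in $M'$ unless $c_p=1$, since other students of $M(p)$ may have left $p$ in $M'$. (The paper's route in that sub-case is a blocking pair: with $s_{t+1}$ still assigned to $p$, resp.\ to $l$, in $M'$, and $l$ preferring $s_t$ to $s_{t+1}$, the pair $(s_t,s_M(s_t))$ blocks $M'$.)

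The genuinely missing case is $s_{t+1}$ \emph{strictly} preferring $M'$ to $M$. Then $s_{t+1}$ has vacated $p_{t+1}$ (resp.\ $l$'s worst slot), the worst student of $M'(p)$ or $M'(l)$ may well be someone $l$ prefers to $s_t$, so no blocking pair is immediate and no capacity argument applies; your sketch never addresses this. This is precisely where Lemmas~\ref{lem:case1-s-in-rho} and~\ref{lem:case2-s-in-rho} do the real work: stability of $M'$ against $(s_i,p_j)$ forces $l_k$ to prefer every student of $M'(p_j)$ (or $M'(l_k)$) to $s_i$, which yields a student $s_a\in M'(\cdot)\setminus M(\cdot)$ with $l_k$ preferring $s_a$ to $s_i$; the lecturer-level count $|M(l_k)|=|M'(l_k)|$ then produces a shrinking project $p_b$ and a student $s_b\in M(\cdot)\setminus M'(\cdot)$, and Lemmas~\ref{lem:chap4samelecturer1} and~\ref{lem:chap4-s-prefers-l-prefers1} force $l_k$ to simultaneously prefer $s_a$ to $s_b$ and $s_b$ to $s_a$. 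You correctly flag in your closing paragraph that reconciling the full/undersubscribed and project/lecturer levels is the delicate point, but flagging it is not supplying it: as written, the first assertion is unproved in its hardest case.
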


\begin{proof}
\noindent Let \( M \) and \( M' \) be two stable matchings in \( I \), and let \( \rho \) be a meta-rotation exposed in \( M \). Suppose there exists a student \( s_i \in \rho \) who prefers \( M \) to \( M' \). Clearly, \( M(s_i) \neq M'(s_i) \), and \( s_M(s_i) \) exists. Moreover, \( s_i \) prefers \( M(s_i) \) to \( s_M(s_i) \). By Lemma~\ref{lem:no-p-between-M(s_i)-and-sM(s_i)}, there are no projects between $M(s_i)$ and $s_M(s_i)$ that form a stable pair with $s_i$. Therefore, either \( s_M(s_i) = M'(s_i) \), or \( s_i \) prefers \( s_M(s_i) \) to \( M'(s_i) \). Let \( p_j = s_M(s_i) \) where $l_k$ offers \( p_j \). By Definition \ref{def:exposed-mr}, there exists a student \( next_M(s_i) \) in \( \rho \), which we denote by \( s_z \).  Since \( s_z \in \rho \), \( s_M(s_z) \) exists, and \( s_z \) prefers \( M(s_z) \) to \( s_M(s_z) \). By the definition of \( next_M(s_i) \) (see Definition~\ref{def:next-project}), there are two possible conditions on $p_j$:
\begin{enumerate}[label=(\roman*)]
    \item \( p_j \) is full in \( M \), and \( s_z \) is the worst student in \( M(p_j) \), or
    \item \( p_j \) is undersubscribed in \( M \), \( l_k \) is full in \( M \), and \( s_z \) is the worst student in \( M(l_k) \).
\end{enumerate}
In both cases (i) and (ii), \( l_k \) prefers \( s_i \) to \( s_z \). 

\medskip
\noindent To prove Lemma~\ref{prop:s-in-rho}, it suffices to show that \( s_z \) also prefers \( M \) to \( M' \). Once this is established, the same reasoning can be extended to all other students in \( \rho \). To complete the proof, we make use of several auxiliary lemmas. Specifically, Lemma~\ref{lem:s_M(s_i)-is-M'(s_i)} covers the case where \( s_M(s_i) = M'(s_i) \), while Lemmas~\ref{lem:case1-s-in-rho} and~\ref{lem:case2-s-in-rho} address the case where \( s_i \) prefers \( s_M(s_i) \) to \( M'(s_i) \). In both Lemmas~\ref{lem:case1-s-in-rho} and~\ref{lem:case2-s-in-rho}, we first show that \( s_z \) is assigned to different projects in \( M \) and \( M' \), i.e., \( M(s_z) \neq M'(s_z) \), and then prove, by contradiction, that \( s_z \) prefers \( M \) to \( M' \). Together, these results establish Lemma~\ref{prop:s-in-rho}.
\end{proof}

\begin{lemma}
\label{lem:s_M(s_i)-is-M'(s_i)}
Let \( \rho \) be an exposed meta-rotation in \( M \), and suppose there exists a student \( s_i \in \rho \) who prefers $M$ to $M'$ and $s_M(s_i) = M'(s_i)$. If \( s_i \) prefers \( M \) to \( M' \), then \( s_z \) prefers $M$ to $M'$.
\end{lemma}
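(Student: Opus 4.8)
The plan is to work inside the setup already fixed in the proof of Lemma~\ref{prop:s-in-rho}: we have $p_j = s_M(s_i)$ offered by $l_k$, $s_z = \mathit{next}_M(s_i)$, and by Definition~\ref{def:next-project} either (i) $p_j$ is full in $M$ with $s_z = w_M(p_j)$, or (ii) $p_j$ is undersubscribed in $M$, $l_k$ is full in $M$, and $s_z = w_M(l_k)$; in both cases $l_k$ prefers $s_i$ to $s_z$. The distinguishing hypothesis of this lemma is $M'(s_i)=p_j$, so $s_i \in M'(p_j)\setminus M(p_j)$ and, in case (i), $|M'(p_j)| \le c_j = |M(p_j)|$. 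Since $s_z$ is assigned in $M$ and hence, by Theorem~\ref{thm:unpopular-students}, also in $M'$, proving that $s_z$ prefers $M$ to $M'$ reduces to ruling out the two alternatives: (A) $M(s_z)=M'(s_z)$ (indifference), and (B) $s_z$ prefers $M'$ to $M$.

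To rule out (B), I would apply Lemma~\ref{lem:chap4-s-prefers-l-prefers1} to $s_z$ with the roles of $M$ and $M'$ interchanged. In case (i), if $s_z$ preferred $M'$ to $M$ then, since $s_i \in M'(p_j)\setminus M(p_j)$ and $M(s_z)=p_j$, part (a) of the role-swapped lemma would force $l_k$ to prefer $s_z$ to $s_i$, directly contradicting $l_k$'s preference for $s_i$ over $s_z$. In case (ii) the analogous contradiction is obtained at the level of $l_k$: unless $M(s_i)$ is itself offered by $l_k$ we have $s_i \in M'(l_k)\setminus M(l_k)$, and in the remaining case one first invokes Lemma~\ref{lem:chap4samelecturer1} to pin down a student of $l_k$ that $s_i$ displaces, before applying the role-swapped lemma to reach the same contradiction on $l_k$'s preferences.

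To rule out (A), I would produce a blocking pair of $M'$. Assuming $M(s_z)=M'(s_z)$, the fullness of $p_j$ (case (i)) or of $l_k$ (case (ii)) in $M$ together with the arrival of $s_i$ forces some student $s_t$ out of $p_j$, respectively out of $M(l_k)$, in passing to $M'$; thus $s_t \in M(p_j)\setminus M'(p_j)$ (resp.\ $s_t \in M(l_k)\setminus M'(l_k)$) with $s_t \ne s_z$. Lemma~\ref{lem:chap4-s-prefers-l-prefers1} applied to $s_i$ gives that $l_k$ prefers $s_i$ to $s_t$, and the same role-swapped argument as in (B) shows $s_t$ must prefer $M$ to $M'$. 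Because $s_z = w_M(p_j)$ (resp.\ $w_M(l_k)$) remains assigned in $M'$ under the indifference assumption and $l_k$ prefers $s_t$ to $s_z$, the lecturer $l_k$ prefers $s_t$ to the worst student currently at $p_j$ (resp.\ at $l_k$) in $M'$; since $s_t$ prefers its $M$-project to its $M'$-project, the appropriate clause of Definition~\ref{def:stability} certifies that $(s_t,M(s_t))$ blocks $M'$, the desired contradiction.

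I expect the main obstacle to be case (ii), where the displacement occurs at the lecturer rather than the project level. There $s_i$'s previous project $M(s_i)$ may itself belong to $l_k$, so $s_i$ need not lie in $M'(l_k)\setminus M(l_k)$, and one must track whether students move within $l_k$'s projects or to a different lecturer (using Lemmas~\ref{lem:chap4samelecturer1} and~\ref{lem:chap4-s-prefers-l-prefers1}(b)). A second delicate point is that the fullness status of a project or lecturer may differ between $M$ and $M'$, so when exhibiting the blocking pair one must carefully choose which clause of Definition~\ref{def:stability} applies — distinguishing the case where the relevant project is full in $M'$ (clause (c)) from the case where only the lecturer is full (clause (b)) — rather than assuming the status is inherited from $M$.
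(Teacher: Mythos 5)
Your route is genuinely different from the paper's. The paper disposes of this lemma in a few lines: from $s_M(s_i)=M'(s_i)$ it asserts that $M'=M/\rho$, invokes Lemma~\ref{lem:M-rho-is-stable} to conclude that $M$ dominates $M'$, applies Lemma~\ref{lem:swaps} to get that the worst student $s_z$ of $M(p_j)$ (resp.\ $M(l_k)$) is assigned differently in $M$ and $M'$, and then lets dominance force $s_z$ to prefer $M$. You never exploit the hypothesis $s_M(s_i)=M'(s_i)$ beyond $s_i\in M'(p_j)\setminus M(p_j)$, and instead rule out indifference (A) and reversed preference (B) by hand; this is closer in spirit to the paper's Lemmas~\ref{lem:case1-s-in-rho} and~\ref{lem:case2-s-in-rho} than to its proof of this lemma. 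Your case (i) is complete and correct as written.

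The gap is in case (ii), exactly where you anticipated trouble. To rule out (B) you need to place $s_z$ above $s_i$ on $l_k$'s list, and you propose to get this from the role-swapped Lemma~\ref{lem:chap4-s-prefers-l-prefers1}. But with the roles of $M$ and $M'$ interchanged, part (b) of that lemma only applies when $p_z=M(s_z)$ is \emph{undersubscribed in $M'$}, while part (a) only compares $s_z$ with students in $M'(p_z)\setminus M(p_z)$ --- and $s_i$ is assigned to $p_j$, which need not equal $p_z$. So in the sub-case where $M(s_z)$ is full in $M'$, your mechanism yields no comparison between $s_z$ and $s_i$, and the contradiction does not materialise. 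The same defect affects case (ii)(A), where you need the displaced student $s_t\in M(l_k)\setminus M'(l_k)$ to prefer $M$ to $M'$ via ``the same role-swapped argument.'' The paper's Lemma~\ref{lem:case2-s-in-rho} shows what closing this sub-case actually costs without dominance: a counting argument over $l_k$'s projects (using $|M(l_k)|=|M'(l_k)|$) to produce a second displaced student $s_b$ with $|M(p_b)|>|M'(p_b)|$, followed by a pincer on $l_k$'s preferences. Either import that machinery, or justify the paper's shortcut that the hypothesis $s_M(s_i)=M'(s_i)$ forces $M$ to dominate $M'$ and argue from dominance as the paper does.
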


\begin{proof}
\noindent
Let \( s_i \in \rho \) be some student who prefers \( M \) to \( M' \), and suppose that $s_M(s_i) = M'(s_i)$. This implies that \( M' \) is the stable matching obtained by eliminating \( \rho \) from \( M \). Moreover,  by Lemma~\ref{lem:M-rho-is-stable}, \( M \) dominates \( M' \). Recall that \( p_j = s_M(s_i) \); thus, \( s_i \in M'(p_j) \setminus M(p_j) \). Since \( s_i \) is assigned to \( p_j \) in \( M' \), it follows from Lemma~\ref{lem:swaps} that, regardless of whether \( p_j \) is full or undersubscribed in \( M \), the worst student in \( M(p_j) \) or \( M(l_k) \), denoted \( s_z \), must be assigned to a different project in \( M \) and \( M' \). In particular, \( s_z \in M(p_j) \setminus M'(p_j) \). Moreover, since \( M \) dominates \( M' \), it follows that \( s_z \) prefers \( M \) to \( M' \). This completes the proof.
\end{proof}

\begin{lemma}
\label{lem:case1-s-in-rho}
Let \( \rho \) be an exposed meta-rotation in \( M \), where \( (s_i, p_j) \in \rho \) and \( s_i \) prefers \( p_j \) to \( M'(s_i) \). If \( p_j \) is full in \( M \) and \( s_z \) is the worst student in \( M(p_j) \), then \( s_z \) prefers \( M \) to \( M' \).
\end{lemma}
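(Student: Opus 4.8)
The plan is to follow the two-step template announced for this family of sub-lemmas in the proof of Lemma~\ref{prop:s-in-rho}: first establish that \( s_z \) is assigned to different projects in \( M \) and \( M' \), and then rule out, by contradiction, the possibility that \( s_z \) prefers \( M' \) to \( M \). Throughout I would keep in mind the setup inherited from Lemma~\ref{prop:s-in-rho}, namely that \( p_j = s_M(s_i) \) is offered by \( l_k \), that \( p_j \) is full in \( M \), and that \( s_z = \mathit{next}_M(s_i) \) is the worst student in \( M(p_j) \); by Definition~\ref{def:next-project} this immediately gives that \( l_k \) prefers \( s_i \) to \( s_z \) (in particular \( s_z \neq s_i \)). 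Since \( s_z \) is assigned in \( M \), Theorem~\ref{thm:unpopular-students} guarantees that \( s_z \) is also assigned in \( M' \), so once I know \( M(s_z) \neq M'(s_z) \) indifference is impossible and \( s_z \) must strictly prefer one of the two matchings.

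The driving observation is that stability of \( M' \) constrains where the students currently at \( p_j \) (or at \( l_k \)) in \( M' \) can sit relative to \( s_i \). Because \( s_i \) prefers \( p_j \) to \( M'(s_i) \), the pair \( (s_i, p_j) \) cannot block \( M' \), and I would case-split on the status of \( p_j \) in \( M' \). If \( p_j \) is full in \( M' \), then the failure of blocking condition~(c) of Definition~\ref{def:stability} forces \( l_k \) to prefer every student in \( M'(p_j) \) to \( s_i \). If instead \( p_j \) is undersubscribed in \( M' \), then \( l_k \) must be full in \( M' \) (otherwise condition~(a) would make \( (s_i, p_j) \) blocking), and the failure of condition~(b) forces \( s_i \notin M'(l_k) \) together with \( l_k \) preferring every student in \( M'(l_k) \) to \( s_i \). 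In either case, chaining with the fact that \( l_k \) prefers \( s_i \) to \( s_z \), every student at \( p_j \) (respectively at \( l_k \)) in \( M' \) is preferred by \( l_k \) to \( s_z \); hence \( s_z \notin M'(p_j) \) (respectively \( s_z \notin M'(l_k) \)), which yields \( M(s_z) \neq M'(s_z) \) and completes the first step.

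For the second step I would assume for contradiction that \( s_z \) prefers \( M' \) to \( M \). Since \( s_z \) is assigned to \( p_j \) in \( M \), I can apply Lemma~\ref{lem:chap4-s-prefers-l-prefers1} with the roles of \( M \) and \( M' \) interchanged. In the full case, Theorem~\ref{thm:unpopular-students} gives \( |M(p_j)| = |M'(p_j)| \), and since \( s_z \in M(p_j) \setminus M'(p_j) \) the set \( M'(p_j) \setminus M(p_j) \) is nonempty; part~(a) of the lemma then asserts that \( l_k \) prefers \( s_z \) to each such student. In the undersubscribed case, \( |M(l_k)| = |M'(l_k)| \) and \( s_z \in M(l_k) \setminus M'(l_k) \), so \( M'(l_k) \setminus M(l_k) \) is nonempty, and part~(b) gives that \( l_k \) prefers \( s_z \) to each such student. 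Either conclusion directly contradicts the fact, established in the previous step, that every student of \( M'(p_j) \) (respectively \( M'(l_k) \)) is preferred by \( l_k \) to \( s_z \). This contradiction forces \( s_z \) to prefer \( M \) to \( M' \).

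The step I expect to be the main obstacle is the capacity bookkeeping: a project that is full in \( M \) may well be undersubscribed in \( M' \), so the ``project full'' and ``lecturer full, project undersubscribed'' branches must be kept strictly separate, and the cardinality equalities must be pulled from Theorem~\ref{thm:unpopular-students} rather than assumed from the status in a single matching. A secondary delicate point is orienting Lemma~\ref{lem:chap4-s-prefers-l-prefers1} correctly under the interchange of \( M \) and \( M' \): the student \( s_z \) must be the one assigned to \( p_j \) in the \emph{less}-preferred matching (here \( M \)), and the set differences in the lemma's conclusions are taken with respect to the more-preferred matching (here \( M' \)).
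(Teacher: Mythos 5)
Your proof is correct and takes essentially the same route as the paper's: stability of \(M'\) against the pair \((s_i,p_j)\), combined with the fact that \(l_k\) prefers \(s_i\) to \(s_z\), forces \(M(s_z)\neq M'(s_z)\), and Lemma~\ref{lem:chap4-s-prefers-l-prefers1} applied with the roles of \(M\) and \(M'\) interchanged then yields the contradiction in both the full and undersubscribed subcases, exactly as in the paper's cases (A) and (B). The only nit is that \(|M(p_j)|=|M'(p_j)|\) in the full subcase follows simply because \(p_j\) is full in both matchings (both cardinalities equal \(c_j\)), not from Theorem~\ref{thm:unpopular-students}, which guarantees equal counts only for projects offered by undersubscribed lecturers.
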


\begin{proof}
\noindent Let \( M \) be a stable matching in which \( \rho \) is exposed, and suppose that some student \( s_i \in \rho \) prefers \( M \) to \( M' \). Let $s_z \in \rho$ be the worst student in $M(p_j)$. We note that $l_k$ prefers $s_i$ to $s_z$. First suppose for a contradiction that \( M(s_z) = M'(s_z) \).  Then, regardless of whether \( p_j \) is full or undersubscribed in \( M' \), the pair \( (s_i, p_j) \) blocks \( M' \), since \( s_i \) prefers \( p_j \) to \( M'(s_i) \), and \( l_k \) prefers \( s_i \) to some student in \( M'(p_j) \) (namely \( s_z \)). This contradicts the stability of $M'$. Hence, \( M(s_z) \neq M'(s_z) \). Now, suppose for a contradiction that $s_z$ prefers $M'$ to $M$, that is, \( s_z \) prefers \( M'(s_z) \) to \( p_j \). We consider cases (A) and (B), depending on whether \( p_j \) is full or undersubscribed in \( M' \).

\medskip
\noindent \textbf{(A):} \( p_j \) is full in \( M' \). Since $p_j$ is also full in $M$, there exists some student \( s_a \in M'(p_j) \setminus M(p_j) \). By Lemma~\ref{lem:chap4-s-prefers-l-prefers1}, since $s_z$ prefers $M'(s_z)$ to $p_j$, \( l_k \) prefers \( s_z \) to each student in \( M'(p_j) \setminus M(p_j) \), so \( l_k \) prefers \( s_z \) to \( s_a \). Additionally, since \( s_i \) prefers \( p_j \) to \( M'(s_i) \) and \( p_j \) is full in $M'$, \( l_k \) prefers each student in \( M'(p_j) \) to \( s_i \), implying \( l_k \) prefers \( s_a \) to \( s_i \). Since \( l_k \) prefers \( s_z \) to \( s_a \), and prefers \( s_a \) to \( s_i \), it follows that \( l_k \) prefers \( s_z \) to \( s_i \). However, by definition of $s_M(s_i)$, \( l_k \) prefers \( s_i \) to \( s_z \), which yields a contradiction. Therefore, our claim holds and \( s_z \) prefers $M$ to $M'$.

\medskip
\noindent \textbf{(B):} \( p_j \) is undersubscribed in \( M' \). By Lemma~\ref{lem:chap4-s-prefers-l-prefers1}, since \( s_z \) prefers \( M'(s_z) \) to \( p_j \), \( l_k \) prefers \( s_z \) to each student in \( M'(l_k) \setminus M(l_k) \). Moreover, if \( s_z \in S_k(M, M') \), then by Lemma~\ref{lem:chap4samelecturer1}, there exists at least one student in \( M(l_k) \setminus M'(l_k) \) who $l_k$ prefers to $s_z$, or we have \( s_z \in M(l_k) \setminus M'(l_k) \) itself. Consequently, it follows that there also exists a student in \( M'(l_k) \setminus M(l_k) \). Let \( s_b \) denote the worst student in \( M'(l_k) \setminus M(l_k) \). Then \( l_k \) prefers \( s_z \) to \( s_b \). Since \( s_i \) prefers \( p_j \) to \( M'(s_i) \), and \( p_j \) is undersubscribed in \( M' \), \( l_k \) prefers each student in \( M'(l_k) \) (including \( s_b \)) to \( s_i \). Since \( l_k \) prefers \( s_z \) to \( s_b \), and prefers \( s_b \) to \( s_i \), it follows that \( l_k \) prefers \( s_z \) to \( s_i \); This again contradicts the assumption that \( l_k \) prefers \( s_i \) to \( s_z \) (by definition of $next_M(s_i)$). Hence, \( s_z \) prefers $M$ to $M'$, and our claim holds.
\end{proof}

\begin{lemma}
\label{lem:case2-s-in-rho}
Let \( \rho \) be an exposed meta-rotation in \( M \), where \( (s_i, p_j) \in \rho \) and \( s_i \) prefers \( p_j \) to \( M'(s_i) \). If \( p_j \) is undersubscribed in \( M \) and \( s_z \) is the worst student in \( M(l_k) \), then \( s_z \) prefers \( M \) to \( M' \).
\end{lemma}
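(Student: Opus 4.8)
The plan is to mirror the structure of the companion Lemma~\ref{lem:case1-s-in-rho}, proving that $s_z$ prefers $M$ to $M'$ by contradiction. First I would record the key preliminary fact that, since $p_j$ is undersubscribed in $M$, the definition of the next project forces $l_k$ to be full in $M$; combined with $|M(l_k)|=|M'(l_k)|$ from Theorem~\ref{thm:unpopular-students}, this gives that $l_k$ is full in both $M$ and $M'$. I would also note that $s_z$ is assigned in both matchings (Theorem~\ref{thm:unpopular-students}), so the negation ``$s_z$ does not prefer $M$ to $M'$'' means either $M(s_z)=M'(s_z)$ or $s_z$ prefers $M'$ to $M$. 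Assuming this for contradiction, I would split on whether $p_j$ is undersubscribed or full in $M'$.

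In the case where $p_j$ is undersubscribed in $M'$, the argument is clean and self-contained. Since $s_i$ prefers $p_j$ to $M'(s_i)$ and $M'$ is stable, the pair $(s_i,p_j)$ cannot block $M'$; as $p_j$ is undersubscribed and $l_k$ is full in $M'$, the failure of blocking condition~(b) forces $l_k$ to prefer the worst student of $M'(l_k)$ to $s_i$, hence to prefer every student of $M'(l_k)$ to $s_i$, and therefore (since $l_k$ prefers $s_i$ to $s_z$) to $s_z$. In particular $s_z\notin M'(l_k)$, so $s_z$ has left $l_k$ and, as $M(s_z)$ is a project of $l_k$ while $M'(s_z)$ is not, $s_z$ is assigned differently and must strictly prefer $M'$ to $M$. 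Applying Lemma~\ref{lem:chap4-s-prefers-l-prefers1} to $s_z$ with the roles of $M$ and $M'$ interchanged (so $s_z$ plays the improving student, assigned to $M(s_z)$ in the worse matching) yields a student of $M'(l_k)$ that $l_k$ prefers $s_z$ to, contradicting that $l_k$ prefers every student of $M'(l_k)$ to $s_z$. This disposes of both possibilities ($s_z$ indifferent or preferring $M'$) at once.

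The case where $p_j$ is full in $M'$ is the main obstacle, and here the proof genuinely departs from Lemma~\ref{lem:case1-s-in-rho}. Because $s_z$ is the worst student of $M(l_k)$ rather than of $M(p_j)$, we can no longer guarantee that $s_z\in M'(p_j)$, and the preference-cycle contradiction of Lemma~\ref{lem:case1-s-in-rho} collapses: stability of $M'$ only tells us that $l_k$ prefers every student of $M'(p_j)$ to $s_i$ and hence to $s_z$, so no element of $M'(p_j)$ can be ranked below $s_z$. The intended route is therefore to exhibit an explicit blocking pair of $M'$. I would first show that every student $s_a\in M'(p_j)\setminus M(p_j)$ prefers $M$ to $M'$: such an $s_a$ is preferred by $l_k$ to $s_i$ and hence to $s_z$, so if $s_a$ preferred $M'$ to $M$ then, as $p_j$ is undersubscribed in $M$ and $l_k$ is full in $M$ with worst student $s_z$, the pair $(s_a,p_j)$ would block $M$ via condition~(b). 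Next, since $p_j$ rises from undersubscribed in $M$ to full in $M'$ while $l_k$ stays full, the number of students $l_k$ places outside $p_j$ strictly decreases, so some sibling project $p'\in P_k$ with $p'\ne p_j$ is undersubscribed in $M'$ and loses a student between $M$ and $M'$. The task is then to locate, among the students $l_k$ moves off such a deficit project $p'$ (or among the movers into $p_j$, all of whom prefer $M$), one that forms a blocking pair with an undersubscribed project of $l_k$ in $M'$.

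The hard part will be this last step: stability of $M'$ already forbids any student of $M'(l_k)$ from preferring an undersubscribed sibling project of $l_k$ to its own assignment, so the blocking student cannot be chosen naively from $M'(l_k)$, and one must argue carefully about where the students vacating $p'$ and the students filling $p_j$ are routed, using Lemma~\ref{lem:chap4-s-prefers-l-prefers1} and Lemma~\ref{lem:chap4samelecturer1} to track $l_k$'s preferences across the two matchings. Once a blocking pair of $M'$ is produced, the resulting contradiction establishes that $s_z$ prefers $M$ to $M'$; feeding this back into the iteration of Lemma~\ref{prop:s-in-rho} then propagates the conclusion to every student of $\rho$.
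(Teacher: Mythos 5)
Your treatment of the sub-case where $p_j$ is undersubscribed in $M'$ is correct and essentially matches the paper's cases (A2) and (B2): stability of $M'$ forces $l_k$ to prefer every student of $M'(l_k)$ to $s_i$, hence to $s_z$, so $s_z \in M(l_k)\setminus M'(l_k)$, and applying Lemma~\ref{lem:chap4-s-prefers-l-prefers1} to $s_z$ with the roles of $M$ and $M'$ swapped yields the contradiction. (To invoke that lemma you should note explicitly that either $M'(p_z)\setminus M(p_z)\neq\emptyset$ when $p_z=M(s_z)$ is full in $M'$, or $p_z$ is undersubscribed in $M'$ and $M'(l_k)\setminus M(l_k)\neq\emptyset$; both checks are routine.)

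The genuine gap is the sub-case where $p_j$ is full in $M'$, which you leave as a plan, and the plan points in the wrong direction: the paper does not close this case by exhibiting a blocking pair of $M'$ --- as you yourself observe, stability of $M'$ rules out the natural candidates. Instead it derives an inconsistency in $l_k$'s preference list between two specific students. You already have the first: every $s_a\in M'(p_j)\setminus M(p_j)$ prefers $M$ to $M'$, whence Lemma~\ref{lem:chap4-s-prefers-l-prefers1}(b) (as $p_j$ is undersubscribed in $M$) gives that $l_k$ prefers $s_a$ to every student of $M(l_k)\setminus M'(l_k)$. The missing step is the symmetric claim for the deficit project: taking $p_b\in P_k$ with $|M(p_b)|>|M'(p_b)|$ and $s_b\in M(p_b)\setminus M'(p_b)$, one must show $s_b$ prefers $M'$ to $M$, and this is precisely where the contradiction hypothesis on $s_z$ enters --- if $s_b$ preferred $p_b$ to $M'(s_b)$ then, $p_b$ being undersubscribed in $M'$, stability of $M'$ would force $l_k$ to prefer every student of $M'(l_k)$ to $s_b$; but $l_k$ prefers $s_b$ to $s_z$ (the worst student in $M(l_k)$), and $s_z$ is either still in $M'(l_k)$ or, preferring $M'$ to $M$, is preferred by $l_k$ to the worst student of $M'(l_k)$, a contradiction either way. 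Lemma~\ref{lem:chap4-s-prefers-l-prefers1}(b) applied with $M$ and $M'$ swapped then gives that $l_k$ prefers $s_b$ to every student of $M'(l_k)\setminus M(l_k)$, and chasing where $s_a$ and $s_b$ sit (via Lemma~\ref{lem:chap4samelecturer1} when either remains with $l_k$) yields both that $l_k$ prefers $s_a$ to $s_b$ and that $l_k$ prefers $s_b$ to $s_a$. Without this preference-cycle argument, the main case of your proof is incomplete.
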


\begin{proof}
Let \( M \) be a stable matching in which \( \rho \) is exposed, and suppose that some student \( s_i \in \rho \) prefers \( M \) to \( M' \). Let \( s_z \in \rho \) be the worst student in \( M(l_k) \). Note that, by definition of $s_M(s_i)$, \( l_k \) prefers \( s_i \) to \( s_z \). We first show, in case (A), that \( s_z \) is assigned to different lecturers in \( M \) and \( M' \). We then show, in case (B), that \( s_z \) prefers \( M \) to \( M' \).

\medskip
\noindent \textbf{(A):} Suppose for a contradiction that \( s_z \in M(l_k) \cap M'(l_k) \). We consider subcases (A1) and (A2) depending on whether $p_j$ is full or undersubscribed in $M'$.

\medskip
\noindent \textbf{(A1):} \( p_j \) is full in \( M' \). Since \( p_j \) is undersubscribed in \( M \), there exists a student \( s_a \in M'(p_j) \setminus M(p_j) \). Since \( s_i \) prefers \( p_j \) to \( M'(s_i) \) and \( p_j \) is full in \( M' \), it follows that \( l_k \) prefers each student in \( M'(p_j) \) to \( s_i \). Therefore, \( l_k \) prefers \( s_a \) to \( s_i \). If \( s_a \) prefers \( p_j \) to \( M(s_a) \), then since \( p_j \) is undersubscribed in \( M \), \( l_k \) prefers each student in \( M(l_k) \) to \( s_a \). In particular, \( l_k \) prefers \( s_z \) to \( s_a \), since $s_z \in M(l_k)$. Furthermore, since $l_k$ prefers $s_z$ to $s_a$, and prefers $s_a$ to $s_i$, it follows that \( l_k \) prefers \( s_z \) to \( s_i \); this contradicts the fact that \( l_k \) prefers \( s_i \) to \( s_z \). Therefore, \( s_a \) prefers \( M(s_a) \) to \( p_j \). Moreover, by Lemma~\ref{lem:chap4-s-prefers-l-prefers1}, since \( p_j \) is undersubscribed in \( M \), \( l_k \) prefers \( s_a \) to each student in \( M(l_k) \setminus M'(l_k) \).

\medskip
\noindent Now, since \( |M'(p_j)| > |M(p_j)| \) and \( |M(l_k)| = |M'(l_k)| \), there exists some project \( p_b \in P_k \) such that \( |M(p_b)| > |M'(p_b)| \). This implies there exists a student \( s_b \in M(p_b) \setminus M'(p_b) \), and \( p_b \) is undersubscribed in \( M' \). Moreover, \( l_k \) prefers \( s_b \) to \( s_z \), since \( s_b \in M(l_k) \) and \( s_z \) is the worst student in \( M(l_k) \). If \( s_b \) prefers \( p_b \) to \( M'(s_b) \), then since \( p_b \) is undersubscribed in \( M' \), \( l_k \) prefers each student in \( M'(l_k) \) to \( s_b \). In particular, \( l_k \) prefers \( s_z \) (who is also in $M'(l_k)$) to \( s_b \), contradicting the earlier fact that \( l_k \) prefers \( s_b \) to \( s_z \). Therefore, \( s_b \) prefers \( M'(s_b) \) to \( p_b \). By Lemma~\ref{lem:chap4-s-prefers-l-prefers1} (applied with \( M \) and \( M' \) swapped), since \( p_b \) is undersubscribed in \( M' \), \( l_k \) prefers \( s_b \) to each student in \( M'(l_k) \setminus M(l_k) \).

\medskip
\noindent We now show that the combination of conditions where \( s_a \) prefers \( M \) to \( M' \) and \( l_k \) prefers \( s_a \) to each student in \( M(l_k) \setminus M'(l_k) \), together with the conditions where \( s_b \) prefers \( M' \) to \( M \) and \( l_k \) prefers \( s_b \) to each student in \( M'(l_k) \setminus M(l_k) \), leads to a contradiction.  Suppose \( s_a \in M'(l_k) \setminus M(l_k) \). Then \( l_k \) prefers \( s_b \) to \( s_a \), since \( l_k \) prefers \( s_b \) to each student in \( M'(l_k) \setminus M(l_k) \). Next, suppose \( s_a \in S_k(M, M') \). By Lemma~\ref{lem:chap4samelecturer1}, since $s_a$ prefers $M$ to $M'$, then there exists some student \( s_r \in M'(l_k) \setminus M(l_k) \) such that \( l_k \) prefers \( s_r \) to \( s_a \). Given that \( l_k \) prefers \( s_b \) to each student in \( M'(l_k) \setminus M(l_k) \), it follows that \( l_k \) prefers \( s_b \) to \( s_r \), and thus \( l_k \) prefers \( s_b \) to \( s_a \).

\medskip
\noindent A similar argument applies to \( s_b \). Suppose \( s_b \in M(l_k) \setminus M'(l_k) \). Then \( l_k \) prefers \( s_a \) to \( s_b \), since \( l_k \) prefers \( s_a \) to each student in \( M(l_k) \setminus M'(l_k) \). On the other hand, suppose \( s_b \in S_k(M, M') \). By Lemma~\ref{lem:chap4samelecturer1} (applied with \( M \) and \( M' \) swapped), there exists a student \( s_r \in M(l_k) \setminus M'(l_k) \) such that \( l_k \) prefers \( s_r \) to \( s_b \). Moreover, since \( l_k \) prefers \( s_a \) to each student in \( M(l_k) \setminus M'(l_k) \), it follows that \( l_k \) prefers \( s_a \) to \( s_r \), and thus \( l_k \) prefers \( s_a \) to \( s_b \). This yields a contradiction since \( l_k \) cannot simultaneously prefer \( s_b \) to \( s_a \) and \( s_a \) to \( s_b \). Therefore, the conditions under which \( s_a \) prefers \( M \) to \( M' \), while \( s_b \) prefers \( M' \) to \( M \), result in a contradiction on the preferences of \( l_k \). Hence, $s_z \in M(l_k) \setminus M'(l_k)$, and this completes the proof for (A1).

\medskip
\noindent \textbf{(A2):} \( p_j \) is undersubscribed in \( M' \). Since \( s_i \) prefers \( p_j \) to \( M'(s_i) \), it follows that \( l_k \) prefers each student in \( M'(l_k) \) to \( s_i \). If \( s_z \in M'(l_k) \), then \( l_k \) prefers \( s_z \) to \( s_i \), which directly contradicts the assumption that \( l_k \) prefers \( s_i \) to \( s_z \). Hence, \( s_z \in M(l_k) \setminus M'(l_k)\).

\medskip
\noindent We now show in case (B) that \( s_z \) prefers \( M \) to \( M' \), given that \( M(s_z) \neq M'(s_z) \).

\medskip \noindent \textbf{(B):}  Suppose for a contradiction that \( s_z \) prefers \( M' \) to \( M \). Again, we consider subcases (B1) and (B2)  depending on whether $p_j$ is full or undersubscribed in $M'$.

\medskip
\noindent \textbf{(B1):} \( p_j \) is full in \( M' \). Similar to case (A1), we show that we can identify a student in \( M'(l_k) \setminus M(l_k) \) who prefers \( M \) to \( M' \), and a student in \( M(l_k) \setminus M'(l_k) \) who prefers \( M' \) to \( M \), which yields a contradiction based on $l_k$'s preferences. Since \( |M'(p_j)| > |M(p_j)| \), there exists a student \( s_a \in M'(p_j) \setminus M(p_j) \). Given that \( s_i \) prefers \( p_j \) to \( M'(s_i) \) and \( p_j \) is full in \( M' \), it follows that \( l_k \) prefers \( s_a \) to \( s_i \). We also know that \( l_k \) prefers \( s_i \) to \( s_z \), with \( s_z \in M(l_k) \). Therefore, \( l_k \) prefers \( s_a \) to $s_z$. Now, if \( s_a \) prefers \( M' \) to \( M \), then \( p_j \) is undersubscribed in \( M \), and \( l_k \) would the worst student in $M(l_k)$ (namely $s_z$) to \( s_a \), which yields a contradiction to the fact that $l_k$ prefers $s_a$ to $s_z$. Thus, \( s_a \) prefers \( M \) to \( M' \). In particular, this implies that \( s_a \) prefers \( M(s_a) \) to \( p_j \), \( p_j \) is undersubscribed in \( M \), and by Lemma~\ref{lem:chap4samelecturer1}, \( l_k \) prefers \( s_a \) to each student in \( M(l_k) \setminus M'(l_k) \).

\medskip
\noindent Recall that $s_z \in M(l_k) \setminus M'(l_k)$ and prefers $M'$ to $M$. Let $M(s_z)$ be $p_z$, where $p_z \in P_k$. Let $s_{z'}$ be the worst student in $M'(l_k)$. Since $s_z$ prefers $M'(s_z)$ to $p_z$, whether $p_z$ is full or undersubscribed in $M'$, it follows that $l_k$ prefers $s_z$ to the worst student in $M'(l_k)$. Therefore $l_k$ prefers $s_z$ to $s_{z'}$.

\medskip
\noindent Since \( |M'(p_j)| > |M(p_j)| \) and \( |M(l_k)| = |M'(l_k)| \), there exists a project \( p_b \in P_k \) such that \( |M(p_b)| > |M'(p_b)| \). This implies that there exists a student \( s_b \in M(p_b) \setminus M'(p_b) \), and \( p_b \) is undersubscribed in \( M' \). Moreover, \( l_k \) prefers \( s_b \) to \( s_z \), since \( s_b \in M(l_k) \) and \( s_z \) is the worst student in \( M(l_k) \). If \( s_b \) prefers \( p_b \) to \( M'(s_b) \), then, because \( p_b \) is undersubscribed in \( M' \), it follows that \( l_k \) prefers each student in \( M'(l_k) \) to \( s_b \). In particular, \( l_k \) prefers \( s_{z'} \), the worst student in \( M'(l_k) \), to \( s_b \). Additionally, since \( l_k \) prefers \( s_z \) to \( s_{z'} \), it follows that \( l_k \) prefers \( s_z \) to \( s_b \). However, this contradicts the fact that \( s_z \) is the worst student in \( M(l_k) \), since it implies that \( l_k \) prefers \( s_z \) to another student \( s_b \) who is also assigned to \( M(l_k) \).
Therefore, we conclude that \( s_b \) prefers \( M'(s_b) \) to \( p_b \). By Lemma~\ref{lem:chap4-s-prefers-l-prefers1} (applied with \( M \) and \( M' \) swapped), since \( p_b \) is undersubscribed in \( M' \), it follows that \( l_k \) prefers \( s_b \) to each student in \( M'(l_k) \setminus M(l_k) \).

\medskip
\noindent We now show that combining the conditions where \( s_a \) prefers \( M \) to \( M' \) and \( l_k \) prefers \( s_a \) to every student in \( M(l_k) \setminus M'(l_k) \), together with the conditions where \( s_b \) prefers \( M' \) to \( M \) and \( l_k \) prefers \( s_b \) to every student in \( M'(l_k) \setminus M(l_k) \), leads to a contradiction.

\medskip
\noindent First suppose \( s_a \in M'(l_k) \setminus M(l_k) \). Then \( l_k \) prefers \( s_b \) to \( s_a \), since \( l_k \) prefers \( s_b \) to each student in \( M'(l_k) \setminus M(l_k) \). Next, suppose \( s_a \in S_k(M, M') \) where $s_a$ prefers $M$ to $M'$. By Lemma~\ref{lem:chap4samelecturer1}, there exists a student \( s_r \in M'(l_k) \setminus M(l_k) \) such that \( l_k \) prefers \( s_r \) to \( s_a \). Since \( l_k \) prefers \( s_b \) to each student in \( M'(l_k) \setminus M(l_k) \), it follows that \( l_k \) prefers \( s_b \) to \( s_r \), and thus \( l_k \) prefers \( s_b \) to \( s_a \).

\medskip
\noindent A similar argument applies to \( s_b \). Suppose \( s_b \in M(l_k) \setminus M'(l_k) \). Then \( l_k \) prefers \( s_a \) to \( s_b \), since \( l_k \) prefers \( s_a \) to each student in \( M(l_k) \setminus M'(l_k) \). On the other hand, suppose \( s_b \in S_k(M, M') \) where $s_b$ prefers $M'$ to $M$. By Lemma~\ref{lem:chap4samelecturer1}, there exists a student \( s_r \in M(l_k) \setminus M'(l_k) \) such that \( l_k \) prefers \( s_r \) to \( s_b \). Moreover, since \( l_k \) prefers \( s_a \) to each student in \( M(l_k) \setminus M'(l_k) \), it follows that \( l_k \) prefers \( s_a \) to \( s_r \), and thus \( l_k \) prefers \( s_a \) to \( s_b \). In both cases, we reach a contradiction, since \( l_k \) cannot simultaneously prefer \( s_b \) to \( s_a \) and \( s_a \) to \( s_b \). Therefore, $s_z$ prefers $M$ to $M'$, and this completes the proof.

\medskip \noindent \textbf{(B2):} \( p_j \) is undersubscribed in \( M' \). Since \( s_z \in M(l_k) \setminus M'(l_k) \), there exists some student \( s_{z'} \in M'(l_k) \setminus M(l_k) \). Since \( s_i \) prefers \( p_j \) to \( M'(s_i) \) and $p_j$ is undersubscribed in $M'$, it follows that \( l_k \) prefers each student in \( M'(l_k) \) to \( s_i \). In particular, \( l_k \) prefers \( s_{z'} \) to \( s_i \). Recall that \( s_z \) prefers \( M' \) to \( M \); let \( p_z = M(s_z) \). Whether \( p_z \) is full or undersubscribed in \( M' \), it follows from Lemma~\ref{lem:chap4-s-prefers-l-prefers1} that \( l_k \) prefers \( s_z \) to each student in \( M'(l_k) \setminus M(l_k) \). In particular, \( l_k \) prefers \( s_z \) to \( s_{z'} \).  Combining these observations, we have that \( l_k \) prefers \( s_z \) to \( s_{z'} \), and \( s_{z'} \) to \( s_i \), which implies that \( l_k \) prefers \( s_z \) to \( s_i \). This contradicts the assumption that \( l_k \) prefers \( s_i \) to \( s_z \). Hence, we conclude that \( s_z \) prefers \( M \) to \( M' \). Therefore, $s_z$ prefers $M$ to $M'$, and this completes the proof for case (B2).

\medskip
\noindent Thus, in both cases (B1) and (B2), $s_z$ prefers $M$ to $M'$. This completes the proof.
\end{proof}

\noindent The arguments in Lemmas~\ref{lem:case1-s-in-rho} and \ref{lem:case2-s-in-rho} can be extended to every student in \( \rho \), since by Definitions~\ref{def:next-project} and \ref{def:exposed-mr}, each student in \( \rho \) has a valid next student who is also in \( \rho \). Therefore, if \( s_i \in \rho \) prefers \( M \) to \( M' \), then every student \( s \in \rho \) also prefers \( M \) to \( M' \).  

\medskip
\noindent Now, suppose that \( M \) dominates \( M' \). By Lemma~\ref{lem:no-p-between-M(s_i)-and-sM(s_i)}, for each student $s_i \in \rho$, there is no stable pair that lies between their assigned projects in $M$ and $M/\rho$. Hence, it follows that \( M / \rho \) either dominates \( M' \) or is equal to \( M' \), since only the students in \( \rho \) have different projects in \( M \) and \( M / \rho \). Moreover, each of these students prefers \( M \) to \( M' \), with the possibility that \( M / \rho = M' \).  This completes the proof of Lemma~\ref{prop:s-in-rho}. In addition, this lemma immediately implies Corollary~\ref{cor:s-in-rho-all-s-in-rho}.

\medskip
\begin{corollary}
\label{cor:s-in-rho-all-s-in-rho}
Let 
\(\rho = \{(s_{0}, p_{0}), (s_{1}, p_{1}), \ldots, (s_{r-1}, p_{r-1})\}\)
be a meta-rotation of \( I \). 
If there exists a stable matching \( M' \) such that, for some pair \( (s_a, p_a) \in \rho \), student \( s_a \) prefers \( p_a \) to their project in \( M' \), then for every \( t \in \{0, \ldots, r-1\} \), student \( s_t \) prefers \( p_t \) to \( M'(s_t) \).
\end{corollary}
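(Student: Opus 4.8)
The plan is to obtain the corollary as a direct translation of Lemma~\ref{prop:s-in-rho} into the language of the meta-rotation's own pairs $(s_t, p_t)$. The only conceptual work is to bridge the hypothesis, which is phrased in terms of the projects $p_t$ recorded in $\rho$, with the matching-preference hypothesis of Lemma~\ref{prop:s-in-rho}, which is phrased in terms of a specific stable matching $M$ in which $\rho$ is exposed.

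First I would invoke the fact, noted immediately after Definition~\ref{def:exposed-mr}, that since $\rho$ is a meta-rotation of $I$ it is exposed in at least one stable matching $M \in \mathcal{M}$, and that in any such $M$ every pair $(s_t, p_t) \in \rho$ is an assigned pair appearing exactly once; in particular $M(s_t) = p_t$ for every $t \in \{0, \ldots, r-1\}$. Fix such a matching $M$. The assumption that $s_a$ prefers $p_a$ to $M'(s_a)$ is then, via $M(s_a) = p_a$, precisely the statement that $s_a$ prefers $M$ to $M'$ in the sense of student preferences over matchings. Lemma~\ref{prop:s-in-rho} now applies directly: since $s_a \in \rho$ prefers $M$ to $M'$, every student $s \in \rho$ prefers $M$ to $M'$.

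Finally I would translate this conclusion back. For each $t$, the statement that $s_t$ prefers $M$ to $M'$ means exactly that $s_t$ prefers $M(s_t) = p_t$ to $M'(s_t)$, which is the desired conclusion. I expect no genuine obstacle here; the whole argument rests on the definitional bridge $M(s_t) = p_t$, which lets the project-indexed statement of the corollary inherit the matching-indexed conclusion of Lemma~\ref{prop:s-in-rho}. The one point requiring care is ensuring that the chosen $M$ assigns each $s_t$ to the precise project $p_t$ recorded in $\rho$; this is guaranteed by the remark that each project has a unique worst student in $M$ and that $\rho$ records exactly one such student--project pair per project, so the pairs of $\rho$ coincide with the corresponding assigned pairs of $M$.
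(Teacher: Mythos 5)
Your proof is correct and follows exactly the route the paper intends: the paper derives this corollary as an immediate consequence of Lemma~\ref{prop:s-in-rho}, and your argument simply makes explicit the bridging step of fixing a stable matching $M$ in which $\rho$ is exposed so that $M(s_t)=p_t$ for all $t$. Nothing further is needed.
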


\medskip
\noindent In the following subsections, we describe a pruning step and a method for obtaining a target stable matching using meta-rotations.

\vspace{-0.1cm}
\subsubsection{Pruning step}
\label{sect:initial-pruning}
\noindent
We construct a reduced instance \( \hat{I} \) from a given {\sc spa-s} instance \( I \) as follows. 
First, apply the student-oriented algorithm to obtain the student-optimal stable matching \( M_S \) and remove all pairs that cannot appear in any stable matching. Then, apply the lecturer-oriented algorithm to compute the lecturer-optimal stable matching \( M_L \) and eliminate additional non-stable pairs. The resulting instance after both steps is the reduced instance \( \hat{I} \).

\vspace{-0.1cm}
\subsubsection{Finding a target stable matching}
\noindent
Any target stable matching in a given instance can be obtained from the student-optimal matching by successively exposing and eliminating meta-rotations. Given a {\sc spa-s} instance \( I \) and a target stable matching \( M_T \), apply the pruning step above to obtain the reduced instance \( \hat{I} \) with student-optimal matching \( M \). 
If \( M = M_T \), we are done. Otherwise, since \( M \) dominates \( M_T \), there exists a student \( s \) such that \( M(s) \neq M_T(s) \) and \( s \) prefers \( M \) to \( M_T \). 
By Lemma~\ref{lem:one-metarotation-in-M}, \( M \) has an exposed meta-rotation \( \rho \) starting at \( s \); eliminating it yields a stable matching \( M / \rho \) by Lemma~\ref{lem:M-rho-is-stable}. 
By Lemma~\ref{prop:s-in-rho}, either \( M / \rho = M_T \) or \( M / \rho \) dominates \( M_T \). 
Repeating this process i.e. identifying the exposed meta-rotation starting at a student whose project differs between the current matching and \( M_T \), and eliminating it, eventually yields \( M_T \).

\medskip
\noindent \textbf{Example:} Here, we illustrate how to identify all exposed meta-rotations and describe the transitions between stable matchings using the {\sc spa-s} instance \( I_1 \), shown in Figure~\ref{fig:instance1}. We begin by constructing the reduced instance corresponding to \( I_1 \), following the steps outlined in Section~\ref{sect:initial-pruning}. From Table~\ref{tab:instance1}, we observe that \( M_7 \) is the lecturer-optimal stable matching for \( I_1 \). In $M_7$, student \( s_1 \) is assigned to project \( p_4 \), which is the worst project they are assigned to in any stable matching. Consequently, we remove all projects that are less preferred than \( p_4 \) from \( s_1 \)'s preference list. Here, project \( p_3 \) is deleted from $s_1$'s list. Continuing this pruning process for all students yields the reduced instance for instance \( I_1 \), which is presented in Figure~\ref{fig:example1reduced}.

\begin{figure}[h]
\centering
\renewcommand{\arraystretch}{1}
\setlength{\tabcolsep}{1pt} % Reduce column spacing for compactness
\resizebox{\textwidth}{!}{ 
\begin{tabular}{p{0.3\textwidth} p{0.35\textwidth} p{0.35\textwidth}}
\hline
$s_1$: $p_1 \; p_2 \; p_4$ & $l_1$: $s_7 \; s_9 \; s_3 \; s_4 \; s_1 \; s_2 \; s_6 \; s_8 $ & $p_1$, $p_2$, $p_5$, $p_6$ \\ 

$s_2$: $p_1 \; p_4 \; p_3$ & $l_2$: $s_6 \; s_1 \; s_2 \; s_5 \; s_3 \; s_4 \; s_7 \; s_8 \;  s_9$ & $p_3$, $p_4$, $p_7$, $p_8$ \\ 

$s_3$: $p_3 \; p_1 \; p_2$ & & \\ 
$s_4$: $p_3 \; p_2 \; p_1$ & & \\ 
$s_5$: $p_4 \; p_3$ & & \\ 
$s_6$: $p_5 \; p_2 \; p_7$ & & \\ 
$s_7$: $p_7 \; p_3 \; p_6$ & & \\ 
$s_8$: $p_6 \; p_8$ & \multicolumn{2}{l}{\textbf{Project capacities:} $c_1 = c_3 = 2$; $\forall j \in \{2, 4, 5, 6, 7, 8\}, \, c_j = 1$} \\ 
$s_9$: $p_8 \; p_2 \;$ & \multicolumn{2}{l}{\textbf{Lecturer capacities:} $d_1 = 4$, $d_2 = 5$}  \\ 
\hline
\end{tabular}
}
\caption{Reduced preference list for $I_1$} 
\label{fig:example1reduced}
\end{figure}

\noindent Table~\ref{tab:m1} shows, for each student \( s_i \) in \( M_1 \), the next project \( p \) (denoted \( s_{M_1}(s_i) \)) and the student \( \mathrm{next}_{M_1}(s_i) \), defined as either the worst student in \( M_1(p) \) if \( p \) is full in \( M \), or the worst student in \( M_1(l_k) \) if \( p \) is undersubscribed in \( M \). As an illustration, consider \(s_1\): \(p_2\) is the first project after \(p_1\) such that \(p_2\) is undersubscribed in \(M_1\) and \(l_1\) (who offers \(p_1\)) prefers \(s_1\) to the worst student in \(M_1(l_1)\), namely \(s_8\). Consequently, \(\mathrm{next}_{M_1}(s_1) = s_8\). The remaining entries can be verified in a similar manner. We observe that the meta-rotation \( \rho_1 = \{(s_8,p_6), (s_9,p_8)\} \) is the only exposed meta-rotation in \(M_1\). Moreover, \(s_8\) is the worst student in \(p_6\) and \(\mathrm{next}_{M_1}(s_8) = s_9\). Likewise, \(s_9\) is the worst student in \(p_8\), and \(\mathrm{next}_{M_1}(s_9) = s_8\). Eliminating $\rho_1$ from $M_1$ gives $M_2$, that is, $M_1/\rho_1 = M_2$.\\

\begin{table}[h]
\centering
\resizebox{\textwidth}{!}{%
\begin{tabular}{|c|c|c|c|c|c|c|c|c|c|}
\hline
\textbf{$(s_i,p_j)$} & $(s_1,p_1)$ & $(s_2,p_1)$ & $(s_3,p_3)$ & $(s_4,p_3)$ & $(s_5,p_4)$ & $(s_6,p_5)$ & $(s_7,p_7)$ & $(s_8,p_6)$ & $(s_9,p_8)$ \\
\hline
\textbf{$s_{M_1}(s_i)$} & $p_2$ & $p_4$ & $p_1$ & $p_2$ & $p_3$ & $p_2$ & $p_6$ & $p_8$ & $p_2$ \\
\hline
\textbf{$next_{M_1}(s_i)$} & $s_8$ & $s_5$ & $s_2$ & $s_8$ & $s_4$ & $s_8$ & $s_8$ & $s_9$ & $s_8$ \\
\hline
\end{tabular}%
}
\caption{$s_{M_1}(s_i)$ and $next_{M_1}(s_i)$ for each student $s_i$ in $M_1$}
\label{tab:m1}
\end{table}

\noindent Similarly, Table \ref{tab:m2} shows \(s_{M_2}(s_i)\) and \(\mathrm{next}_{M_2}(s_i)\) for each student \(s_i\) in \(M_2\). In $M_2$, there are two exposed meta-rotations namely $\rho_2 = \{(s_6,p_5), (s_7,p_7)\}$ and $\rho_3 = \{(s_2,p_1), (s_5,p_4), (s_4,p_3)\}$. 
$M_2/\rho_2 = M_3$ and  $M_2/\rho_3 = M_4$.\\ 

\begin{table}[h]
\centering
\resizebox{\textwidth}{!}{%
\begin{tabular}{|c|c|c|c|c|c|c|c|c|c|}
\hline
\textbf{$(s_i,p_j)$} & $(s_1,p_1)$ & $(s_2,p_1)$ & $(s_3,p_3)$ & $(s_4,p_3)$ & $(s_5,p_4)$ & $(s_6,p_5)$ & $(s_7,p_7)$ & $(s_8,p_8)$ & $(s_9,p_2)$ \\
\hline
\textbf{$s_{M_2}(s_i)$} & $p_4$ & $p_4$ & $p_1$ & $p_1$ & $p_3$ & $p_7$ & $p_6$ & $-$ & $-$ \\
\hline
\textbf{$next_{M_2}(s_i)$} & $s_5$ & $s_5$ & $s_2$ & $s_2$ & $s_4$ & $s_7$ & $s_6$ & $-$ & $-$ \\
\hline
\end{tabular}%
}
\caption{$s_{M_2}(s_i)$ and $next_{M_2}(s_i)$ for each student $s_i$ in $M_2$}
\label{tab:m2}
\end{table}

\noindent Let $M_3$ be the next stable matching obtained by eliminating $\rho_2$ from $M_2$. Table \ref{tab:m3} shows \(s_{M_3}(s_i)\) and \(\mathrm{next}_{M_3}(s_i)\) for each student \(s_i\) in \(M_3\). In $M_3$, there is one exposed meta-rotation namely $\rho_3 = \{(s_2,p_1), (s_5,p_4), (s_4,p_3)\}$. Also, $M_3/\rho_3 = M_5$.\\

\begin{table}[h]
\centering
\resizebox{\textwidth}{!}{%
\begin{tabular}{|c|c|c|c|c|c|c|c|c|c|}
\hline
\textbf{$(s_i,p_j)$} & $(s_1,p_1)$ & $(s_2,p_1)$ & $(s_3,p_3)$ & $(s_4,p_3)$ & $(s_5,p_4)$ & $(s_6,p_7)$ & $(s_7,p_6)$ & $(s_8,p_8)$ & $(s_9,p_2)$ \\
\hline
\textbf{$s_{M_3}(s_i)$} & $p_4$ & $p_4$ & $p_1$ & $p_1$ & $p_3$ & $-$ & $-$ & $-$ & $-$ \\
\hline
\textbf{$next_{M_3}(s_i)$} & $s_5$ & $s_5$ & $s_2$ & $s_2$ & $s_4$ & $-$ & $-$ & $-$ & $-$ \\
\hline
\end{tabular}%
}
\caption{$s_{M_3}(s_i)$ and $next_{M_3}(s_i)$ for each student $s_i$ in $M_3$}
\label{tab:m3}
\end{table}

\noindent Table \ref{tab:m5} shows \(s_{M_5}(s_i)\) and \(\mathrm{next}_{M_5}(s_i)\) for each student \(s_i\) in \(M_5\). Clearly, the meta-rotation $\rho_4 = \{(s_1,p_1),(s_2,p_4),(s_3,p_3)\}$ is exposed in $M_5$, and $M_5/\rho_4 = M_7$.
\begin{table}[h]
\centering
\resizebox{\textwidth}{!}{%
\begin{tabular}{|c|c|c|c|c|c|c|c|c|c|}
\hline
\textbf{$(s_i,p_j)$} & $(s_1,p_1)$ & $(s_2,p_4)$ & $(s_3,p_3)$ & $(s_4,p_1)$ & $(s_5,p_3)$ & $(s_6,p_7)$ & $(s_7,p_6)$ & $(s_8,p_8)$ & $(s_9,p_2)$ \\
\hline
\textbf{$s_{M_5}(s_i)$} & $p_4$ & $p_3$ & $p_1$ & $-$ & $-$ & $-$ & $-$ & $-$ & $-$ \\
\hline
\textbf{$next_{M_5}(s_i)$} & $s_2$ & $s_3$ & $s_1$ & $-$ & $-$ & $-$ & $-$ & $-$ & $-$ \\
\hline
\end{tabular}%
}
\caption{$s_{M_5}(s_i)$ and $next_{M_5}(s_i)$ for each student $s_i$ in $M_5$}
\label{tab:m5}
\end{table}

\medskip
\noindent We have identified a total of four meta-rotations in instance \( I_1 \): \(\rho_1\), \(\rho_2\), \(\rho_3\), and \(\rho_4\), each of which is exposed in at least one stable matching of \( I_1 \). We also observe that a meta-rotation can be exposed in multiple stable matchings, and that a single stable matching may contain more than one exposed meta-rotation. For example, the meta-rotation \(\rho_2 = \{(s_6, p_5), (s_7, p_7)\}\) is exposed in \( M_2 \), \( M_4 \), and \( M_6 \). Furthermore, the stable matching \( M_2 \) contains both \(\rho_2\) and \(\rho_3\) as exposed meta-rotations.

\section{Meta-rotation poset}
\label{sect:meta-rotationposet}
In this section, we show that for any {\sc spa-s} instance \( I \), we can define a partial order on its set of meta-rotations, forming a partially ordered set (poset), such that each stable matching corresponds to a unique closed subset of this poset. Given a {\sc spa-s} instance \( I \), let \( \mathcal{M} \) denote the set of stable matchings in \( I \), and let \( R \) be the set of meta-rotations that are exposed in some stable matching in \( \mathcal{M} \). For any two meta-rotations \( \rho_1, \rho_2 \in R \), we define a relation \( \prec \) such that \( \rho_1 \prec \rho_2 \) if every stable matching in which \( \rho_2 \) is exposed can be obtained only after $\rho_1$ has been eliminated, and there is no other meta-rotation \( \rho' \in R \setminus \{ \rho_1, \rho_2 \} \) such that \( \rho_1 \prec \rho' \prec \rho_2 \). In this case, we say that \( \rho_1 \) is an \emph{immediate predecessor} of \( \rho_2 \).

\begin{definition}[Meta-rotation poset]
\label{def:metarotation-poset}
Let \( R \) be the set of meta-rotations in a {\sc spa-s} instance \( I \), and let \( \prec \) be the immediate predecessor relation on \( R \). We define a relation \( \leq \) on \( R \) such that \( \rho_1 \leq \rho_2 \) if and only if either \( \rho_1 = \rho_2 \), or there exists a finite sequence of meta-rotations \( \rho_1 \prec \rho_{u} \prec \cdots \prec \rho_{v} \prec \rho_2 \). The pair \( (R, \leq) \) is called the \emph{meta-rotation poset} for instance \( I \). 
\end{definition}

\begin{proposition}
\label{proposition-metarotation-poset}
Let \( R \) be the set of meta-rotations in a given {\sc spa-s} instance \( I \), and let \( \leq \) be the relation on \( R \) defined as above. Then \( (R, \leq) \) is a partially ordered set.
\end{proposition}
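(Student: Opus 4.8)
The plan is to verify the three defining properties of a partial order for the relation $\leq$, which by Definition~\ref{def:metarotation-poset} is precisely the reflexive–transitive closure of the immediate-predecessor relation $\prec$. Reflexivity is immediate, since the first clause of the definition gives $\rho \leq \rho$ for every $\rho \in R$. Transitivity is also routine: if $\rho_1 \leq \rho_2$ and $\rho_2 \leq \rho_3$ are witnessed by finite $\prec$-chains (treating the equality cases separately), then concatenating the two chains produces a single $\prec$-chain from $\rho_1$ to $\rho_3$, so $\rho_1 \leq \rho_3$. The entire substance of the statement therefore lies in \textbf{antisymmetry}, and I would reduce it to the cleaner claim that the relation $\prec$ contains no directed cycle. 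Indeed, if $\rho_1 \leq \rho_2$ and $\rho_2 \leq \rho_1$ with $\rho_1 \neq \rho_2$, concatenating the two witnessing chains yields a $\prec$-cycle $\rho_1 \prec \cdots \prec \rho_2 \prec \cdots \prec \rho_1$; ruling such cycles out forces $\rho_1 = \rho_2$.

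To show $\prec$ is acyclic I would introduce a strictly monotone potential built from the domination order. For a stable matching $M$, let $w(M) = \sum_{s_i} \mathrm{rank}_{s_i}(M(s_i))$, where $\mathrm{rank}_{s_i}(p)$ is the position of project $p$ on $s_i$'s preference list and the sum is over all assigned students. By Theorem~\ref{thm:unpopular-students}(i) the same students are assigned in every stable matching, so these sums are genuinely comparable. If $M$ dominates $M'$ (that is, $M \preceq M'$) then every assigned student weakly prefers their project in $M$, hence $w(M) \le w(M')$; and whenever $M \neq M'$ at least one student strictly prefers $M$, so the inequality is strict, $w(M) < w(M')$. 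Combining this with Lemma~\ref{lem:M-rho-is-stable}, which states that $M$ dominates $M/\rho$ (with $M \neq M/\rho$, since every student of $\rho$ strictly worsens), I obtain the key monotonicity: each elimination of an exposed meta-rotation strictly increases $w$.

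With this in hand I would attach a numerical label to each meta-rotation. Since every $\rho \in R$ is exposed in at least one stable matching, define $\alpha(\rho) = \min\{\, w(M) : \rho \text{ is exposed in } M \,\}$. The central claim is that $\rho_1 \prec \rho_2$ implies $\alpha(\rho_1) < \alpha(\rho_2)$. To prove it, pick a matching $M_2$ attaining $\alpha(\rho_2) = w(M_2)$ in which $\rho_2$ is exposed. By the defining semantics of $\prec$, $M_2$ can be reached from the student-optimal matching $M_S$ only after $\rho_1$ has been eliminated; hence along such an elimination sequence there is a stable matching $M_1$ in which $\rho_1$ is exposed, immediately followed by $M_1/\rho_1$, after which further eliminations descend to $M_2$. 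Applying Lemma~\ref{lem:M-rho-is-stable} at each step, $M_1$ dominates $M_1/\rho_1$, which in turn dominates or equals each successive matching down to $M_2$; by transitivity of domination $M_1$ dominates $M_2$, and since $\rho_1$ is actually eliminated we have $M_1 \neq M_2$. Hence $w(M_1) < w(M_2)$ by the previous paragraph, giving $\alpha(\rho_1) \le w(M_1) < w(M_2) = \alpha(\rho_2)$. A directed cycle $\rho_0 \prec \rho_1 \prec \cdots \prec \rho_{m-1} \prec \rho_0$ would then force the impossible chain $\alpha(\rho_0) < \alpha(\rho_1) < \cdots < \alpha(\rho_0)$; thus $\prec$ is acyclic, $\leq$ is antisymmetric, and $(R,\leq)$ is a poset.

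I expect the monotonicity claim $\alpha(\rho_1) < \alpha(\rho_2)$ to be the main obstacle, because it is the only step that genuinely uses the meaning of $\prec$ rather than formal closure properties. The delicate point is translating the phrase ``every stable matching in which $\rho_2$ is exposed can be obtained only after $\rho_1$ has been eliminated'' into the concrete existence of a matching $M_1$ exposing $\rho_1$ that strictly dominates $M_2$; this rests on the facts, established earlier, that every stable matching is reachable from $M_S$ by a sequence of meta-rotation eliminations (Lemmas~\ref{lem:one-metarotation-in-M}, \ref{lem:M-rho-is-stable}, and~\ref{prop:s-in-rho}) and that each such elimination strictly descends in the domination order. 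Once this reachability-and-descent picture is pinned down, the potential $\alpha$ does the rest mechanically.
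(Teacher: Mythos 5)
Your proposal is correct, and it follows the paper's overall decomposition: reflexivity and transitivity are dispatched formally from the definition of $\leq$ as the reflexive--transitive closure of $\prec$, and the entire content lies in antisymmetry, which both you and the paper reduce to the impossibility of a $\prec$-cycle. Where you diverge is in \emph{how} that impossibility is established. The paper argues directly from the semantics of $\prec$: the two witnessing chains would mean that $\rho_1$ cannot be exposed until $\rho_2$ has been eliminated and vice versa, so neither could ever be exposed, contradicting the fact that every meta-rotation in $R$ is exposed in some stable matching. This is short but leaves implicit exactly why a circular prerequisite is fatal. You instead build an explicit strictly monotone potential: $w(M)$ summing students' ranks, which strictly increases under every elimination because $M$ strictly dominates $M/\rho$ (Lemma~\ref{lem:M-rho-is-stable}), and then $\alpha(\rho)=\min\{w(M):\rho\text{ exposed in }M\}$, which you show is strictly increasing along $\prec$. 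This buys rigour at the cost of extra machinery: it makes explicit the two facts the paper's argument silently relies on (reachability of every stable matching from $M_S$ by elimination sequences, and strict descent in the domination order at each step), and it yields acyclicity of the full precedence relation, not just of the covering relation $\prec$. Both arguments ultimately rest on the same structural results (Lemmas~\ref{lem:one-metarotation-in-M}, \ref{lem:M-rho-is-stable}, and~\ref{prop:s-in-rho}), so neither is more general in substance, but your version would survive scrutiny of the paper's somewhat informal ``neither can be exposed without the other'' step.
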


\begin{proof}
We will show that the relation \( \leq \) on \( R \) is (i) reflexive, (ii) antisymmetric, and (iii) transitive.

\begin{enumerate}[label = (\roman*)]
    \item \textbf{Reflexivity:} Let \( \rho \in R \). By definition, every element is related to itself. Hence, \( \rho \leq \rho \), and \( \leq \) is reflexive.

    \item \textbf{Antisymmetry:} Suppose there exist \( \rho_1, \rho_2 \in R \) such that \( \rho_1 \leq \rho_2 \) and \( \rho_2 \leq \rho_1 \). We claim that \( \rho_1 = \rho_2 \). Suppose, for contradiction, that \( \rho_1 \neq \rho_2 \). By the definition of \( \leq \), there exists a sequence of meta-rotation eliminations \( \rho_1 \prec \rho_{u} \prec \cdots \prec \rho_2 \), and another sequence \( \rho_2 \prec \rho_{v} \prec \cdots \prec \rho_1 \). Now, consider any stable matching in which \( \rho_1 \) is exposed. From the second sequence, we conclude that \( \rho_2 \) must have been eliminated before \( \rho_1 \) can be exposed. But from the first sequence, \( \rho_1 \) must be eliminated before \( \rho_2 \) can be exposed. Together, this implies that neither \( \rho_1 \) nor \( \rho_2 \) can be exposed without the other having already been eliminated — a contradiction. Therefore, our assumption must be false, and we conclude that \( \rho_1 = \rho_2 \). Hence, \( \leq \) is antisymmetric.

    \item \textbf{Transitivity:} Let \( \rho_1, \rho_2, \rho_3 \in R \) such that \( \rho_1 \leq \rho_2 \) and \( \rho_2 \leq \rho_3 \). We show that \( \rho_1 \leq \rho_3 \). By the definition of \( \leq \), either \( \rho_1 = \rho_2 \) or there exists a finite sequence of meta-rotations \( \rho_1 \prec \rho_{u} \prec \cdots \prec \rho_2 \), and similarly, either \( \rho_2 = \rho_3 \) or there exists a finite sequence \( \rho_2 \prec \rho_{v} \prec \cdots \prec \rho_3 \). If \( \rho_1 = \rho_2 \), then \( \rho_1 \leq \rho_3 \) follows directly from \( \rho_2 \leq \rho_3 \). If \( \rho_2 = \rho_3 \), then \( \rho_1 \leq \rho_3 \) follows from \( \rho_1 \leq \rho_2 \).

Otherwise, we can combine the two sequences of \( \prec \) relations to obtain:
\[
\rho_1 \prec \rho_{u} \prec \cdots \prec \rho_2 \prec \rho_{v} \prec \cdots \prec \rho_3,
\]
which is itself a finite sequence of meta-rotation eliminations from \( \rho_1 \) to \( \rho_3 \). Therefore, \( \rho_1 \leq \rho_3 \) by definition of \( \leq \), and so the relation is transitive.
\end{enumerate}
\end{proof}

\begin{definition}[\textbf{Closed subset}]
\label{def:closed-subset}
    A subset of \( (R, \leq) \) is said to be \textit{closed} if, for every \( \rho \) in the subset, all \( \rho' \in R \) such that \( \rho' \leq \rho \) are also contained in the subset.
\end{definition}

\noindent Finally, we present Lemma~\ref{lem:no-pair-in-two-metarotation}, which states that no pair \((s_i, p_j)\) belongs to more than one meta-rotation in \( I \). For the remainder of the paper, we denote the meta-rotation poset \( (R, \leq) \) of \( I \) by \( \Pi(I) \).

\begin{lemma}
\label{lem:no-pair-in-two-metarotation}
    Let $I$ be a given {\sc spa-s} instance. No pair $(s_i,p_j)$ can belong to two different meta-rotations in $I$.
\end{lemma}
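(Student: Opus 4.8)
The plan is to argue by contradiction: suppose the pair $(s_i,p_j)$ lies in two distinct meta-rotations $\rho$ and $\rho'$, exposed in stable matchings $M$ and $M'$ respectively. By the remark following Definition~\ref{def:exposed-mr}, in any matching exposing a meta-rotation each listed student is the worst student of its project and appears exactly once; in particular $M(s_i)=M'(s_i)=p_j$, with $s_i$ the worst student of $p_j$ in both. Since a meta-rotation is a cyclic sequence generated by the successor map $(s_t,p_t)\mapsto\bigl(\mathit{next}_M(s_t),\,M(\mathit{next}_M(s_t))\bigr)$, it suffices to prove that this successor is \emph{independent of the exposing matching}: once the pair following $(s_i,p_j)$ is forced to coincide in $\rho$ and $\rho'$, the same argument applies to that pair, and iterating around the finite cycle until we return to $(s_i,p_j)$ yields $\rho=\rho'$, the desired contradiction. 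Note that the purely intra-matching case is already handled by Corollary~\ref{cor:each-stud-in-one-rotation}(iii), since two meta-rotations exposed in the \emph{same} $M$ and both containing $(s_i,p_j)$ would place $s_i$ in two exposed meta-rotations of $M$; thus the whole difficulty is the cross-matching determinacy of the successor.

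First I would pin down the \emph{next project}. Writing $p:=s_M(s_i)$ and letting $l$ be the lecturer offering $p$, the pair $(s_i,p)$ occurs in $M/\rho$, which is stable by Lemma~\ref{lem:M-rho-is-stable}; hence $(s_i,p)$ is a stable pair. By Lemma~\ref{lem:no-p-between-M(s_i)-and-sM(s_i)}, no project strictly between $p_j$ and $p$ on $s_i$'s list forms a stable pair with $s_i$. Therefore $p$ is exactly the first project after $p_j$ on $s_i$'s preference list that forms a stable pair with $s_i$ — a quantity depending only on $(s_i,p_j)$ and the set of stable pairs of $I$, not on $M$. Applying the identical reasoning in $M'$ gives $s_{M'}(s_i)=p$ as well, so the destination project of $s_i$ is common to $\rho$ and $\rho'$.

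The core of the proof is to show that the \emph{next student} — the student displaced when $s_i$ enters $l$ — is also common, i.e. $\mathit{next}_M(s_i)=\mathit{next}_{M'}(s_i)$. Here the obstacle specific to {\sc spa-s} surfaces: by Definition~\ref{def:next-project} this displaced student is $w_M(p)$ when $p$ is full in $M$, but $w_M(l)$ when $p$ is undersubscribed (with $l$ full), and, unlike in {\sc hr}, the occupancy of $p$ may differ between $M$ and $M'$ whenever $l$ is full — exactly the phenomenon illustrated by $p_5$ in the running example. My plan is therefore a case analysis on whether $p$ is full or undersubscribed in each of $M$ and $M'$. In every configuration I would compare the two stable matchings at the lecturer $l$ using the structural tools of Section~\ref{sect:struct-results-stablematchings} — principally Lemmas~\ref{lem:chap4samelecturer1} and~\ref{lem:chap4-s-prefers-l-prefers1}, together with $|M(l)|=|M'(l)|$ from Theorem~\ref{thm:unpopular-students} — to show that assuming $\mathit{next}_M(s_i)\neq\mathit{next}_{M'}(s_i)$ produces a student who, paired with $p$ (or with another project of $l$), blocks $M$ or $M'$. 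Since $l$ prefers $s_i$ to each candidate displaced student, the displaced student is forced to be the unique $l$-worst student over the relevant stable assignments to $l$, giving a common successor pair.

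I expect this last step — ruling out, via blocking pairs, the possibility that $s_i$ displaces different students in $M$ and $M'$ when the status of $p$ differs — to be the main obstacle, as it is precisely where the {\sc spa-s} structure departs from {\sc hr} and where the project-versus-lecturer occupancy interplay must be controlled. With the successor established as matching-independent, iterating it from the common pair $(s_i,p_j)$ reconstructs an identical cyclic sequence in $\rho$ and $\rho'$; since both are finite cycles through $(s_i,p_j)$, they coincide, contradicting $\rho\neq\rho'$ and completing the proof.
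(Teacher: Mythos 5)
Your same-matching case (via Corollary~\ref{cor:each-stud-in-one-rotation}) and your observation that the \emph{next project} is matching-independent --- $(s_i,s_M(s_i))$ is a stable pair because it lies in $M/\rho$, and Lemma~\ref{lem:no-p-between-M(s_i)-and-sM(s_i)} kills every project strictly between $M(s_i)$ and $s_M(s_i)$, so $s_M(s_i)$ is just the first stable partner of $s_i$ after $p_j$ --- are both correct and the latter is a nice self-contained argument. However, the step you yourself identify as "the core of the proof" and "the main obstacle", namely that $\mathit{next}_M(s_i)=\mathit{next}_{M'}(s_i)$ when $M$ and $M'$ are different exposing matchings, is never actually proved: you announce a case analysis on the occupancy of $p$ in each of $M$ and $M'$ and a plan to manufacture blocking pairs from Lemmas~\ref{lem:chap4samelecturer1} and~\ref{lem:chap4-s-prefers-l-prefers1}, but no such argument is carried out. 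This is precisely the point where the {\sc spa-s} complications live (the displaced student is $w_M(p)$ or $w_M(l)$ depending on whether $p$ is full, and that status can differ between $M$ and $M'$), so the proposal as written has a genuine gap at its central step. There is also a smaller unaddressed point in your induction around the cycle: even granting a common displaced student $s'$, the pair entered into the rotation is $(s',M(s'))$ in $\rho$ and $(s',M'(s'))$ in $\rho'$, and you would still need $M(s')=M'(s')$ for the two cycles to consist of identical pairs; this is not automatic when $s_M(s_i)$ is undersubscribed and $s'$ is the worst student of the lecturer rather than of the project.

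For comparison, the paper avoids local successor determinacy entirely. It picks a pair $(s',p')\in\rho_1\setminus\rho_2$, eliminates $\rho_2$ from its exposing matching $M_2$ to obtain the stable matching $M^*=M_2/\rho_2$, and then observes that $s_i$ (being in $\rho_2$) strictly prefers $p_j$ to $M^*(s_i)$ while $s'$ does not prefer $p'$ to $M^*(s')$ (using Lemma~\ref{lem:no-p-between-M(s_i)-and-sM(s_i)} to rule out $p'$ sitting between $M_2(s')$ and $M^*(s')$). This violates the all-or-nothing property of Corollary~\ref{cor:s-in-rho-all-s-in-rho}, which says that if one student of $\rho_1$ prefers their rotation project to their assignment in a stable matching then all of them do. If you want to salvage your route, you would need to supply the full occupancy case analysis you sketched; otherwise the cleaner path is the paper's global argument through Corollary~\ref{cor:s-in-rho-all-s-in-rho}.
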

\begin{proof}
Let $I$ be a given {\sc spa-s} instance. Suppose, for contradiction, that a pair \( (s_i, p_j) \) appears in two different meta-rotations \( \rho_1 \) and \( \rho_2 \), i.e., \( (s_i, p_j) \in \rho_1 \cap \rho_2 \) and \( \rho_1 \ne \rho_2 \). Since the meta-rotations are distinct, there exists at least one pair \( (s', p') \in \rho_1 \setminus \rho_2 \). We consider cases (A) and (B), depending on whether \( \rho_1 \) and \( \rho_2 \) are exposed in the same stable matching or in different ones.

\vspace{0.1cm}
\noindent \textbf{Case (A):} \( \rho_1 \) and \( \rho_2 \) are both exposed in the same stable matching \( M \).
\noindent Then, \( (s_i, p_j) \in M \). Eliminating \( \rho_2 \) from \( M \) yields a new stable matching \( M^* = M / \rho_2 \), where each student in \( \rho_2 \) is assigned to a less preferred project. So, \( s_i \) prefers \( p_j \) to \( M^*(s_i) \). Let \( M_L \) be the lecturer-optimal stable matching. Then either \( M^* = M_L \), or \( M^* \) dominates \( M_L \). In either case, it follows that \( s_i \) is assigned to different projects in \( M \) and \( M_L \). By Corollary~\ref{cor:each-stud-in-one-rotation}, any student who is assigned to different projects in \( M \) and \( M_L \) is involved in at most one exposed meta-rotation of \( M \). Since \( s_i \in \rho_2 \), and \( \rho_2 \) is exposed in \( M \), then \( s_i \) cannot also be in \( \rho_1 \), contradicting the assumption that \( (s_i, p_j) \in \rho_1 \cap \rho_2 \).

\vspace{0.2cm}
\noindent \textbf{Case (B):} Suppose \( \rho_1 \) and \( \rho_2 \) are exposed in different stable matchings. Let \( M_1 \) be a stable matching in which \( \rho_1 \) is exposed, and let \( M_2 \) be a stable matching in which \( \rho_2 \) is exposed. Recall that \( (s_i, p_j) \in \rho_1 \cap \rho_2 \), and \( (s', p') \in \rho_1 \setminus \rho_2 \). Since \( \rho_2 \) is exposed in \( M_2 \), it follows that \( M_2(s_i) = p_j \).
Moreover, $s'$ is assigned in $M_2$. Suppose that \( s' \) prefers \( p' \) to \( M_2(s') \). Then by Corollary~\ref{cor:s-in-rho-all-s-in-rho}, since both $(s_i,p_j)$ and $(s',p')$ are in $\rho_1$, then \( s_i \) also prefers \( p_j \) to \( M_2(s_i) \); however, this contradicts the fact that \( M_2(s_i) = p_j \). Hence, \( s' \) either prefers \( M_2(s') \) to \( p' \), or $M_2(s') = p'$. Let \( M_2(s') = p_x \), and let \( M^* \) be the stable matching obtained by eliminating \( \rho_2 \) from \( M_2 \). We consider subcases (B1) and (B2) depending on whether $(s',p_x) \in \rho_2$. 

\vspace{0.2cm}
\noindent \textbf{Case (B1):} \( (s', p_x) \in \rho_2 \). Since \( (s', p') \notin \rho_2 \), we have that \( p_x \neq p' \) and \( s' \) prefers \( p_x \) to \( p' \). After eliminating \( \rho_2 \), \( s_i \) is worse off in \( M^* \) than in $M_2$, i.e., \( s_i \) prefers \( p_j \) to \( M^*(s_i) \). Meanwhile, \( s' \) either becomes assigned to \( p' \) (that is, \( M^*(s') = p' \)), or \( s' \) prefers \( p_x \) to \( M^*(s') \), and prefers \( M^*(s') \) to \( p' \). We note that \( s' \) does not prefer \( p' \) to \( M^*(s') \), since by Lemma~\ref{lem:no-p-between-M(s_i)-and-sM(s_i)}, if \( p' \) lies between \( p_x \) and \( M^*(s') \) on the preference list of \( s' \), then \( (s', p') \) is not a stable pair. This means that \( (s', p') \) cannot be in \( \rho_1 \). Thus, \( s' \) does not prefer \( p' \) to \( M^*(s') \), while \( s_i \) prefers \( p_j \) to \( M^*(s_i) \). Thus, one student (namely $s_i$) in $\rho_1$ prefers their project in $\rho_1$ to their assignment in $M^*$, while another student (namely $s'$) does not, contradicting Corollary~\ref{cor:s-in-rho-all-s-in-rho}.

\vspace{0.2cm}
\noindent \textbf{Case (B2):} \( (s', p_x) \notin \rho_2 \). Then $s'$ remains assigned to $p_x$ in $M^*$, that is, \( M^*(s') = p_x \). Recall that either \( s' \) prefers \( p_x \) to \( p' \) or \( p_x = p' \). By Corollary~\ref{cor:s-in-rho-all-s-in-rho}, since $(s_i,p_j) \in \rho_1$ and $s_i$ prefers $p_j$ to $M^*(s_i)$ then $s'$ should prefer $p'$ to $M^*(s')$, a contradiction. 

\vspace{0.1cm}
\noindent Hence, no pair belongs to two different meta-rotations in $I$.
\end{proof}

\noindent We now present a nice structural relationship between the closed subsets of $\Pi(I)$ and the stable matchings of $I$.

\begin{theorem}
Let $I$ be a {\sc spa-s} instance. There is a one-to-one correspondence between the set of stable matchings in $I$ and the closed subsets of the meta-rotation poset $
\Pi({I})$ of $I$.
\end{theorem}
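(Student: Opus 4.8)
The plan is to establish the bijection by exhibiting explicit maps in both directions and showing they are mutual inverses. In one direction, I would map each stable matching $M$ to the set of meta-rotations that must be eliminated to reach $M$ from the student-optimal matching $M_S$; call this set $C(M)$. In the other direction, I would map each closed subset $C \subseteq \Pi(I)$ to a stable matching $\Phi(C)$ obtained by starting from $M_S$ and eliminating the meta-rotations in $C$ in an order consistent with the poset (a linear extension). The bulk of the work is checking that $C(M)$ is well-defined and closed, that $\Phi(C)$ is a well-defined stable matching independent of the chosen elimination order, and that the two maps invert each other.

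\textbf{Defining the forward map.} Given a stable matching $M$, I would invoke the ``Finding a target stable matching'' procedure: starting from $M_S$, repeatedly identify an exposed meta-rotation (via Lemma~\ref{lem:one-metarotation-in-M}) whose elimination keeps the current matching dominating $M$, and eliminate it. By Lemma~\ref{prop:s-in-rho} each elimination either reaches $M$ or produces a matching still dominating $M$, and since each student strictly worsens on the eliminated pairs, the process terminates at $M$. Let $C(M)$ be the set of meta-rotations eliminated along the way. I would then argue $C(M)$ is \emph{closed}: if $\rho \in C(M)$ and $\rho' \prec \rho$, then by the definition of the immediate-predecessor relation $\prec$, $\rho$ cannot be exposed (and hence cannot have been eliminated en route to $M$) until $\rho'$ has already been eliminated, so $\rho' \in C(M)$ as well.

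\textbf{Defining the inverse map and independence of order.} For a closed subset $C$, I would define $\Phi(C)$ by fixing any linear extension of $(R,\le)$ restricted to $C$ and eliminating the meta-rotations of $C$ in that order, starting from $M_S$; closedness guarantees that at each step the next meta-rotation to eliminate is actually exposed (all its predecessors, which lie in $C$, have already been removed). Lemma~\ref{lem:M-rho-is-stable} ensures every intermediate matching is stable, so $\Phi(C)$ is a stable matching. The crucial point is that $\Phi(C)$ does not depend on the chosen linear extension; here I would rely on Lemma~\ref{lem:no-pair-in-two-metarotation}, which guarantees each student--project pair lies in at most one meta-rotation, so the net effect on each student is simply the composition of the (disjoint) reassignments dictated by the meta-rotations of $C$, independent of ordering.

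\textbf{Mutual inversion.} To close the argument I would show $\Phi(C(M)) = M$, which is immediate since $C(M)$ was constructed precisely as an elimination sequence from $M_S$ to $M$, and $C(\Phi(C)) = C$. The latter is the delicate direction: I must show that the forward procedure applied to $\Phi(C)$ recovers exactly the meta-rotations of $C$, neither more nor fewer. The inclusion $C \subseteq C(\Phi(C))$ follows because each $\rho \in C$ genuinely moves some student to a less-preferred project strictly between their $M_S$- and $\Phi(C)$-assignments (Lemma~\ref{lem:no-p-between-M(s_i)-and-sM(s_i)} rules out intermediate stable pairs, so the worsening is realised by eliminating $\rho$); the reverse inclusion uses that any meta-rotation eliminated by the procedure must worsen some student relative to $\Phi(C)$, which cannot happen for $\rho \notin C$. \textbf{The main obstacle} I anticipate is precisely this order-independence of $\Phi(C)$ together with the fact that a pair's presence in a meta-rotation is well-defined regardless of which stable matching exposes it; this is exactly what Lemma~\ref{lem:no-pair-in-two-metarotation} and Corollary~\ref{cor:s-in-rho-all-s-in-rho} are designed to supply, and I would lean on them heavily to make the counting of eliminated meta-rotations match $C$ exactly. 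Once these pieces are in place, the correspondence is the one guaranteed abstractly by Birkhoff's Representation Theorem, with $\Pi(I)$ playing the role of the underlying poset.
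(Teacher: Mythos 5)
Your proposal is correct and follows essentially the same route as the paper: both directions of the correspondence are realised by the same two constructions (eliminating a closed subset from the student-optimal matching, and recording the meta-rotations eliminated en route to a target matching), supported by the same lemmas (Lemmas~\ref{lem:one-metarotation-in-M}, \ref{lem:M-rho-is-stable}, \ref{prop:s-in-rho} and~\ref{lem:no-pair-in-two-metarotation}). The only difference is presentational: you conclude bijectivity by verifying the two maps are mutual inverses (and are commendably explicit about order-independence of the elimination), whereas the paper concludes it by showing each map is injective.
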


\begin{proof}
Let \( I \) be a given {\sc spa-s} instance, and let \( R \) denote the set of all meta-rotations in \( I \). First, we show that each closed subset of meta-rotations in \( \Pi(I) \) corresponds to exactly one stable matching of \( I \). Let \( A \subseteq R \) be a closed subset of \( \Pi(I) \). By definition, if a meta-rotation \( \rho \in A \), then all predecessors of \( \rho \) in \( \Pi(I) \) also belong to \( A \). Hence, it is possible to eliminate all meta-rotations in \( A \) in some order consistent with the partial order $\leq$, starting from the student-optimal stable matching. By Lemma~\ref{lem:M-rho-is-stable}, each such elimination step results in another stable matching of $I$, and the final matching obtained after eliminating all meta-rotations in $A$ is stable.

Suppose $A_1$ and $A_2$ are two distinct closed subsets of \( \Pi(I) \). Since \( A_1 \neq A_2 \), there exists at least one meta-rotation \( \rho \) that belongs to one of the subsets and not the other. Furthermore, since no two meta-rotation contains the same set of student-project pairs by Lemma~\ref{lem:no-pair-in-two-metarotation}, we would obtain two different stable matchings of $I$ when we eliminate the meta-rotations in $A_1$ and $A_2$. Therefore, eliminating each closed subset results in a unique stable matching.

\medskip
We now prove the converse: that each stable matching \( M \in \mathcal{M} \) corresponds to a unique closed subset of \( \Pi(I) \). Let \( A \subseteq \Pi(I) \) denote the set of meta-rotations that are eliminated, starting from the student-optimal stable matching \( M_s \), in order to obtain \( M \). This set must be closed; that is, if some meta-rotation \( \rho_2 \in A \) and \( \rho_1 \leq \rho_2 \) in \( \Pi(I) \), then \( \rho_1 \) must have been eliminated before \( \rho_2 \) could be exposed, and hence \( \rho_1 \in A \). It follows that \( A \) contains all predecessors of its elements and is therefore a closed subset. 

Now, consider two different stable matchings \( M, M' \in \mathcal{M} \). Then there exists a pair \( (s_i, p_j) \in M \setminus M' \). We prove that the sets of eliminated meta-rotations that yield \( M \) and \( M' \) differ. First, suppose \( M \) is the student-optimal matching \( M_s \). In this case, no meta-rotation is eliminated to obtain \( M \), but \( (s_i, p_j) \) must have been removed during the construction of \( M' \) by eliminating some meta-rotation \( \rho \). Thus, \( \rho \) is eliminated in the construction of \( M' \), but not \( M \).

Suppose \( M \neq M_s \). If \( (s_i, p_j) \) does not belong to \( M_s \), then \( (s_i, p_j) \) was introduced to \( M \) by eliminating some meta-rotation \( \rho \). By Lemma~\ref{lem:no-pair-in-two-metarotation}, each pair appears in at most one meta-rotation. Hence, \( s_i \) was assigned to \( p_j \) in \( M \) through the elimination of exactly one meta-rotation, namely \( \rho \). Since \( (s_i, p_j) \in M \setminus M' \), \( \rho \) must have been eliminated in constructing \( M \), but not in \( M' \). If \( (s_i, p_j) \) belongs to \( M_s \), then no meta-rotation involving \( (s_i, p_j) \) was eliminated in the construction of \( M \), but \( (s_i, p_j) \) must have been removed in the construction of \( M' \) by eliminating some meta-rotation \( \rho \). Hence, the sets of eliminated meta-rotations for \( M \) and \( M' \) differ. Thus, each stable matching corresponds to a unique closed subset of \( \Pi(I) \).
\end{proof}

\section{Conclusion}
In this paper we introduced the concept of meta-rotations in {\sc spa-s}, generalising the notions of rotations and meta-rotations from one-to-one and many-to-many models. 
We established a one-to-one correspondence between the set of stable matchings in an instance and the family of closed subsets of its meta-rotation poset \( \Pi(M) \), providing a compact characterisation of all stable matchings. This result has direct algorithmic implications, similar to those established for {\sc sm} and {\sc hr}: it enables the enumeration and counting of all stable matchings in {\sc spa-s}, and supports the design of algorithms for computing optimal matchings under various objectives, such as egalitarian and minimum-cost solutions. It also provides a foundation for studying the structural properties and computational complexity of various types of stable matchings in {\sc spa-s}.

A promising direction for future work is to develop a polyhedral characterisation of the set of stable matchings, by identifying inequalities whose feasible region exactly describes all stable matchings and proving that the corresponding polytope is integral. Such a formulation would enable new linear programming techniques for solving optimisation problems involving stable matchings in {\sc spa-s}. It could also serve as a foundation for proving that the polytope describing strongly stable and super-stable matchings in the {\sc spa-s} setting with ties in preferences \cite{olaosebikan2022super,olaosebikan2020student} are integral, thereby extending known integrality results for related models such as the Hospital–Residents problem with ties \cite{kunysz2016characterisation,hu2021characterization}.

% ---- Bibliography ----
%
% BibTeX users should specify bibliography style 'splncs04'.
% References will then be sorted and formatted in the correct style.
%
\bibliographystyle{splncs04}
\bibliography{ref}

\begin{thebibliography}{10}
\providecommand{\url}[1]{\texttt{#1}}
\providecommand{\urlprefix}{URL }
\providecommand{\doi}[1]{https://doi.org/#1}

\bibitem{AIM2007}
Abraham, D.J., Irving, R.W., Manlove, D.F.: Two algorithms for the student-project allocation problem. Journal of discrete algorithms  \textbf{5}(1),  73--90 (2007)

\bibitem{ayegba2025structural}
Ayegba, P., Olaosebikan, S., Manlove, D.: Structural aspects of the student--project allocation problem. in submission to the Journal of Discrete Applied Mathematics in the reference  (2025)

\bibitem{BAM2007}
Bansal, V., Agrawal, A., Malhotra, V.S.: Polynomial time algorithm for an optimal stable assignment with multiple partners. Theoretical Computer Science  \textbf{379}(3),  317--328 (2007)

\bibitem{birkhoff1937rings}
Birkhoff, G.: Rings of sets. Duke Math. Journal  \textbf{3}(1),  443--454 (1937)

\bibitem{blair1988lattice}
Blair, C.: The lattice structure of the set of stable matchings with multiple partners. Mathematics of operations research  \textbf{13}(4),  619--628 (1988)

\bibitem{boehmer2025adapting}
Boehmer, N., Heeger, K.: Adapting stable matchings to forced and forbidden pairs. Journal of Computer and System Sciences  \textbf{147},  103579 (2025)

\bibitem{CMS2008}
Cheng, C., McDermid, E., Suzuki, I.: A unified approach to finding good stable matchings in the hospitals/residents setting. Theoretical Computer Science  \textbf{400}(1-3),  84--99 (2008)

\bibitem{cheng2023stable}
Cheng, C.T., Rosenbaum, W.: Stable matchings with restricted preferences: structure and complexity. ACM Transactions on Economics and Computation  \textbf{10}(3),  1--45 (2023)

\bibitem{eirinakis2012finding}
Eirinakis, P., Magos, D., Mourtos, I., Miliotis, P.: Finding all stable pairs and solutions to the many-to-many stable matching problem. INFORMS Journal on Computing  \textbf{24}(2),  245--259 (2012)

\bibitem{fleiner2003fixed}
Fleiner, T.: A fixed-point approach to stable matchings and some applications. Mathematics of Operations research  \textbf{28}(1),  103--126 (2003)

\bibitem{gale1962college}
Gale, D., Shapley, L.S.: College admissions and the stability of marriage. The American mathematical monthly  \textbf{69}(1),  9--15 (1962)

\bibitem{gangam2023stable}
Gangam, R.R., Mai, T., Raju, N., Vazirani, V.V.: The stable matching lattice under changed preferences, and associated algorithms. arXiv preprint arXiv:2304.02590  (2023)

\bibitem{gusfield1987three}
Gusfield, D.: Three fast algorithms for four problems in stable marriage. SIAM Journal on Computing  \textbf{16}(1),  111--128 (1987)

\bibitem{GI1989}
Gusfield, D., Irving, R.W.: The stable marriage problem: structure and algorithms. MIT press (1989)

\bibitem{HG2021}
Hu, C., Garg, V.K.: Characterization of super-stable matchings. In: Algorithms and Data Structures: 17th International Symposium, WADS 2021, Virtual Event, August 9--11, 2021, Proceedings 17. pp. 485--498. Springer (2021)

\bibitem{hu2021characterization}
Hu, C., Garg, V.K.: Characterization of super-stable matchings. In: Workshop on Algorithms and Data Structures. pp. 485--498. Springer (2021)

\bibitem{irving1987efficient}
Irving, R.W., Leather, P., Gusfield, D.: An efficient algorithm for the “optimal” stable marriage. Journal of the ACM (JACM)  \textbf{34}(3),  532--543 (1987)

\bibitem{iwama2008survey}
Iwama, K., Miyazaki, S.: A survey of the stable marriage problem and its variants. In: International conference on informatics education and research for knowledge-circulating society (ICKS 2008). pp. 131--136. IEEE (2008)

\bibitem{karzanov2025stable}
Karzanov, A.V.: Stable matchings, choice functions, and linear orders. Computational Mathematics and Mathematical Physics  \textbf{65}(1),  192--212 (2025)

\bibitem{knuth1997stable}
Knuth, D.E.: Stable marriage and its relation to other combinatorial problems: An introduction to the mathematical analysis of algorithms, vol.~10. American Mathematical Soc. (1997)

\bibitem{kunysz2016characterisation}
Kunysz, A., Paluch, K., Ghosal, P.: Characterisation of strongly stable matchings. In: Proceedings of the twenty-seventh annual ACM-SIAM symposium on Discrete algorithms. pp. 107--119. SIAM (2016)

\bibitem{manlove2013algorithmics}
Manlove, D.: Algorithmics of matching under preferences, vol.~2. World Scientific (2013)

\bibitem{manlove2002structure}
Manlove, D.F.: The structure of stable marriage with indifference. Discrete Applied Mathematics  \textbf{122}(1-3),  167--181 (2002)

\bibitem{manlove2015hospitals}
Manlove, D.F.: The hospitals/residents problem. In: Encyclopedia of algorithms, pp.~1--6. Springer (2015)

\bibitem{mcdermid2011structural}
McDermid, E.J.: A structural approach to matching problems with preferences. Ph.D. thesis, University of Glasgow (2011)

\bibitem{mcvitie1971stable}
McVitie, D.G., Wilson, L.B.: The stable marriage problem. Communications of the ACM  \textbf{14}(7),  486--490 (1971)

\bibitem{olaosebikan2020student}
Olaosebikan, S.: The student-project allocation problem: structure and algorithms. Ph.D. thesis, University of Glasgow (2020)

\bibitem{olaosebikan2022super}
Olaosebikan, S., Manlove, D.: Super-stability in the student-project allocation problem with ties. Journal of Combinatorial Optimization  \textbf{43}(5),  1203--1239 (2022)

\bibitem{S2005}
Scott, S.: A Study Of Stable Marriage Problems With Ties. Ph.D. thesis, University of Glasgow (2005)

\end{thebibliography}
\end{document}